\documentclass[a4paper,10pt]{article}
\pdfoutput=1 % if you are submitting a pdflatex (i.e. if you have
\usepackage{jheppub} % for details on the use of the package, please
\usepackage[export]{adjustbox}
\usepackage[utf8]{inputenc}
\usepackage{multirow}
\usepackage{bbold}
\usepackage[table]{xcolor}
\usepackage{slashed}
\usepackage{bm}
\usepackage{subfig}
\usepackage{float}
\usepackage{fancyvrb}
\usepackage{fvextra}
\usepackage{soul}
\usepackage{comment}
\usepackage{cancel} %Required for \cancel
\usepackage{csquotes} %Required for \enquote
\usepackage[title]{appendix}

\usepackage{amsmath,amssymb,amsthm}
\usepackage{mathtools}
\usepackage{enumitem}
\usepackage{hyperref}

\usepackage{subfig}
\usepackage{color,graphicx,slashed,hyperref}
\usepackage[utf8]{inputenc}
 \usepackage{verbatim}
\usepackage{setspace}
\usepackage{booktabs}
\usepackage{amstext} % for \text macro
\usepackage{array}   % for \newcolumntype macro
\newcolumntype{C}{>{$}c<{$}} % math-mode version of "l" column type
\newcolumntype{L}{>{$}l<{$}} % math-mode version of "l" column type

\newcommand{\Mpl}{M_{\rm Pl}}
\newcommand{\dd}{\partial}
\newcommand{\cS}{\mathcal S}

\theoremstyle{plain}
\newtheorem{theorem}{Theorem}
\newtheorem{lemma}{Lemma}
\newtheorem{proposition}{Proposition}[section]
\theoremstyle{definition}
\newtheorem{corollary}{Corollary}[section]
\newtheorem{remark}{Remark}

\title{Adiabatic Deformations of Black Hole Moduli: \\ I. Canonical Slices and Fredholm Solvability}
\author[a]{Karim Benakli,}
\emailAdd{kbenakli@lpthe.jussieu.fr}

\author[a]{Anna Chrysostomou}
\emailAdd{chrysostomou@lpthe.jussieu.fr}

\affiliation[a]{Laboratoire de Physique Th\'eorique et Hautes \'Energies - LPTHE, Sorbonne Universit\'e, CNRS, 4 Place Jussieu, 75005 Paris, France}

\abstract{
The leading adiabatic response of a static black hole to a slowly evolving asymptotic modulus separates into a geometric problem and a functional-analytic one: the former identifies the tangent direction on the manifold of static solutions actually followed by the black hole, and the latter determines whether the resulting deformation admits a unique solution. We show that both can be characterized by explicit scalar criteria. Differentiating a smooth family of static solutions defines an intrinsic covector on its tangent space. When the family is parameterized by the asymptotic modulus and the static horizon-mass parameter, the kernel of this covector identifies a canonical tangent direction that coincides with the one selected by a slowly varying exterior modulus. Promoting the static family to a slowly evolving instantaneous representative does not solve the exact field equations; the resulting discrepancy defines a lag field governed, after elimination of the metric perturbations, by a single reduced radial operator. 
A representation convention separates the exact dynamical horizon from that of the instantaneous representative. A local slice then removes the remaining tangential ambiguity and induces a finite-rank perturbation of the reduced operator. The Fredholm index is preserved, and, whenever the finite-rank update is nontrivial, the canonical tangent zero mode is displaced from the kernel of the reduced operator. Under the transversality and Fredholm hypotheses, existence and uniqueness of the completed leading-order adiabatic boundary-value problem reduce to a single scalar matching condition on the resulting kernel generator. The construction depends only on a smooth non-extremal static solution manifold and the Fredholm properties of the reduced operator. A companion paper specializes the framework to the magnetic GHS family.
}

\keywords{}

\begin{document}
\maketitle
%\flushbottom

%===================================================
%\newpage
\section{Introduction}
\label{sec:introduction}

Static black hole solutions can form smooth families parameterized by asymptotic moduli (such as scalar charges, dilaton values, or other continuous data fixed at infinity); Einstein--Maxwell--dilaton theories are examples thereof, where the scalar couples nontrivially to the gauge sector~\cite{Gibbons:1987ps,Garfinkle:1990qj}. Such couplings naturally admit continuous families of black hole solutions carrying nontrivial scalar profiles, lying outside the standard minimally coupled no-hair setting. When such a modulus varies slowly in time, the black hole cannot in general remain exactly static, and two distinct questions arise. The first is geometric: among the many possible instantaneous representatives of the exact black-hole solution within the static solution family, which representative is distinguished, at leading adiabatic order,
by the intrinsic geometry of the solution manifold? The second is functional-analytic: once this trajectory is identified, the black hole's leading-order adiabatic response is governed by a linear perturbation equation -- linear because only the leading order in the slow-roll parameter is kept -- whose reduced radial operator typically possesses a nontrivial kernel, inherited from the tangent directions of the static family itself, so that existence and uniqueness of the adiabatic deformation are not automatic. 

This paper answers both questions in turn for a general class of Einstein--Maxwell--dilaton static families, and in the process identifies the single scalar condition that determines whether the second question has a definite answer. It is organized around two complementary results: a geometric theorem identifying the canonical tangent direction on the static family, and a solvability theorem characterizing the existence and uniqueness of the corresponding adiabatic deformation. Note that this solvability theorem depends essentially on the geometric theorem, as the latter supplies the distinguished tangent direction whose fate under the sliced operator ultimately controls the solvability criterion. Thus, the existence and uniqueness of the leading-order adiabatic
boundary-value problem reduce entirely to the Fredholm solvability criterion.

%The static solutions of interest form a family $\cS$, parameterized by collective coordinates $\lambda^A$; each point of $\cS$ is itself an entire static field configuration $(\phi_0,\delta_0,m_0)(r;\lambda)$, so the tangent space $T_\lambda\cS$ at a point of $\cS$ consists of infinitesimal deformations of this configuration, obtained by differentiating the exact static field equations with respect to $\lambda^A$, and should not be confused with the tangent space of spacetime itself. 
The static solutions of interest form a family $\cS$, parameterized by collective coordinates $\lambda^A$. For each choice of $\lambda^A$, the corresponding point of $\cS$ is itself an entire static field configuration,
\begin{equation}
   \Psi_0(r;\lambda)
\equiv
\bigl(
\phi_0(r;\lambda),
\delta_0(r;\lambda),
m_0(r;\lambda)
\bigr) \;. 
\end{equation}

We write this compactly as $\Psi_0(r;\lambda)~=~(\phi_0,\delta_0,m_0)(r;\lambda)$. The tangent space $T_\lambda \cS$ therefore consists of infinitesimal deformations of these field profiles as the parameters $\lambda^A$ are varied. These tangent directions are obtained by differentiating the static solution with respect to the family parameters,
\begin{equation}
\partial_A\Psi_0 =
\bigl(
\partial_A\phi_0,
\partial_A\delta_0,
\partial_A m_0
\bigr) \;.    
\end{equation}

Note that $T_\lambda\mathcal S$ is the tangent space of the manifold of static solutions and should not be confused with the tangent space of spacetime. A generic element of $T_\lambda\cS$ need not lie in the homogeneous kernel of the radial operator obtained after reducing the linearized field equations: it typically sources a residual mass-integration constant associated with that tangent direction $A$. Collecting these residues defines an intrinsic linear functional on $T_\lambda\cS$, whose kernel 
\begin{equation}
\ker\mathcal C_\mu\subset T_\lambda\cS \;,
\end{equation}
is a coordinate-independent, canonically distinguished subspace. 

The choice of collective coordinates is crucial for identifying the
distinguished tangent direction selected by the intrinsic covector. We use the asymptotic modulus $\Phi$ together with the static horizon-mass parameter $M_{{\rm H},0}$ as local coordinates on the static solution manifold $\mathcal S$. The use of the horizon mass, rather than an asymptotic mass, is natural in the finite near-zone construction adopted throughout this paper, where no asymptotically flat region is included and no asymptotic mass need be defined.

Upon promotion of the static family, the coordinate
$M_{{\rm H},0}$ becomes the horizon-mass parameter
$M_{\rm H}^{\rm inst}(v)$ of the instantaneous static representative. This quantity should not be identified pointwise with the exact dynamical horizon mass $M_{\rm H}^{\rm exact}(v)$ of the full solution. The distinction between these two notions, and the corresponding lag-induced displacement of the exact horizon relative to the instantaneous representative, plays an essential role in the adiabatic construction developed later.

The horizon mass is singled out because the exact dynamical horizon
mass satisfies
\begin{equation}
 \dot M_{\rm H}^{\rm exact}
=
{\mathcal O}_{\rm ad}(\epsilon^2),   
\end{equation}
whereas generic near-zone quantities evolve already at first
adiabatic order. This makes $M_{{\rm H},0}$ the natural second
collective coordinate on the static solution manifold. Using the
coordinates $(\Phi,M_{{\rm H},0})$, we prove that the externally
driven tangent direction
\begin{equation}
 \left.
\partial_\Phi
\right|_{M_{{\rm H},0}},   
\end{equation}
namely the direction excited by a slowly varying exterior modulus,
belongs to the distinguished kernel introduced above. This follows
directly from the static identity
\begin{equation}
M_{{\rm H},0}
=
\frac{\Mpl^2}{2}\,
r_{{\rm H},0},   
\end{equation}
and is therefore a purely geometric property of the static solution
family, independent of any particular black hole solution.

The instantaneous-tracking configuration built from this tangent
direction does not itself solve the exact time-dependent field
equations: the resulting discrepancy defines a lag field, sourced at first adiabatic order, whose dynamics reduces -- after elimination of the metric perturbations -- to a single linear equation for a scalar radial function. The decomposition of the exact solution into an instantaneous representative and a lag field is not unique. After fixing the representation in the static horizon-mass direction, a single residual tangential ambiguity remains, corresponding to redistributing the solution along the canonical modulus direction. This ambiguity is removed by a local slice, which converts the remaining radial integration function into a boundary evaluation of the scalar lag field. The resulting sliced operator differs from the unsliced reduced operator by a finite-rank perturbation,
\begin{equation}
L_{\rm sl}
=
L
+
u\otimes\ell,    
\end{equation}
which plays the central role in the subsequent solvability analysis.

Finite-rank perturbations preserve the Fredholm index~\cite{Kato:1995}. In the present problem, this has a direct geometric consequence. The local slice removes the canonical tangent mode $\varphi_\Phi$ from the kernel of the reduced operator whenever the induced rank-one update is nontrivial. However, because the Fredholm index is unchanged, the one-dimensional kernel cannot simply disappear. Instead, it is relocated to a new generator $\psi_\star$ of the sliced operator $L_{\rm sl}$.

This relocated mode is not itself the physical adiabatic deformation. It spans the one-dimensional kernel of the sliced radial operator, so it represents the only remaining freedom once the local slice has separated the solution into an instantaneous static representative and a lag field. We denote by \(\psi_\star\) a nonzero generator of this kernel.

To recover the actual adiabatic deformation, this remaining freedom must still be fixed, and only the outer matching condition can determine its amplitude. We show that, once the local slice is transverse, existence and uniqueness of the leading-order adiabatic boundary-value problem reduce to the single scalar condition
\begin{equation}
\mathcal B_{\rm m}\psi_\star\neq0,
\end{equation}
where \(\mathcal B_{\rm m}\) is the bounded linear functional supplied by the outer matching problem. Equivalently, this condition states that the matching functional does not annihilate the remaining homogeneous mode, so that its amplitude is fixed uniquely by the exterior data. This Fredholm criterion is the main abstract result established in the present paper.

The response of a black hole to a slowly varying scalar background has been
studied from several complementary perspectives. Jacobson's construction of
a horizon-regular combination of scalar modes in advanced coordinates showed
the absence of gravitational memory for a linearly rolling modulus around
Schwarzschild~\cite{Jacobson:1999vr}. This result was subsequently used to
compute observable consequences, including dipole radiation from
unequal-mass binaries and constraints from the binary quasar
OJ287~\cite{Horbatsch:2011ye}. Related slow-roll constructions have also
been developed for black holes embedded in scalar-field cosmological
backgrounds, incorporating scalar accretion and its backreaction on the
geometry~\cite{Chadburn:2013mta}.

A separate line of work established no-scalar-hair results for static black
holes under appropriate assumptions on the scalar
theory~\cite{Bekenstein:1995un}. More recently, structural obstructions
based on shift symmetry, regularity, and conserved currents led to the
no-hair theorem for the galileon~\cite{Hui:2012qt}. The assumptions
underlying that theorem were subsequently shown not to cover
shift-symmetric scalar--tensor theories with a Gauss--Bonnet
coupling~\cite{Sotiriou:2013qea}, nor configurations in which the scalar
depends linearly on time while the geometry remains
static~\cite{Babichev:2013cya}; see Ref.~\cite{Volkov:2016ehx} for a
broader review of these and other hairy black-hole solutions.

These works address questions closely related to, but distinct from, the
one considered here. Existing no-hair theorems determine when static or
linearly time-dependent scalar configurations are obstructed, while
explicit rolling and slow-roll constructions describe particular
time-dependent black-hole solutions. The present work instead asks when a
slowly forced deformation of an entire static solution family exists and
is unique. The primary object is therefore the static solution manifold
itself, and the adiabatic problem is formulated as a deformation problem
on that manifold. Its existence and uniqueness are reduced to a single
explicit non-degeneracy condition. This viewpoint also underlies our
companion analysis of black-hole-induced moduli excursions in the
presence of a scalar potential~\cite{Benakli:2026zbm}, for which the
present construction provides the general geometric and
functional-analytic foundation.

Black-hole solution spaces have also been used as probes of moduli-space
geometry and related consistency constraints~\cite{Delgado:2022dkz},
while in the extremal setting the attractor mechanism identifies
distinguished critical points in moduli
space~\cite{Ferrara:1997tw}. The present construction addresses a
different question: it identifies the tangent direction selected by a
slowly time-dependent exterior modulus on a generally non-extremal static
solution manifold.

A distinct tradition, dating back to the moduli-space approximation for solitons~\cite{Manton:1981mp} and later applied to BPS and extremal black holes~\cite{Ferrell:1987gf,Gibbons:1986cq}, projects slow dynamics onto the tangent space of a static solution manifold, typically via an $L^2$ orthogonality condition fixed once a metric on that space is available. The construction developed here shares its starting point -- the static solution manifold and its tangent space -- but departs from this tradition in two respects: the slice is built locally, as a boundary condition at a finite matching radius, without presupposing an inner product on the solution space; and the resulting finite-rank modification of the reduced operator, together with the index-theoretic argument governing the fate of its kernel, is developed as an explicit functional-analytic construction rather than assumed to project cleanly onto geodesic motion.

The present work naturally separates into two layers. The first is
model dependent: starting from the Einstein--Maxwell--dilaton field
equations, the exact near-zone system is reduced to a single radial
scalar operator together with a finite-dimensional static solution
manifold. The second is independent of the detailed form of the
Einstein--Maxwell--dilaton action. Once the reduced operator and the
static solution manifold have been obtained, the canonical tangent
construction, the local slice, the finite-rank perturbation of the
reduced operator, and the resulting solvability theorem depend only on
their geometric and functional-analytic properties. They therefore
apply equally to any theory admitting a smooth non-extremal static
solution manifold together with a Fredholm reduced radial operator.

The purpose of the present paper is to isolate this general geometric
and functional structure. A companion paper
\cite{Benakli:2026PaperII} specializes the construction to the magnetic
Garfinkle--Horowitz--Strominger (GHS) family, where all geometric,
analytic, and matching ingredients can be evaluated explicitly.

%%%%%%%%%%%%%%%%%%%%%%%%%%%%%%%%%%%%%%%%%%%%%%%%%%%%%%%%%%%%%%%%%%%%%%%%

%%%%%%%%%%%%%%%%%%%%%%%%%%%%%%%%%%%%%%%%%%%%%%%%%%%%%%%%%%%%%%%%%%%%%%%%

\setcounter{section}{1}
\section{Exact reduction to the near-zone problem}
\label{sec:reduction}

We begin by formulating the problem in Einstein--Maxwell--dilaton
theory. The equations derived in this section are exact and do not rely
on an adiabatic expansion. Their purpose is to identify the finite
near-zone system from which the reduced scalar operator will be
constructed in Section~\ref{sec:operator_and_geometry}.

%%%%%%%%%%%%%%%%%%%%%%%%%%%%%%%%%%%%%%%%%%%%%%%%%%%%%%%%%%%%%

\subsection{Theory and exact field equations}
\label{subsec:theory}

We use metric signature $(-,+,+,+)$. The Ricci scalar and Einstein
tensor are denoted by $R$ and $G_{\mu\nu}$, respectively,
$\nabla_\mu$ is the spacetime covariant derivative, and
$\Box\equiv g^{\mu\nu}\nabla_\mu\nabla_\nu$. The reduced Planck mass
$\Mpl$ is fixed by the normalization of the Einstein--Hilbert term
below.

A comma subscript denotes a spacetime partial derivative, as in
$\phi_{,r}$ and $\phi_{,v}$, while a prime denotes differentiation
with respect to the scalar argument, as in $V'(\phi)$ and
$B'(\phi)$. A trailing $|_{\rm H}$ denotes evaluation on the future
outer marginal horizon defined in
Eq.~\eqref{eq:horizon-definition}. The Maxwell field strength is
denoted by $F_{\mu\nu}$, with
\begin{equation}
F^2\equiv F_{\mu\nu}F^{\mu\nu},    
\end{equation}
whereas $\mathcal F(v,r)$ is reserved for the metric function
introduced in Eq.~\eqref{eq:metric}.\footnote{The angular coordinates
are $(\theta,\varphi)$, with
$d\Omega_2^2\equiv d\theta^2+\sin^2\theta\,d\varphi^2$. The azimuthal
angle $\varphi$ should not be confused with the scalar field $\phi$.}

The dynamical scalar field is $\phi(v,r)$. The prescribed modulus
characterizing the slowly varying exterior is denoted by $\Phi(v)$.
It is not an additional dynamical field of the near-zone problem, but
an externally specified function of the advanced time normalized at
the matching surface. The relation between $\Phi(v)$ and the
near-zone scalar field at the finite matching radius is supplied by
the exterior matching problem; in particular, one should not assume
a priori that $\phi(v,r_{\rm m})=\Phi(v)$.

We consider Einstein--Maxwell--dilaton theory~\cite{Gibbons:1987ps},
\begin{equation}
S
=
\int d^4x\,\sqrt{-g}
\left[
\frac{\Mpl^2}{2}R
-\frac12(\nabla\phi)^2
-V(\phi)
-\frac14 B(\phi)F_{\mu\nu}F^{\mu\nu}
\right].
\label{eq:action}
\end{equation}
Here $V(\phi)$ is the scalar potential and $B(\phi)$ is a positive
gauge kinetic function. Unless stated otherwise, both functions are
kept arbitrary. We assume $V,B\in C^2$ on the range of scalar-field
values attained by the background family. The static background
profiles are likewise assumed sufficiently regular for the
differentiations and linearizations performed below; any stronger
regularity required by the functional-analytic construction will be
stated explicitly when needed.

With these conventions, the scalar field $\phi$ has mass dimension
one in four spacetime dimensions, while $B(\phi)$ is dimensionless.
The positivity assumption
\begin{equation}
B(\phi)>0
\label{eq:B-positive}
\end{equation}
ensures that the Maxwell field has the standard sign kinetic term.
No particular functional form of $B$ is required in the present
paper.

Variation of Eq.~\eqref{eq:action} gives
\begin{align}
\Mpl^2G_{\mu\nu}
&=
T_{\mu\nu}^{(\phi)}+T_{\mu\nu}^{(F)},
\label{eq:einstein-general}
\\
T_{\mu\nu}^{(\phi)}
&=
\nabla_\mu\phi\,\nabla_\nu\phi
-\frac12 g_{\mu\nu}(\nabla\phi)^2
-g_{\mu\nu}V(\phi),
\label{eq:scalar-stress}
\\
T_{\mu\nu}^{(F)}
&=
B(\phi)
\left(
F_{\mu\rho}F_\nu{}^\rho
-\frac14 g_{\mu\nu}F^2
\right),
\label{eq:gauge-stress}
\\
\nabla_\mu\!\left[B(\phi)F^{\mu\nu}\right]
&=0,
\label{eq:maxwell}
\\
\Box\phi
&=
V'(\phi)+\frac14 B'(\phi)F^2.
\label{eq:scalar-general}
\end{align}

We work throughout in the magnetic sector, for which
\begin{equation}
F
=
Q_m\sin\theta\,d\theta\wedge d\varphi,
\qquad
F_{\theta\varphi}=Q_m\sin\theta,
\label{eq:magnetic-ansatz}
\end{equation}
so that
\begin{equation}
F_{\mu\nu}F^{\mu\nu}
=
\frac{2Q_m^2}{r^4},
\label{eq:magnetic-invariant}
\end{equation}
and the scalar source generated by the gauge field is proportional to
$B'(\phi)$. For this ansatz, the Maxwell equation Eq.~\eqref{eq:maxwell} is
identically satisfied. In the absence of magnetic sources, the Bianchi
identity $dF=0$ implies
\[
\partial_vQ_m=\partial_rQ_m=0,
\]
so that $Q_m$ is an exact constant, rather than merely an
adiabatically conserved quantity. The electrically dual description
may be formulated in terms of the dual field strength and the
reciprocal gauge kinetic function $B(\phi)^{-1}$, but the
corresponding electric and dyonic systems are not analyzed here. No
use of this duality is made in the derivation below.

%%%%%%%%%%%%%%%%%%%%%%%%%%%%%%%%%%%%%%%%%%%%%%%%%%%%%%%%%%%%%

\subsection{Exact near-zone reduction}
\label{subsec:nearzone}

We work in the advanced Eddington--Finkelstein gauge
\begin{equation}
ds^2
=
-e^{2\delta(v,r)}\mathcal F(v,r)\,dv^2
+2e^{\delta(v,r)}\,dv\,dr
+r^2d\Omega_2^2,
\qquad
\mathcal F(v,r)
=
1-\frac{2m(v,r)}{\Mpl^2r}.
\label{eq:metric}
\end{equation}
Here $r$ is the exact areal radius, $v$ is an advanced-time
coordinate, $\delta(v,r)$ is a dimensionless lapse-type field, and
$m(v,r)$ is the Misner--Sharp quasi-local mass
function~\cite{Misner:1964je}. At this stage, both $\delta$ and $m$
are exact dynamical fields and are not assumed to depend on a finite
set of collective coordinates.

The chart, Eq.~\eqref{eq:metric}, is horizon-penetrating and remains regular
at a smooth future horizon provided $\delta$, $m$, and the matter
fields remain regular there. It retains the residual freedom
\begin{equation}
v\longrightarrow \widetilde v(v),
\end{equation}
under which
\begin{equation}
e^{\widetilde\delta(\widetilde v,r)}
=
e^{\delta(v,r)}
\frac{dv}{d\widetilde v}.
\label{eq:v-reparam}
\end{equation}
For the finite near-zone problem, we fix this freedom by imposing
\begin{equation}
\delta(v,r_{\rm m})=0
\label{eq:matching-normalization}
\end{equation}
at a fixed areal matching radius $r_{\rm m}$, taken to be independent
of $v$. Thus the normalization of the advanced-time coordinate is
fixed at the matching surface. We use this matching normalization
throughout Sections~\ref{sec:reduction}--\ref{sec:transition_companion}.

\begin{remark}[Normalization of static reference solutions]
Static reference solutions, denoted by a subscript $0$, are often
presented in the asymptotic normalization $\delta_0(\infty)=0$.
Passing from that normalization to
$\delta_0(r_{\rm m})=0$ requires a constant reparametrization of the
advanced time. Under such a transformation, the reduced radial
operator is multiplied by a nonzero constant, so its homogeneous
kernel is unchanged. Quantities that depend on the normalization of
the operator, its leading coefficient, or the time derivative must,
however, be converted explicitly. This conversion is required
whenever a closed-form static expression obtained in the asymptotic
normalization is used in the finite-$r_{\rm m}$ construction.
\end{remark}

The future outer marginal horizon is the marginally trapped tube
whose areal radius $r_{\rm H}(v)$ is defined by
\begin{equation}
\mathcal F\bigl(v,r_{\rm H}(v)\bigr)=0.
\label{eq:horizon-definition}
\end{equation}
Throughout this paper, the \emph{near zone} is the finite radial
region
\begin{equation}
r_{\rm H}(v)\leq r\leq r_{\rm m},
\qquad
r_{\rm H}\lesssim r_{\rm m}\ll L_{\rm cos},
\label{eq:near-zone-domain}
\end{equation}
where, in units with $c=1$, $L_{\rm cos}$ denotes the shortest
characteristic length scale associated with the evolution of the
exterior cosmological solution.

The condition $r_{\rm m}\ll L_{\rm cos}$ does not require the modulus
to undergo only a small total evolution over cosmological times. It
requires only that its variation across the finite near zone be small.
In particular, over a near-zone light-crossing time of order
$r_{\rm m}$,
\begin{equation}
\Delta\Phi
\sim
\dot\Phi\,r_{\rm m},
\qquad
\frac{|\Delta\Phi|}{\Mpl}\ll1.
\label{eq:near-zone-modulus-variation}
\end{equation}
The slowly varying exterior can therefore be represented locally, at
each advanced time, by a single prescribed modulus $\Phi(v)$ together
with its slowly varying derivatives.

Within the region defined in Eq.~\eqref{eq:near-zone-domain}, the black hole geometry
and all radial gradients are treated exactly. Cosmological evolution
enters only through slowly varying data supplied at $r=r_{\rm m}$.
The near-zone domain does not include spatial infinity. Consequently,
no asymptotic mass, such as the Arnowitt--Deser--Misner
mass~\cite{Arnowitt:1960es}, is defined as part of the near-zone
boundary-value problem. A global static solution may nevertheless be
used to generate the reference family whose profiles are subsequently
restricted to the finite interval $[r_{\rm H},r_{\rm m}]$.

Substituting Eq.~\eqref{eq:metric} and the magnetic
ansatz, Eq.~\eqref{eq:magnetic-ansatz}, into the field equations gives
\begin{equation}
\left\{
\begin{aligned}
\delta_{,r}
&=
\frac{r}{2\Mpl^2}\phi_{,r}^2,
\\[2mm]
m_{,r}
&=
\frac{r^2}{4}\mathcal F\,\phi_{,r}^2
+\frac{r^2}{2}V(\phi)
+\frac{B(\phi)Q_m^2}{4r^2},
\\[2mm]
m_{,v}
&=
\frac{r^2}{2}
\left[
e^{-\delta}\phi_{,v}^2
+\mathcal F\,\phi_{,v}\phi_{,r}
\right],
\\[2mm]
0
&=
\dd_v\!\left(r^2\phi_{,r}\right)
+\dd_r\!\left[
r^2\phi_{,v}
+e^\delta r^2\mathcal F\,\phi_{,r}
\right]
-e^\delta r^2\mathcal S(\phi,r),
\\[2mm]
\mathcal F
&=
1-\frac{2m}{\Mpl^2r},
\end{aligned}
\right.
\label{eq:system}
\end{equation}
where
\begin{equation}
\mathcal S(\phi,r)
\equiv
V'(\phi)+\frac{Q_m^2}{2r^4}B'(\phi).
\label{eq:scalar-source}
\end{equation}
Eq.~\eqref{eq:system} is exact. No adiabatic approximation and no
instantaneous-static ansatz have been used in its derivation. The
$\theta\theta$ Einstein equation is not independent: once the
equations displayed in Eq.~\eqref{eq:system} and the scalar equation
are satisfied, it follows from the contracted Bianchi identity and
serves as a consistency check.

A slowly varying cosmological scalar carries kinetic stress-energy of
order
\[
T^{(\phi)}_{vv}\sim\dot\Phi^2,
\]
while a nonzero potential contributes
\[
T^{(V)}_{\mu\nu}=-V(\Phi)g_{\mu\nu}.
\]
The exterior geometry is therefore not assumed to be exactly
asymptotically flat. At $r=r_{\rm m}$, the near-zone fields are
matched to a local expansion of the slowly varying exterior solution.
The prescribed quantity $\Phi(v)$ labels that exterior solution; it
need not coincide exactly with $\phi(v,r_{\rm m})$, since their
relation depends on the matching functional.

The characteristic exterior scale may be estimated as
\begin{equation}
L_{\rm cos}
\sim
\min\!\left(
T_{\dot\Phi},
L_V,
L_H
\right),
\qquad
T_{\dot\Phi}
\equiv
\frac{\Mpl}{|\dot\Phi|},
\qquad
L_V
\equiv
|V''(\Phi)|^{-1/2},
\qquad
L_H
\equiv
H^{-1},
\label{eq:Lcos}
\end{equation}
where $T_{\dot\Phi}$ is the canonical field-variation timescale,
$L_V$ is the local length scale associated with the curvature of the
potential, and $L_H$ is the Hubble radius of the exterior
cosmology.\footnote{One may also introduce the fractional variation
timescale $|\Phi/\dot\Phi|$, but it should not be identified with
$T_{\dot\Phi}$. They are parametrically comparable only when
$|\Phi|\sim\Mpl$, which is not assumed here.} Eq.~\eqref{eq:Lcos}
is an order-of-magnitude characterization of the shortest relevant
exterior scale, not an additional field equation.

The restriction to finite $r_{\rm m}$, rather than a global limit
$r\rightarrow\infty$, is part of the definition of the construction.
Any formal limit $r_{\rm m}\rightarrow\infty$ must therefore be
understood as a limit of expressions obtained from the finite-radius
problem, and its physical validity must be assessed separately.

We assign one power of adiabatic order to each derivative with
respect to the prescribed slow time dependence, while radial
derivatives are kept exact:
\begin{equation}
\dd_v={\mathcal O}_{\rm ad}(\epsilon),
\qquad
\dd_r={\mathcal O}_{\rm ad}(1).
\label{eq:adiabatic-counting-derivatives}
\end{equation}
Several dimensionless adiabatic parameters may be formed from the
ratio of the horizon scale to the exterior timescales in
Eq.~\eqref{eq:Lcos}. For example,
\begin{equation}
\epsilon_\Phi
\equiv
\frac{r_{\rm H}|\dot\Phi|}{\Mpl}
=
\frac{r_{\rm H}}{T_{\dot\Phi}}.
\label{eq:epsilon-Phi}
\end{equation}
These parameters need not be equal. In what follows, $\epsilon$
denotes the largest relevant adiabatic ratio. The notation
${\mathcal O}_{\rm ad}(\epsilon^n)$ refers to adiabatic order and
should not, by itself, be interpreted as a dimensionless estimate of
the quantity to which it is attached.

The exact mass-balance equation in Eq.~\eqref{eq:system} exhibits a
structural distinction between the bulk of the near zone and the
horizon. Away from the horizon, the mixed term
\begin{equation}
 \mathcal F\,\phi_{,v}\phi_{,r}   
\end{equation}
is generically of first adiabatic order, whereas
\begin{equation}
 e^{-\delta}\phi_{,v}^2   
\end{equation}
starts at second order. At the horizon,
$\mathcal F|_{\rm H}=0$ eliminates the mixed term exactly, leaving
\begin{equation}
m_{,v}\big|_{\rm H}
=
\frac{r_{\rm H}^2}{2}
e^{-\delta_{\rm H}}
\phi_{,v}^2\big|_{\rm H}
=
{\mathcal O}_{\rm ad}(\epsilon^2).
\label{eq:horizon-local-flux-order}
\end{equation}
Correspondingly,
\begin{equation}
\frac{r_{\rm H}\dot\Phi}{\Mpl}
=
{\mathcal O}_{\rm ad}(\epsilon),
\label{eq:Phi-counting}
\end{equation}
whereas the exact horizon-growth law derived below implies
\begin{equation}
\dot r_{\rm H}
=
{\mathcal O}_{\rm ad}(\epsilon^2).
\label{eq:counting}
\end{equation}

This hierarchy assumes that the horizon remains uniformly
non-degenerate on the adiabatic scale. In terms of the
normalization-dependent instantaneous surface gravity introduced in
Eq.~\eqref{eq:surface_gravity}, this requires
\begin{equation}
\kappa_{\rm H}r_{\rm H}\gg\epsilon.
\label{eq:nonextremal-adiabatic-condition}
\end{equation}
The ultra-near-extremal regime, in which
$\kappa_{\rm H}r_{\rm H}$ becomes comparable to $\epsilon$, lies
outside the scope of the present expansion.

%%%%%%%%%%%%%%%%%%%%%%%%%%%%%%%%%%%%%%%%%%%%%%%%%%%%%%%%%%%%%

\subsection{Exact horizon identities}
\label{subsec:horizon}

The future outer marginal horizon defined by
Eq.~\eqref{eq:horizon-definition} is the marginally trapped tube
$r=r_{\rm H}(v)$. A subscript ${\rm H}$ denotes evaluation on this
tube. Differentiating
\begin{equation}
\mathcal F\bigl(v,r_{\rm H}(v)\bigr)=0    
\end{equation}
with respect to $v$ gives
\begin{equation}
\mathcal F_{,v}\big|_{\rm H}
+
\dot r_{\rm H}\,
\mathcal F_{,r}\big|_{\rm H}
=
0.
\label{eq:differentiated-horizon-condition}
\end{equation}
Using
\[
\mathcal F_{,v}
=
-\frac{2m_{,v}}{\Mpl^2r}
\]
and the horizon value of the exact mass-balance equation,
Eq.~\eqref{eq:horizon-local-flux-order}, one obtains
\begin{equation}
\dot r_{\rm H}
=
\frac{
r_{\rm H}e^{-\delta_{\rm H}}
\phi_{,v}^2|_{\rm H}
}{
\Mpl^2\mathcal F_{,r}|_{\rm H}
}.
\label{eq:113}
\end{equation}
This identity is exact wherever
$\mathcal F_{,r}|_{\rm H}\neq0$.

With the advanced-time normalization fixed by
Eq.~\eqref{eq:matching-normalization}, we define the instantaneous
surface gravity parameter
\begin{equation}
\kappa_{\rm H}
\equiv
\frac12
e^{\delta_{\rm H}}
\mathcal F_{,r}\big|_{\rm H}.
\label{eq:surface_gravity}
\end{equation}
For a static solution, this reduces to the usual Killing surface
gravity in the same normalization of the time coordinate. For a
dynamical marginal horizon, Eq.~\eqref{eq:surface_gravity} should be
understood as the corresponding instantaneous, normalization-dependent
generalization; no claim of a unique notion of dynamical surface
gravity is required in the present analysis.

For a non-degenerate future outer marginal horizon,
\begin{equation}
\mathcal F_{,r}|_{\rm H}>0,
\qquad
\kappa_{\rm H}>0.
\label{eq:outer-horizon-condition}
\end{equation}
Eq.~\eqref{eq:113} then implies directly
\begin{equation}
\dot r_{\rm H}\geq0.
\label{eq:horizon-monotonicity}
\end{equation}
Thus, the marginal-horizon radius is non-decreasing, with equality
only when $\phi_{,v}|_{\rm H}=0$. This monotonicity follows from the
exact field equations and the future-outer condition; it does not
require the adiabatic approximation.

The inverse instantaneous surface gravity scale provides the natural
local black hole timescale associated with a uniformly
non-degenerate horizon,
\begin{equation}
\tau_{\rm BH}\sim\kappa_{\rm H}^{-1},
\label{eq:BH-timescale}
\end{equation}
while the slow exterior forcing varies on a characteristic timescale
\begin{equation}
T_{\rm ad}\sim\frac{r_{\rm H}}{\epsilon}.
\label{eq:adiabatic-timescale}
\end{equation}
The adiabatic regime requires a separation of scales,
\begin{equation}
\tau_{\rm BH}\ll T_{\rm ad},
\label{eq:timescale-separation}
\end{equation}
or equivalently
\begin{equation}
\kappa_{\rm H}r_{\rm H}\gg\epsilon.
\label{eq:timescale-separation-kappa}
\end{equation}
This condition expresses the assumption that the local black hole
geometry can adjust quasistatically to the slowly varying exterior.
It is an adiabatic validity condition, not an ingredient in the exact
monotonicity statement, Eq.~\eqref{eq:horizon-monotonicity}.

The defining equation of the marginal horizon gives
\begin{equation}
2m\bigl(v,r_{\rm H}(v)\bigr)
=
\Mpl^2r_{\rm H}(v)
\label{eq:horizon-mass-identity-pre}
\end{equation}
at every advanced time. Defining the exact horizon mass parameter by
\begin{equation}
M_{\rm H}^{\rm exact}(v)
\equiv
m\bigl(v,r_{\rm H}^{\rm exact}(v)\bigr),
\label{eq:horizon-mass-definition}
\end{equation}
where $r_{\rm H}^{\rm exact}(v)$ is the marginal-horizon radius defined
by Eq.~\eqref{eq:horizon-definition}, we obtain the exact kinematic
identity
\begin{equation}
M_{\rm H}^{\rm exact}(v)
=
\frac{\Mpl^2}{2}\,
r_{\rm H}^{\rm exact}(v).
\label{eq:MH-def}
\end{equation}
This relation holds for any spherically symmetric configuration
admitting a marginal horizon of the form defined above. It does not
rely on staticity and, apart from the definition of $\mathcal F$, does
not require the remaining field equations.

We refer to $M_{\rm H}^{\rm exact}$ as the \emph{exact horizon mass
parameter}, reserving the unqualified term ``mass'' for the quasi-local
function $m(v,r)$. This terminology also distinguishes it from the
horizon-mass parameter that will later be used to label an
instantaneous static representative. The latter is a collective
coordinate on the static solution manifold and need not coincide
pointwise with $M_{\rm H}^{\rm exact}$ in the time-dependent problem.

Combining Eq.~\eqref{eq:MH-def} with
Eq.~\eqref{eq:counting} gives
\begin{equation}
\dot M_{\rm H}^{\rm exact}
=
\frac{\Mpl^2}{2}
\dot r_{\rm H}^{\rm exact}
=
{\mathcal O}_{\rm ad}
\left(
\epsilon^2
\frac{M_{\rm H}^{\rm exact}}
{r_{\rm H}^{\rm exact}}
\right).
\label{eq:MH-counting}
\end{equation}
Thus, the exact horizon mass is protected against first-order evolution
by the quadratic character of the horizon flux. This property will
motivate, but should not be confused with, the choice of a slowly
varying horizon-mass coordinate on the static solution manifold in
Sections~\ref{sec:operator_and_geometry} and~\ref{sec:tracking}. The
relation between that instantaneous collective coordinate and
$M_{\rm H}^{\rm exact}$ is analyzed only after the lag fields have
been introduced.

Regularity at a smooth non-degenerate future horizon in the advanced
chart requires
\begin{equation}
 \phi_{\rm H},
\qquad
\phi_{,v}|_{\rm H},
\qquad
\phi_{,r}|_{\rm H},
\qquad
\delta_{\rm H}   
\end{equation}
to remain finite. Neither $\phi_{,v}|_{\rm H}=0$ nor
$\phi_{,r}|_{\rm H}=0$ is imposed.

Evaluating the exact scalar equation in
Eq.~\eqref{eq:system} at $\mathcal F=0$ gives
\begin{equation}
\mathcal F_{,r}|_{\rm H}\,
\phi_{,r}|_{\rm H}
=
\mathcal S_{\rm H}
-2e^{-\delta_{\rm H}}\phi_{,vr}|_{\rm H}
-\frac{2e^{-\delta_{\rm H}}}{r_{\rm H}^{\rm exact}}
\phi_{,v}|_{\rm H},
\label{eq:131}
\end{equation}
where
\begin{equation}
\mathcal S_{\rm H}
\equiv
\mathcal S(\phi,r)
\big|_{r=r_{\rm H}^{\rm exact}},
\qquad
\phi_{,vr}
\equiv
\frac{\partial^2\phi}{\partial v\,\partial r}.
\end{equation}
In the static limit, Eq.~\eqref{eq:131} reduces to
\begin{equation}
\mathcal F_{,r}|_{\rm H}\,
\phi_{,r}|_{\rm H}
=
V'(\phi_{\rm H})
+
\frac{Q_m^2}
{2\bigl(r_{\rm H}^{\rm exact}\bigr)^4}
B'(\phi_{\rm H}).
\label{eq:static-horizon-relation}
\end{equation}
In the dynamical problem it receives genuine contributions involving
$\phi_{,v}$ and $\phi_{,vr}$. These terms are not reproduced by merely
promoting the parameters of a static solution to slowly varying
functions of $v$.

At the linearized level, Eq.~\eqref{eq:131} determines the relation
between the horizon value and the first radial derivative of a regular
scalar perturbation. It therefore supplies the local
horizon-regularity condition used in defining the domain of the
reduced radial operator in
Section~\ref{sec:operator_and_geometry}.

%%%%%%%%%%%%%%%%%%%%%%%%%%%%%%%%%%%%%%%%%%%%%%%%%%%%%%%%%%%%%%%%%%%%%%%%%%%%

%%%%%%%%%%%%%%%%%%%%%%%%%%%%%%%%%%%%%%%%%%%%%%%%%%%%%%%%%%%%%%%%%%%%%%%%%%%%

%%%%%%%%%%%%%%%%%%%%%%%%%%%%%%%%%%%%%%%%%%%%%%%%%%%%%%%%%%%%%%%%%%%%%%%%%%%%

\setcounter{section}{2}
\section{The reduced scalar operator and canonical tangent geometry}
\label{sec:operator_and_geometry}

This section develops the two ingredients required for the subsequent
adiabatic construction. We first reduce the coupled linearized
Einstein--Maxwell--dilaton equations to a single local radial equation
for the scalar perturbation. We then study the action of the resulting
operator on tangent vectors to the manifold of static solutions and
identify the unique tangent direction generated by variations of the
static modulus label at fixed static horizon-mass parameter.
Throughout this section, all horizon quantities refer to the static
background. For a solution labeled by $\lambda\in\mathcal S$, its
horizon radius and horizon-mass parameter are denoted by
\begin{equation}
r_{{\rm H},0}(\lambda),
\qquad
M_{{\rm H},0}(\lambda)
\equiv
m_0\bigl(r_{{\rm H},0}(\lambda);\lambda\bigr)
=
\frac{\Mpl^2}{2}r_{{\rm H},0}(\lambda).
\label{eq:static-horizon-notation}
\end{equation}
After parameter promotion in Section~\ref{sec:tracking}, these become
$r_{\rm H}^{\rm inst}(v)$ and $M_{\rm H}^{\rm inst}(v)$. They should
not be confused with the exact dynamical quantities
$r_{\rm H}^{\rm exact}(v)$ and $M_{\rm H}^{\rm exact}(v)$ introduced
in Section~\ref{sec:reduction}.
%%%%%%%%%%%%%%%%%%%%%%%%%%%%%%%%%%%%%%%%%%%%%%%%%%%%%%%%%%%%%
\subsection{Reduction to a single scalar equation}
\label{subsec:scalar_reduction}
The Einstein equations form a first-order constraint system for the
metric perturbations. Once the scalar perturbation and the residual
integration data are specified, the metric perturbations are fixed,
while the scalar equation supplies the remaining evolution equation.
We therefore eliminate the metric variables $(\xi,\mu)$ and derive a
closed equation for the scalar perturbation $\eta$.
Let $\lambda_\star\in\mathcal S$ be a fixed point on the static
solution manifold. We write
\begin{equation}
\phi(v,r)
=
\phi_0(r)+\eta(v,r),
\qquad
\delta(v,r)
=
\delta_0(r)+\xi(v,r),
\qquad
m(v,r)
=
m_0(r)+\mu(v,r),
\label{eq:linear_perturbations}
\end{equation}
and retain terms only to first order in $(\eta,\xi,\mu)$. The
corresponding perturbation of the metric function is
\begin{equation}
\Delta\mathcal F(v,r)
=
-\frac{2\mu(v,r)}{\Mpl^2r}.
\label{eq:delta_F_linear}
\end{equation}
Here and below,
\begin{equation}
V_0^{(n)}(r)
\equiv
V^{(n)}\!\bigl(\phi_0(r)\bigr),
\qquad
B_0^{(n)}(r)
\equiv
B^{(n)}\!\bigl(\phi_0(r)\bigr)
\end{equation}
denote derivatives with respect to the scalar argument, evaluated on
the static background.
All background quantities in this subsection are evaluated at the
fixed point $\lambda_\star$. In particular, the background horizon is
the fixed radius
\begin{equation}
r_{{\rm H},0}
\equiv
r_{{\rm H},0}(\lambda_\star),
\qquad
\mathcal F_0(r_{{\rm H},0})=0.
\label{eq:fixed-static-horizon}
\end{equation}
The promotion of the static parameters to slowly varying collective
coordinates is performed only after the frozen-background operator has
been constructed. The reduction is nevertheless local on
$\mathcal S$ and applies at every smooth, non-extremal point of the
static family.
%%%%%%%%%%%%%%%%%%%%%%%%%%%%%%%%%%%%%%%%%%%%%%%%%%%%%%%%%%%%%
\subsubsection{Elimination of \texorpdfstring{$\xi$}{xi}}
The linearized radial Einstein constraint is
\begin{equation}
\xi_{,r}
=
\frac{r}{\Mpl^2}\phi_0'\eta_{,r}.
\label{eq:part3_xi_constraint}
\end{equation}
Together with the linearized matching normalization
\begin{equation}
\xi(v,r_{\rm m})=0,
\end{equation}
this gives
\begin{equation}
\xi(v,r)
=
\frac{1}{\Mpl^2}
\int_{r_{\rm m}}^r
d\bar r\,
\bar r\,\phi_0'(\bar r)\,
\eta_{,\bar r}(v,\bar r).
\label{eq:xi_integral}
\end{equation}
Thus $\xi$ is uniquely determined by $\eta$. Although
Eq.~\eqref{eq:xi_integral} is nonlocal in the radial coordinate, this
apparent nonlocality cancels from the reduced scalar equation. The
integral itself therefore never needs to be evaluated.
%%%%%%%%%%%%%%%%%%%%%%%%%%%%%%%%%%%%%%%%%%%%%%%%%%%%%%%%%%%%%
\subsubsection{Compatibility and elimination of
\texorpdfstring{$\mu$}{mu}}
\label{sec:compatibility}
For convenience, define
\begin{equation}
f_\delta(r)
\equiv
\frac{r\phi_0'^2}{2\Mpl^2},
\qquad
f_V(r)
\equiv
\frac{r^2}{2}V_0'
+
\frac{Q_m^2}{4r^2}B_0',
\qquad
f_\phi(r)
\equiv
\frac{r^2}{2}\mathcal F_0\phi_0'.
\label{eq:abc_definitions}
\end{equation}
Linearizing the radial and temporal mass equations in
Eq.~\eqref{eq:system} about the fixed static background gives
\begin{align}
\mu_{,r}
&=
-f_\delta\mu
+
f_V\eta
+
f_\phi\eta_{,r},
\label{eq:mass_radial_abc}
\\
\mu_{,v}
&=
f_\phi\eta_{,v}.
\label{eq:mass_temporal_abc}
\end{align}
A complete derivation is given in Appendix~\ref{app:linearization}.
Before solving these equations, one must verify their mutual
compatibility. Differentiating Eq.~\eqref{eq:mass_radial_abc} with
respect to $v$ and Eq.~\eqref{eq:mass_temporal_abc} with respect to
$r$ gives
\begin{align}
\partial_v\mu_{,r}
&=
\bigl(f_V-f_\delta f_\phi\bigr)\eta_{,v}
+
f_\phi\eta_{,vr},
\\
\partial_r\mu_{,v}
&=
f_\phi'\eta_{,v}
+
f_\phi\eta_{,vr}.
\end{align}
Hence the integrability condition is
\begin{equation}
\bigl(
f_V-f_\delta f_\phi-f_\phi'
\bigr)\eta_{,v}
=
0.
\label{eq:mass_integrability_pre}
\end{equation}
The coefficient in Eq.~\eqref{eq:mass_integrability_pre} vanishes
identically by virtue of the static scalar equation,
\begin{equation}
\partial_r
\left(
e^{\delta_0}r^2\mathcal F_0\phi_0'
\right)
=
e^{\delta_0}r^2
\left(
V_0'
+
\frac{Q_m^2}{2r^4}B_0'
\right).
\label{eq:static_scalar_part3}
\end{equation}
Indeed, using
\begin{equation}
\delta_0'
=
\frac{r\phi_0'^2}{2\Mpl^2}
=
f_\delta
\end{equation}
and dividing Eq.~\eqref{eq:static_scalar_part3} by
$2e^{\delta_0}$ gives
\begin{equation}
f_\phi'
+
f_\delta f_\phi
=
f_V.
\label{eq:background_identity_abc}
\end{equation}
Therefore
\begin{equation}
\partial_v\mu_{,r}
-
\partial_r\mu_{,v}
\equiv
0
\label{eq:mass_constraints_compatible}
\end{equation}
on every static background satisfying the exact field equations.
Compatibility is thus not an additional assumption: it is the
linearized consequence of the static scalar equation.
Eq.~\eqref{eq:mass_temporal_abc} implies
\begin{equation}
\partial_v\bigl(\mu-f_\phi\eta\bigr)=0,
\end{equation}
and hence
\begin{equation}
\mu(v,r)
=
f_\phi(r)\eta(v,r)+f_\mu(r),
\label{eq:mu_eta_h}
\end{equation}
where $f_\mu$ is independent of $v$. Substituting this expression into
Eq.~\eqref{eq:mass_radial_abc} gives
\begin{align}
f_\phi'\eta
+
f_\phi\eta_{,r}
+
f_\mu'
={}&
-f_\delta f_\phi\eta
-f_\delta f_\mu
+f_V\eta
+f_\phi\eta_{,r}.
\end{align}
The terms proportional to $\eta_{,r}$ cancel. The terms proportional
to $\eta$ cancel by Eq.~\eqref{eq:background_identity_abc}, leaving
\begin{equation}
f_\mu'
=
-f_\delta f_\mu.
\label{eq:h_radial_equation}
\end{equation}
Since $f_\delta=\delta_0'$, this integrates to
\begin{equation}
f_\mu(r)
=
C_\mu e^{-\delta_0(r)},
\label{eq:h_solution}
\end{equation}
where $C_\mu$ is independent of both $v$ and $r$. Consequently,
\begin{equation}
\mu(v,r)
=
\frac{r^2}{2}
\mathcal F_0\phi_0'\eta(v,r)
+
C_\mu e^{-\delta_0(r)}.
\label{eq:mu_reduced}
\end{equation}
The entire residual freedom in the mass perturbation is therefore
encoded in a single constant $C_\mu$, rather than in an arbitrary
function of $v$ or $r$. At this frozen-background stage no boundary
condition or physical interpretation is assigned to $C_\mu$. It
should in particular not yet be identified with a displacement of the
exact dynamical horizon. Its geometric meaning for tangent
deformations of the static family is determined below. Its
time-dependent analogue, introduced only after parameter promotion,
will encode the relative displacement between the exact horizon and
the horizon of the chosen instantaneous static representative.

%%%%%%%%%%%%%%%%%%%%%%%%%%%%%%%%%%%%%%%%%%%%%%%%%%%%%%%%%%%%%

\subsubsection{Cancellation of the nonlocal metric perturbation}
\label{sec:cancellation}

After substituting Eq.~\eqref{eq:mu_reduced}, the radial scalar-flux
perturbation appearing in the linearized scalar equation becomes
\begin{align}
\mathcal F_0\eta_{,r}
+\mathcal F_0\phi_0'\xi
-\frac{2\phi_0'}{\Mpl^2r}\mu
={}&
\mathcal F_0\eta_{,r}
+\mathcal F_0\phi_0'\xi
-\frac{r\mathcal F_0\phi_0'^2}{\Mpl^2}\eta
\nonumber\\
&\hspace{8mm}
-\frac{2C_\mu}{\Mpl^2r}
e^{-\delta_0}\phi_0'.
\label{eq:scalar_flux_after_mu}
\end{align}

The terms involving an undifferentiated lapse perturbation combine as
\begin{align}
&
\partial_r
\left(
e^{\delta_0}r^2\mathcal F_0\phi_0'\xi
\right)
-
e^{\delta_0}r^2
\left(
V_0'
+\frac{Q_m^2}{2r^4}B_0'
\right)\xi
\nonumber\\
&\hspace{20mm}
=
e^{\delta_0}r^2\mathcal F_0\phi_0'\xi_{,r},
\label{eq:xi_cancellation_step}
\end{align}
where the static scalar equation
\eqref{eq:static_scalar_part3} has been used. Thus all terms
proportional to an undifferentiated $\xi$ cancel identically, without
requiring the explicit integral representation
\eqref{eq:xi_integral}.

Using the radial constraint
\eqref{eq:part3_xi_constraint}, the remaining contribution involving
$\xi$ becomes
\begin{equation}
e^{\delta_0}
\frac{r^3\mathcal F_0\phi_0'^2}{\Mpl^2}
\eta_{,r}.
\label{eq:xi_local_term}
\end{equation}
On the other hand, differentiating the term proportional to $\eta$
generated by the substitution of $\mu$ gives
\begin{align}
-\partial_r
\left(
e^{\delta_0}
\frac{r^3\mathcal F_0\phi_0'^2}{\Mpl^2}
\eta
\right)
={}&
-e^{\delta_0}
\frac{r^3\mathcal F_0\phi_0'^2}{\Mpl^2}
\eta_{,r}
\nonumber\\
&-
\left[
e^{\delta_0}
\frac{r^3\mathcal F_0\phi_0'^2}{\Mpl^2}
\right]'
\eta.
\label{eq:eta_backreaction_derivative}
\end{align}
The first term on the right-hand side cancels
Eq.~\eqref{eq:xi_local_term} exactly. The apparent radial
nonlocality introduced by solving the lapse constraint therefore
disappears from the reduced scalar equation. Only a local
undifferentiated contribution to the effective radial potential
remains.

%%%%%%%%%%%%%%%%%%%%%%%%%%%%%%%%%%%%%%%%%%%%%%%%%%%%%%%%%%%%%

\subsubsection{The reduced scalar equation}

After the cancellations described above, the coupled linearized system
reduces to
\begin{align}
0
={}&
2r^2\eta_{,vr}
+2r\eta_{,v}
+\partial_r
\left(
e^{\delta_0}r^2\mathcal F_0\eta_{,r}
\right)
\nonumber\\
&-
\left\{
e^{\delta_0}r^2
\left[
V_0''
+\frac{Q_m^2}{2r^4}B_0''
\right]
+
\left[
e^{\delta_0}
\frac{r^3\mathcal F_0\phi_0'^2}{\Mpl^2}
\right]'
\right\}\eta
\nonumber\\
&-
\frac{2C_\mu}{\Mpl^2}
\bigl(r\phi_0'\bigr)'.
\label{eq:reduced_scalar_with_Cmu}
\end{align}

We define the effective radial potential by
\begin{equation}
\mathcal V_{\rm eff}(r)
\equiv
e^{\delta_0}r^2
\left[
V_0''
+\frac{Q_m^2}{2r^4}B_0''
\right]
+
\left[
e^{\delta_0}
\frac{r^3\mathcal F_0\phi_0'^2}{\Mpl^2}
\right]'
\label{eq:effective_radial_potential}
\end{equation}
and introduce the frozen-background radial operator in Sturm--Liouville
differential form,
\begin{equation}
L\eta
\equiv
\partial_r
\bigl(
p(r)\eta_{,r}
\bigr)
-
\mathcal V_{\rm eff}(r)\eta,
\qquad
p(r)
\equiv
e^{\delta_0(r)}r^2\mathcal F_0(r).
\label{eq:reduced_operator_definition}
\end{equation}
The reduced equation then takes the compact form
\begin{equation}
2r^2\eta_{,vr}
+
2r\eta_{,v}
+
L\eta
=
\frac{2C_\mu}{\Mpl^2}
\bigl(r\phi_0'\bigr)'.
\label{eq:reduced_scalar_compact}
\end{equation}

The reduction relies on two consequences of the same static background
equation. First, the radial and temporal mass constraints are mutually
compatible because their integrability condition is precisely
Eq.~\eqref{eq:background_identity_abc}, which follows from the static
scalar equation. Second, the same static equation cancels the
undifferentiated lapse perturbation and removes the apparent radial
nonlocality introduced by reconstructing $\xi$ from its constraint.

The metric sector therefore leaves only the residual constant
$C_\mu$. On the fixed static background, this constant multiplies the
fixed radial source profile
\begin{equation}
\frac{2}{\Mpl^2}
\bigl(r\phi_0'\bigr)'.
\end{equation}
The complete coupled linearized system is consequently reduced to the
single local scalar equation
\eqref{eq:reduced_scalar_compact}. At this stage, $C_\mu$ remains an
unspecified integration constant of the frozen-background problem; its
geometric interpretation for tangent perturbations is developed in the
following subsection.

%%%%%%%%%%%%%%%%%%%%%%%%%%%%%%%%%%%%%%%%%%%%%%%%%%%%%%%%%%%%%

\subsection{Canonical tangent theorem}
\label{subsec:canonical-tangent}

We now study the action of the reduced operator on tangent vectors to
the manifold of static solutions.

Let $\mathcal S$ be a smooth local two-dimensional manifold of
non-extremal static solutions at fixed magnetic charge $Q_m$, fixed
theory data $(V,B)$, and fixed matching normalization
\begin{equation}
\delta_0(r_{\rm m};\lambda)=0
\end{equation}
for every $\lambda\in\mathcal S$. We assume that the profiles
associated with neighboring points of $\mathcal S$ are defined on a
common radial neighborhood of the reference horizon, so that
parameter derivatives at fixed areal radius are well defined. The
tangent space $T_\lambda\mathcal S$ consists of infinitesimal
variations of the complete static field configuration at
$\lambda\in\mathcal S$.

Let $\lambda^A$ be arbitrary local coordinates on $\mathcal S$.
Differentiating the static fields at fixed areal radius defines
\begin{equation}
\varphi_A(r)
\equiv
\partial_A\phi_0(r;\lambda),
\qquad
\Delta_A(r)
\equiv
\partial_A\delta_0(r;\lambda),
\qquad
m_A(r)
\equiv
\partial_A m_0(r;\lambda).
\label{eq:def_varphiA}
\end{equation}
Since these fields are obtained by differentiating exact static
solutions, they satisfy the frozen-background linearized equations.
In particular,
\begin{equation}
L\varphi_A
=
\frac{2C_{\mu,A}}{\Mpl^2}
\bigl(r\phi_0'\bigr)',
\label{eq:def_LvarphiA}
\end{equation}
where $C_{\mu,A}$ is the residual mass-integration constant associated
with the tangent direction $\partial_A$.

The tangent version of Eq.~\eqref{eq:mu_reduced} is
\begin{equation}
\partial_A m_0(r)
=
f_\phi(r)\varphi_A(r)
+
C_{\mu,A}e^{-\delta_0(r)},
\label{eq:tangent_mass_decomposition}
\end{equation}
and hence
\begin{equation}
C_{\mu,A}
=
e^{\delta_0(r)}
\left[
\partial_A m_0(r)
-
f_\phi(r)\varphi_A(r)
\right].
\label{eq:Cmu_intrinsic_bulk}
\end{equation}
The right-hand side is independent of $r$ by the linearized
constraints and depends linearly on the tangent vector.

Under a change of coordinates
$\lambda^A\rightarrow\widetilde\lambda^A$ on $\mathcal S$,
\begin{equation}
\widetilde\varphi_A
=
\frac{\partial\lambda^B}
{\partial\widetilde\lambda^A}
\varphi_B,
\qquad
C_{\mu,\widetilde A}
=
\frac{\partial\lambda^B}
{\partial\widetilde\lambda^A}
C_{\mu,B}.
\label{eq:Cmu-transformation}
\end{equation}
Thus, once the matching normalization of the advanced-time coordinate
has been fixed, the coefficients $C_{\mu,A}$ are the components of a
covector on the static solution manifold,
\begin{equation}
\mathfrak C_\mu
\in
T_\lambda^\ast\mathcal S,
\qquad
\mathfrak C_\mu(\partial_A)
=
C_{\mu,A}.
\label{eq:Cmu-covector}
\end{equation}
In particular,
\begin{equation}
\ker\mathfrak C_\mu
\subset
T_\lambda\mathcal S
\end{equation}
is independent of the choice of collective coordinates. A constant
renormalization of the advanced-time coordinate rescales
$\mathfrak C_\mu$ by a nonzero factor and therefore leaves its kernel
unchanged.

For each $\lambda\in\mathcal S$, let $r_{{\rm H},0}(\lambda)$ denote
the static horizon radius, defined by
\begin{equation}
\mathcal F_0\bigl(r_{{\rm H},0}(\lambda);\lambda\bigr)=0.
\label{eq:static-horizon-definition-section3}
\end{equation}
At the static horizon,
\begin{equation}
f_\phi(r_{{\rm H},0})
=
\frac{r_{{\rm H},0}^2}{2}
\mathcal F_0(r_{{\rm H},0})
\phi_0'(r_{{\rm H},0})
=
0.
\end{equation}
Eq.~\eqref{eq:Cmu_intrinsic_bulk} therefore gives
\begin{equation}
C_{\mu,A}
=
e^{\delta_{0{\rm H}}}
\partial_A m_0(r_{{\rm H},0}),
\label{eq:Cmu-fixed-r-horizon}
\end{equation}
where
\begin{equation}
\delta_{0{\rm H}}
\equiv
\delta_0(r_{{\rm H},0};\lambda),
\end{equation}
and $\partial_A m_0(r_{{\rm H},0})$ means that the parameter derivative
is first taken at fixed $r$ and only then evaluated at
$r=r_{{\rm H},0}(\lambda)$.

The static horizon-mass parameter is
\begin{equation}
M_{{\rm H},0}(\lambda)
\equiv
m_0\bigl(r_{{\rm H},0}(\lambda);\lambda\bigr)
=
\frac{\Mpl^2}{2}r_{{\rm H},0}(\lambda).
\label{eq:static-MH}
\end{equation}
The chain rule gives
\begin{equation}
\partial_A M_{{\rm H},0}
=
\partial_A m_0(r_{{\rm H},0})
+
m_0'(r_{{\rm H},0})
\partial_A r_{{\rm H},0}.
\label{eq:static-MH-chain-rule}
\end{equation}
Combining this relation with
Eq.~\eqref{eq:Cmu-fixed-r-horizon} yields
\begin{equation}
C_{\mu,A}
=
e^{\delta_{0{\rm H}}}
\left[
\partial_A M_{{\rm H},0}
-
m_0'(r_{{\rm H},0})
\partial_A r_{{\rm H},0}
\right].
\label{eq:Cmu_horizon_formula}
\end{equation}
This horizon formula is the geometric core of the result below.

We now use the static modulus label $\Phi$ and the static
horizon-mass parameter $M_{{\rm H},0}$ as local coordinates on
$\mathcal S$. The coordinate $\Phi$ labels the static reference
family whose later dynamical promotion is matched to the prescribed
exterior modulus. We assume that
\begin{equation}
\bigl(\Phi,M_{{\rm H},0}\bigr)
\end{equation}
form regular local coordinates on $\mathcal S$, so that the
corresponding coordinate tangent vectors are nonzero and linearly
independent.

\begin{theorem}[Canonical Tangent Theorem]
\label{thm:canonical_tangent}
Let
$\bigl(\Phi,M_{{\rm H},0}\bigr)$
be regular local coordinates on the two-dimensional static solution
manifold $\mathcal S$. At every non-extremal future outer static
horizon,
\begin{equation}
\ker\mathfrak C_\mu
=
\operatorname{span}
\left\{
\left.
\partial_\Phi
\right|_{M_{{\rm H},0}}
\right\}.
\label{eq:canonical-kernel}
\end{equation}
Equivalently,
\begin{equation}
C_{\mu,\Phi}=0,
\qquad
C_{\mu,M_{{\rm H},0}}
=
e^{\delta_{0{\rm H}}}
r_{{\rm H},0}
\mathcal F_0'(r_{{\rm H},0})
\neq0.
\label{eq:Cmu-adapted-components}
\end{equation}
Consequently, the scalar component of the canonical tangent,
\begin{equation}
\varphi_\Phi
\equiv
\left.
\partial_\Phi\phi_0
\right|_{M_{{\rm H},0}},
\label{eq:canonical-scalar-tangent}
\end{equation}
satisfies
\begin{equation}
L\varphi_\Phi=0.
\label{eq:canonical-tangent-zero-mode}
\end{equation}
\end{theorem}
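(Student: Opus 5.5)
The plan is to evaluate the horizon formula \eqref{eq:Cmu_horizon_formula} in the adapted coordinates $(\Phi,M_{{\rm H},0})$ and combine it with the static identity \eqref{eq:static-MH}, which gives $\partial_A r_{{\rm H},0}=(2/\Mpl^2)\,\partial_A M_{{\rm H},0}$. Substituting into \eqref{eq:Cmu_horizon_formula} yields
\begin{equation*}
C_{\mu,A}
=
e^{\delta_{0{\rm H}}}
\left[
1-\frac{2m_0'(r_{{\rm H},0})}{\Mpl^2}
\right]
\partial_A M_{{\rm H},0},
\end{equation*}
so $\mathfrak C_\mu$ is proportional to $dM_{{\rm H},0}$. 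For $A=\Phi$ at fixed $M_{{\rm H},0}$ we have $\partial_\Phi M_{{\rm H},0}=0$, hence $C_{\mu,\Phi}=0$. For $A=M_{{\rm H},0}$ the derivative is $1$.

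Next I will identify the bracket with the horizon slope of $\mathcal F_0$. Differentiating $\mathcal F_0=1-2m_0/(\Mpl^2 r)$ and using $2m_0(r_{{\rm H},0})=\Mpl^2 r_{{\rm H},0}$ gives
\begin{equation*}
r_{{\rm H},0}\mathcal F_0'(r_{{\rm H},0})
=
\frac{2m_0}{\Mpl^2 r_{{\rm H},0}}-\frac{2m_0'}{\Mpl^2}
=
1-\frac{2m_0'(r_{{\rm H},0})}{\Mpl^2}.
\end{equation*}
This reproduces $C_{\mu,M_{{\rm H},0}}=e^{\delta_{0{\rm H}}}r_{{\rm H},0}\mathcal F_0'(r_{{\rm H},0})$. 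Non-extremality and the future-outer condition \eqref{eq:outer-horizon-condition} give $\mathcal F_0'(r_{{\rm H},0})>0$, so this component is nonzero, and $\mathfrak C_\mu\neq0$.

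Since $\mathcal S$ is two-dimensional and $\mathfrak C_\mu$ is a nonzero covector, its kernel is one-dimensional. It contains $\partial_\Phi|_{M_{{\rm H},0}}$, which is nonzero by the regular-coordinate assumption, so \eqref{eq:canonical-kernel} follows. Finally, \eqref{eq:def_LvarphiA} with $C_{\mu,\Phi}=0$ gives $L\varphi_\Phi=0$ immediately.

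The only delicate step is justifying \eqref{eq:static-MH-chain-rule} itself. The derivative $\partial_A m_0$ is taken at fixed $r$ and the horizon moves with $\lambda$, so I need $r_{{\rm H},0}(\lambda)$ to be differentiable. I will obtain this from the implicit function theorem applied to $\mathcal F_0(r;\lambda)=0$, which uses $\mathcal F_0'(r_{{\rm H},0})\neq0$, i.e.\ non-extremality. I will also use the assumption that neighboring profiles share a common radial neighborhood of the horizon, so that the fixed-$r$ derivatives in \eqref{eq:Cmu-fixed-r-horizon} make sense there. The $r$-independence of \eqref{eq:Cmu_intrinsic_bulk}, which allows evaluation at the horizon, was already established through \eqref{eq:h_solution}.
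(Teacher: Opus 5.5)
Your proposal is correct and follows essentially the same route as the paper: you evaluate the horizon formula \eqref{eq:Cmu_horizon_formula} using $r_{{\rm H},0}=2M_{{\rm H},0}/\Mpl^2$, identify the bracket with $r_{{\rm H},0}\mathcal F_0'(r_{{\rm H},0})$, use the two-dimensionality of $T_\lambda\mathcal S$ to get a one-dimensional kernel, and read off $L\varphi_\Phi=0$ from \eqref{eq:def_LvarphiA}; your implicit-function justification of the differentiability of $r_{{\rm H},0}(\lambda)$ matches the paper's remark after its proof. The only difference is cosmetic: you write the result as $\mathfrak C_\mu=e^{\delta_{0{\rm H}}}r_{{\rm H},0}\mathcal F_0'(r_{{\rm H},0})\,dM_{{\rm H},0}$, a coordinate-free form, whereas the paper evaluates the two components separately.
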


\begin{proof}
Along the direction
$\left.\partial_\Phi\right|_{M_{{\rm H},0}}$, one has
\begin{equation}
\left.
\partial_\Phi M_{{\rm H},0}
\right|_{M_{{\rm H},0}}
=
0.
\end{equation}
Moreover,
\begin{equation}
r_{{\rm H},0}
=
\frac{2M_{{\rm H},0}}{\Mpl^2}
\end{equation}
depends only on $M_{{\rm H},0}$, and therefore
\begin{equation}
\left.
\partial_\Phi r_{{\rm H},0}
\right|_{M_{{\rm H},0}}
=
0.
\end{equation}
Eq.~\eqref{eq:Cmu_horizon_formula} then gives
\begin{equation}
C_{\mu,\Phi}=0.
\end{equation}

Along the second adapted coordinate direction,
\begin{equation}
\left.
\partial_{M_{{\rm H},0}}M_{{\rm H},0}
\right|_\Phi
=
1,
\qquad
\left.
\partial_{M_{{\rm H},0}}r_{{\rm H},0}
\right|_\Phi
=
\frac{2}{\Mpl^2}.
\end{equation}
Hence
\begin{equation}
C_{\mu,M_{{\rm H},0}}
=
e^{\delta_{0{\rm H}}}
\left[
1-
\frac{2m_0'(r_{{\rm H},0})}{\Mpl^2}
\right].
\label{eq:Cmu-MH-intermediate}
\end{equation}
Differentiating
\begin{equation}
\mathcal F_0(r)
=
1-
\frac{2m_0(r)}{\Mpl^2r}
\end{equation}
and using
\begin{equation}
2m_0(r_{{\rm H},0})
=
\Mpl^2r_{{\rm H},0}
\end{equation}
gives
\begin{equation}
r_{{\rm H},0}
\mathcal F_0'(r_{{\rm H},0})
=
1-
\frac{2m_0'(r_{{\rm H},0})}{\Mpl^2}.
\end{equation}
Therefore
\begin{equation}
C_{\mu,M_{{\rm H},0}}
=
e^{\delta_{0{\rm H}}}
r_{{\rm H},0}
\mathcal F_0'(r_{{\rm H},0}).
\end{equation}
For a non-extremal future outer static horizon,
\begin{equation}
\mathcal F_0'(r_{{\rm H},0})>0,
\end{equation}
and hence
\begin{equation}
C_{\mu,M_{{\rm H},0}}\neq0.
\end{equation}
Since $T_\lambda\mathcal S$ is two-dimensional,
$\mathfrak C_\mu$ has rank one and its kernel is exactly the line
generated by
$\left.\partial_\Phi\right|_{M_{{\rm H},0}}$.

Finally, Eq.~\eqref{eq:def_LvarphiA} with
$C_{\mu,\Phi}=0$ gives
\begin{equation}
L\varphi_\Phi=0.
\end{equation}
\end{proof}

The differentiability of $r_{{\rm H},0}(\lambda)$ used above follows
from the implicit function theorem applied to
\begin{equation}
\mathcal F_0(r_{{\rm H},0};\lambda)=0,
\end{equation}
since non-extremality implies
\begin{equation}
\mathcal F_0'(r_{{\rm H},0})\neq0.
\end{equation}

\begin{remark}[Geometric versus functional statements]
\label{rem:functional}
The theorem is a statement about the tangent space of the static
solution manifold. It identifies the unique tangent direction whose
residual mass-integration constant vanishes. It does not, by itself,
establish that the associated scalar profile belongs to the domain of
a completed radial boundary-value problem. Horizon regularity and the
outer matching functional are incorporated in the functional-analytic
construction developed later.
\end{remark}

\begin{remark}[Non-extremality]
\label{rem:nonextremal}
At an extremal static horizon,
\begin{equation}
\mathcal F_0'(r_{{\rm H},0})=0,
\end{equation}
the implicit-function argument may fail and
$C_{\mu,M_{{\rm H},0}}$ vanishes according to
Eq.~\eqref{eq:Cmu-adapted-components}. The rank-one conclusion for
$\mathfrak C_\mu$ therefore need not survive the extremal limit. The
theorem is restricted throughout to non-extremal future outer static
horizons. 
Whether an analogous Fredholm construction can be formulated directly on the
extremal throat, where the horizon becomes degenerate and the implicit
function argument used above no longer applies, remains an interesting open
problem beyond the scope of the present work.

\end{remark}

%%%%%%%%%%%%%%%%%%%%%%%%%%%%%%%%%%%%%%%%%%%%%%%%%%%%%%%%%%%%%%%%%%%%%%%%%%%%

%%%%%%%%%%%%%%%%%%%%%%%%%%%%%%%%%%%%%%%%%%%%%%%%%%%%%%%%%%%%%%%%%%%%%%%%%%%%

%%%%%%%%%%%%%%%%%%%%%%%%%%%%%%%%%%%%%%%%%%%%%%%%%%%%%%%%%%%%%%%%%%%%%%%%%%%%

\setcounter{section}{3}
\section{Adiabatic tracking and the lag equation}
\label{sec:tracking}

This section constructs the leading adiabatic deformation of the
static family. We first promote the collective coordinates and compute
the defect of the resulting instantaneous-tracking configuration. We
then introduce the lag field, analyze the tangential ambiguity in its
definition, and reduce the first-order system to a single unsliced
scalar equation. The local slice and the completed radial
boundary-value problem are constructed in Section~\ref{sec:slice}.

%%%%%%%%%%%%%%%%%%%%%%%%%%%%%%%%%%%%%%%%%%%%%%%%%%%%%%%%%%%%%

\subsection{Promotion of the static family}
\label{subsec:part5_promotion}

Let
\begin{equation}
\Psi_0(r;\lambda)
\equiv
\bigl(
\phi_0(r;\lambda),
\delta_0(r;\lambda),
m_0(r;\lambda)
\bigr),
\qquad
\lambda^A
=
\bigl(
\Phi,M_{{\rm H},0}
\bigr),
\label{eq:part5_static_family_collective}
\end{equation}
be the smooth two-parameter family of exact static magnetic Einstein--Maxwell--dilaton
solutions introduced in Section~\ref{sec:operator_and_geometry}. The theory data $(V,B)$ and the
magnetic charge $Q_m$ are held fixed throughout the family. The
areal radius $r$ is held fixed when differentiating with respect to
the collective coordinates, and we define
\begin{equation}
\varphi_A
\equiv
\partial_A\phi_0,
\qquad
\Delta_A
\equiv
\partial_A\delta_0,
\qquad
m_A
\equiv
\partial_A m_0.
\label{eq:part5_tangent_fields}
\end{equation}
These tangent fields satisfy the frozen-background linearized
equations derived in Section~\ref{sec:operator_and_geometry}.

The coordinates $\lambda^A$ label the static representative used in
the adiabatic decomposition. They are introduced independently of the
time-dependent equations: $\Psi_0(r;\lambda)$ is a time-independent
map from the static solution manifold $\mathcal S$ to radial field
profiles. Time dependence is introduced only by promoting the
collective coordinates to slowly varying functions of advanced time,
\begin{equation}
\lambda^A
=
\lambda^A(v).
\end{equation}

After promotion, we denote the static horizon radius and static
horizon-mass parameter of the instantaneous representative by
\begin{equation}
r_{\rm H}^{\rm inst}(v)
\equiv
r_{{\rm H},0}\bigl(\lambda(v)\bigr),
\qquad
M_{\rm H}^{\rm inst}(v)
\equiv
M_{{\rm H},0}\bigl(\lambda(v)\bigr)
=
\frac{\Mpl^2}{2}r_{\rm H}^{\rm inst}(v).
\label{eq:instantaneous-horizon-parameters}
\end{equation}
These quantities belong to the chosen instantaneous static
representative. They should not be confused with the exact dynamical
horizon radius and mass parameter,
\begin{equation}
r_{\rm H}^{\rm exact}(v),
\qquad
M_{\rm H}^{\rm exact}(v)
=
\frac{\Mpl^2}{2}r_{\rm H}^{\rm exact}(v),
\end{equation}
introduced in Section~\ref{sec:reduction}. The instantaneous and exact horizon
quantities may differ at first adiabatic order once the lag fields are
included.

The exterior solution supplies a prescribed slowly varying modulus,
which we denote by
\begin{equation}
\Phi_{\rm ext}(v),
\qquad
\frac{
r_{\rm H}^{\rm inst}
|\dot\Phi_{\rm ext}|
}{\Mpl}
=
{\mathcal O}_{\rm ad}(\epsilon).
\label{eq:external-modulus-counting}
\end{equation}
Replacing $r_{\rm H}^{\rm inst}$ in this estimate by
$r_{\rm H}^{\rm exact}$ would change it only beyond the order retained
here. The quantity $\Phi_{\rm ext}$ labels the exterior configuration
to which the near zone is matched; it need not coincide directly with
the scalar value at $r=r_{\rm m}$. Its use as a single modulus across the finite near zone relies on the overlap hierarchy introduced in Section~\ref{sec:reduction}, which ensures that its variation across the matching region is negligible.

The promoted modulus coordinate $\lambda^\Phi(v)$ will be identified
with $\Phi_{\rm ext}(v)$ up to the residual tangential freedom analyzed
in Section~\ref{subsec:part5_lag_slice}. Admissible changes of
representative are required to vary only on the prescribed slow
timescale, so their effect on the collective velocity begins beyond
the order retained here.

The second promoted coordinate is
$M_{\rm H}^{\rm inst}(v)$. It labels the horizon mass of the
instantaneous static representative and is not identified pointwise
with $M_{\rm H}^{\rm exact}(v)$. We choose it within the slowly varying
sector,
\begin{equation}
\dot M_{\rm H}^{\rm inst}
=
{\mathcal O}_{\rm ad}(\epsilon^2),
\qquad
\dot r_{\rm H}^{\rm inst}
=
{\mathcal O}_{\rm ad}(\epsilon^2).
\label{eq:part5_counting}
\end{equation}
This counting is motivated by the exact quadratic horizon-flux law
satisfied by $M_{\rm H}^{\rm exact}$, but it is a condition imposed on
the choice of instantaneous representative rather than a pointwise
identification of the two horizon masses.

We anticipate here a residual tangential freedom in the choice of
instantaneous representative. This freedom is defined precisely in
Section~\ref{subsec:part5_lag_slice}, once the lag field and the equivalence
between nearby representatives have been introduced. Denoting the components
of such a first-order representative shift by \(\alpha^A(v)\) (see
Eq.~\eqref{eq:part5_shifted_velocity}), we fix its static horizon-mass
component by the convention
\begin{equation}
\alpha^{M_{{\rm H},0}}(v)\equiv0.
\label{eq:no-MH-tangent-shift-promotion}
\end{equation}
This convention is imposed identically and does not follow from the
slow-variation condition alone. Its consistency, and the relation between
the exact and instantaneous horizons, are established after the lag field
has been introduced.

The ultra-near-extremal regime, in which
\begin{equation}
\kappa_{\rm H}^{\rm inst}
r_{\rm H}^{\rm inst}
\sim
\epsilon,
\end{equation}
lies outside the scope of the expansion. Here
$\kappa_{\rm H}^{\rm inst}$ denotes the surface gravity parameter of
the instantaneous static representative in the matching
normalization.

We now write
\begin{equation}
\lambda^A=\lambda^A(v),
\qquad
\dot\lambda^A=\frac{d\lambda^A}{dv},
\end{equation}
and define the \emph{instantaneous-tracking configuration} by
\begin{equation}
\Psi_{\rm inst}(v,r)
\equiv
\Psi_0\bigl(r;\lambda(v)\bigr).
\label{eq:part5_instantaneous_configuration}
\end{equation}
Its metric function is
\begin{equation}
\mathcal F_{\rm inst}(v,r)
=
1-
\frac{
2m_0(r;\lambda(v))
}{
\Mpl^2r
}
\equiv
\mathcal F_0(r;\lambda(v)).
\end{equation}
By construction,
\begin{equation}
\mathcal F_{\rm inst}
\bigl(
v,r_{\rm H}^{\rm inst}(v)
\bigr)
=
0.
\end{equation}
Eq.~\eqref{eq:part5_instantaneous_configuration} is only a
definition of the instantaneous static representative. It neither
asserts that the exact dynamical solution remains on $\mathcal S$ nor
that
\begin{equation}
r_{\rm H}^{\rm exact}
=
r_{\rm H}^{\rm inst}.
\end{equation}

At fixed areal radius,
\begin{align}
\partial_v\phi_{\rm inst}
&=
\dot\lambda^A\varphi_A,
\\
\partial_v\delta_{\rm inst}
&=
\dot\lambda^A\Delta_A,
\\
\partial_vm_{\rm inst}
&=
\dot\lambda^A m_A,
\end{align}
and
\begin{equation}
\partial_v\partial_r\phi_{\rm inst}
=
\dot\lambda^A\varphi_A'.
\end{equation}
No acceleration $\ddot\lambda^A$ appears at this stage because the
exact minimal system~\eqref{eq:system} contains only first derivatives
with respect to $v$.

Using Eq.~\eqref{eq:part5_counting}, together with the adiabatic
restriction on changes of representative introduced below, the
first-order chain rules reduce to
\begin{align}
\partial_v\phi_{\rm inst}
&=
\dot\Phi_{\rm ext}\,\varphi_\Phi
+
{\mathcal O}_{\rm ad}(\epsilon^2),
\\
\partial_v\delta_{\rm inst}
&=
\dot\Phi_{\rm ext}\,\Delta_\Phi
+
{\mathcal O}_{\rm ad}(\epsilon^2),
\\
\partial_vm_{\rm inst}
&=
\dot\Phi_{\rm ext}\,m_\Phi
+
{\mathcal O}_{\rm ad}(\epsilon^2),
\label{eq:part5_first_order_chain_rule}
\end{align}
where
\begin{equation}
\varphi_\Phi
=
\left.
\partial_\Phi\phi_0
\right|_{M_{{\rm H},0}},
\qquad
\Delta_\Phi
=
\left.
\partial_\Phi\delta_0
\right|_{M_{{\rm H},0}},
\qquad
m_\Phi
=
\left.
\partial_\Phi m_0
\right|_{M_{{\rm H},0}}.
\label{eq:pderivatives}
\end{equation}
In Eq.~\eqref{eq:part5_first_order_chain_rule}, the terms proportional
to $\dot M_{\rm H}^{\rm inst}$ are of second adiabatic order and have
therefore been absorbed into the remainder. Moreover,
\begin{equation}
\dot\lambda^\Phi
=
\dot\Phi_{\rm ext}
+
{\mathcal O}_{\rm ad}(\epsilon^2),
\end{equation}
because admissible shifts of the modulus representative alter its
velocity only beyond first order. The distinction between
$\lambda^\Phi$ and $\Phi_{\rm ext}$ is consequently irrelevant to the
leading forcing.

%%%%%%%%%%%%%%%%%%%%%%%%%%%%%%%%%%%%%%%%%%%%%%%%%%%%%%%%%%%%%

\subsection{Defect of instantaneous tracking}
\label{subsec:part5_defect}

For each equation in the exact near-zone system, we define the defect
as its left-hand side evaluated on $\Psi_{\rm inst}$. By convention,
the defect vanishes on every exact solution. Since the static
equations hold for every fixed value of $\lambda$, all purely radial
terms cancel pointwise after parameter promotion.

%%%%%%%%%%%%%%%%%%%%%%%%%%%%%%%%%%%%%%%%%%%%%%%%%%%%%%%%%%%%%

\subsubsection{Radial constraints}

Define
\begin{align}
\mathcal E_\delta
&\equiv
\delta_{,r}
-\frac{r}{2\Mpl^2}\phi_{,r}^2,
\\
\mathcal E_{m,r}
&\equiv
m_{,r}
-\frac{r^2}{4}\mathcal F\phi_{,r}^2
-\frac{r^2}{2}V(\phi)
-\frac{Q_m^2}{4r^2}B(\phi).
\end{align}
The static equations imply
\begin{equation}
\mathcal E_\delta[\Psi_{\rm inst}]
=
0,
\qquad
\mathcal E_{m,r}[\Psi_{\rm inst}]
=
0
\label{eq:promoted-radial-constraints}
\end{equation}
exactly, at every $v$. Parameter promotion therefore preserves the
two radial constraints; nonzero defects can arise only in equations
containing explicit $v$-derivatives.

%%%%%%%%%%%%%%%%%%%%%%%%%%%%%%%%%%%%%%%%%%%%%%%%%%%%%%%%%%%%%

\subsubsection{Mass-balance defect and the canonical covector}

The exact mass-balance equation is
\begin{equation}
\mathcal E_{m,v}
\equiv
m_{,v}
-
\frac{r^2}{2}
\left[
e^{-\delta}\phi_{,v}^2
+
\mathcal F\phi_{,v}\phi_{,r}
\right].
\end{equation}
Evaluating it on the promoted static family gives
\begin{align}
\mathcal E_{m,v}[\Psi_{\rm inst}]
={}&
\dot\lambda^A m_A
-
\frac{r^2}{2}
\mathcal F_0\phi_0'
\dot\lambda^A\varphi_A
\nonumber\\
&-
\frac{r^2}{2}
e^{-\delta_0}
\dot\lambda^A\dot\lambda^B
\varphi_A\varphi_B
\nonumber\\
={}&
\dot\lambda^A
\bigl(
m_A-f_\phi\varphi_A
\bigr)
-
\frac{r^2}{2}
e^{-\delta_0}
\dot\lambda^A\dot\lambda^B
\varphi_A\varphi_B,
\label{eq:part5_Emv_before_C}
\end{align}
where
\begin{equation}
f_\phi(r)
=
\frac{r^2}{2}
\mathcal F_0(r)\phi_0'(r).
\end{equation}
Using the tangent decomposition
\begin{equation}
m_A-f_\phi\varphi_A
=
C_{\mu,A}e^{-\delta_0},
\end{equation}
derived in Section~\ref{sec:operator_and_geometry}, we obtain
\begin{equation}
\mathcal E_{m,v}[\Psi_{\rm inst}]
=
e^{-\delta_0}
\left[
\dot\lambda^A C_{\mu,A}
-
\frac{r^2}{2}
\dot\lambda^A\dot\lambda^B
\varphi_A\varphi_B
\right].
\label{eq:part5_Emv_exact}
\end{equation}

The second term is quadratic in the collective velocity and starts at
${\mathcal O}_{\rm ad}(\epsilon^2)$. The first-order defect is
therefore
\begin{equation}
\mathcal E_{m,v}^{(1)}[\Psi_{\rm inst}]
=
e^{-\delta_0}
\mathfrak C_\mu(\dot\lambda)
=
e^{-\delta_0}
\dot\lambda^A C_{\mu,A}.
\label{eq:part5_Emv_linear_intrinsic}
\end{equation}
Thus, the covector $\mathfrak C_\mu$ constructed in Section~\ref{sec:operator_and_geometry} measures
the first-order failure of a chosen tangent motion through
$\mathcal S$ to satisfy the exact mass-balance equation.

In the adapted coordinates
$\bigl(\Phi,M_{{\rm H},0}\bigr)$,
the Canonical Tangent Theorem gives
\begin{equation}
C_{\mu,\Phi}=0,
\qquad
C_{\mu,M_{{\rm H},0}}
=
e^{\delta_{0{\rm H}}}
r_{{\rm H},0}\,
\mathcal F_0'(r_{{\rm H},0})
>0.
\label{eq:part5_C_components_chart}
\end{equation}
for a non-degenerate future outer horizon. In particular,
\begin{equation}
\ker\mathfrak C_\mu
=
\operatorname{span}
\left\{
\left.
\partial_\Phi
\right|_{M_{{\rm H},0}}
\right\}.
\label{eq:part5-canonical-kernel}
\end{equation}
Consequently,
\begin{equation}
\mathfrak C_\mu(\dot\lambda)
=
C_{\mu,\Phi}\dot\lambda^\Phi
+
C_{\mu,M_{{\rm H},0}}\, \dot M_{\rm H}^{\rm inst}
=
{\mathcal O}_{\rm ad}(\epsilon^2),
\label{eq:part5_C_velocity_counting}
\end{equation}
and hence
\begin{equation}
\mathcal E_{m,v}^{(1)}[\Psi_{\rm inst}]
=
0.
\label{eq:first-order-mass-defect-zero}
\end{equation}

This cancellation is not produced by an imposed projection.
Rather, it follows from two geometric properties of the chosen
instantaneous representative. First, the tangent direction
\begin{equation}
 \left.
\partial_\Phi
\right|_{M_{{\rm H},0}}   
\end{equation}
belongs to
$\ker\mathfrak C_\mu$.
Second, the promoted static horizon-mass parameter
$M_{\rm H}^{\rm inst}(v)$
is chosen to vary only at second adiabatic order,
\begin{equation}
\dot M_{\rm H}^{\rm inst}
=
{\mathcal O}_{\rm ad}(\epsilon^2).    
\end{equation}
These two properties imply
$\mathfrak C_\mu(\dot\lambda)
=
{\mathcal O}_{\rm ad}(\epsilon^2)$,
and hence the first-order mass-balance defect vanishes.

%%%%%%%%%%%%%%%%%%%%%%%%%%%%%%%%%%%%%%%%%%%%%%%%%%%%%%%%%%%%%

\subsubsection{Scalar-equation defect}

Write the exact scalar equation as
\begin{equation}
\mathcal E_\phi
\equiv
\partial_v\!\left(r^2\phi_{,r}\right)
+
\partial_r
\left[
r^2\phi_{,v}
+
e^\delta r^2\mathcal F\phi_{,r}
\right]
-
e^\delta r^2\mathcal S(\phi,r),
\label{eq:scalar-defect-definition}
\end{equation}
where
\begin{equation}
\mathcal S(\phi,r)
=
V'(\phi)
+
\frac{Q_m^2}{2r^4}B'(\phi).
\end{equation}
For every fixed $\lambda$, the static family satisfies
\begin{equation}
\partial_r
\left(
e^{\delta_0}r^2\mathcal F_0\phi_0'
\right)
=
e^{\delta_0}r^2\mathcal S(\phi_0,r).
\label{eq:promoted-static-scalar-equation}
\end{equation}
The radial contribution therefore continues to cancel pointwise after
parameter promotion, since the promoted configuration is, at every
instant, an exact member of the static family evaluated at the current
collective coordinates.
The remaining explicit time derivatives give
\begin{align}
\mathcal E_\phi[\Psi_{\rm inst}]
&=
\dot\lambda^A r^2\varphi_A'
+
\partial_r
\left(
r^2\dot\lambda^A\varphi_A
\right)
\nonumber\\
&=
2\dot\lambda^A
\left(
r^2\varphi_A'
+
r\varphi_A
\right).
\label{eq:part5_scalar_defect_boxed}
\end{align}
This expression is exact in the collective velocities. The scalar
equation contains time derivatives only linearly, so no quadratic
term in $\dot\lambda^A$ is generated by parameter promotion.

At first adiabatic order,
\begin{equation}
\mathcal E_\phi^{(1)}[\Psi_{\rm inst}]
=
2\dot\lambda^\Phi
\left(
r^2\varphi_\Phi'
+
r\varphi_\Phi
\right),
\label{eq:part5_scalar_defect_first_order}
\end{equation}
and therefore
\begin{equation}
\mathcal E_\phi^{(1)}[\Psi_{\rm inst}]
=
2\dot\Phi_{\rm ext}
\left(
r^2\varphi_\Phi'
+
r\varphi_\Phi
\right)
+
{\mathcal O}_{\rm ad}(\epsilon^2).
\label{eq:part5-scalar-defect-external}
\end{equation}
The first-order scalar defect is therefore determined by the
prescribed exterior rolling rate and by the sensitivity of the static
scalar profile to variations of the modulus at fixed static
horizon-mass parameter
$M_{{\rm H},0}$.

%%%%%%%%%%%%%%%%%%%%%%%%%%%%%%%%%%%%%%%%%%%%%%%%%%%%%%%%%%%%%

\subsection{Lag field and tangential ambiguity}
\label{subsec:part5_lag_slice}

The promoted configuration satisfies the radial constraints exactly
and the mass-balance equation through first adiabatic order, but in
general it does not satisfy the scalar equation at that order. The
exact dynamical solution must therefore depart from the static
solution manifold.

We write
\begin{equation}
\Psi(v,r)
=
\Psi_{\rm inst}(v,r)
+
\Psi_{\rm lag}(v,r)
+
{\mathcal O}_{\rm ad}(\epsilon^2),
\qquad
\Psi_{\rm lag}
=
{\mathcal O}_{\rm ad}(\epsilon).
\label{eq:part5_lag_definition}
\end{equation}
At this stage, $\Psi_{\rm lag}$ is defined only as the first-order
difference between the exact dynamical solution and the chosen
instantaneous static representative. No equation, boundary condition,
or transversality condition has yet been imposed on it.

This decomposition is not unique. Consider a nearby representative curve
\begin{equation}
\widetilde\lambda^A(v)
=
\lambda^A(v)+\alpha^A(v).
\label{eq:part5_collective_shift}
\end{equation}
An admissible first-order change of representative is restricted to the
slowly varying adiabatic sector. Componentwise, its amplitude and velocity
are counted as
\begin{equation}
\frac{\alpha^\Phi}{\Mpl}
=
{\mathcal O}_{\rm ad}(\epsilon),
\qquad
\frac{\alpha^{M_{{\rm H},0}}}{M_{{\rm H},0}}
=
{\mathcal O}_{\rm ad}(\epsilon),
\label{eq:part5_alpha_amplitude_counting}
\end{equation}
and
\begin{equation}
\frac{r_{{\rm H},0}}{\Mpl}\,
\dot\alpha^\Phi
=
{\mathcal O}_{\rm ad}(\epsilon^2),
\qquad
\frac{\dot\alpha^{M_{{\rm H},0}}}{\Mpl^2}
=
{\mathcal O}_{\rm ad}(\epsilon^2).
\label{eq:part5_alpha_velocity_counting}
\end{equation}
The second set of relations ensures that the change of representative
modifies the collective velocity only at second adiabatic order. Thus,
componentwise,
\begin{equation}
\frac{r_{{\rm H},0}}{\Mpl}
\left(
\dot{\widetilde\lambda}^{\Phi}
-
\dot\lambda^\Phi
\right)
=
{\mathcal O}_{\rm ad}(\epsilon^2),
\qquad
\frac{
\dot{\widetilde\lambda}^{M_{{\rm H},0}}
-
\dot\lambda^{M_{{\rm H},0}}
}{
\Mpl^2
}
=
{\mathcal O}_{\rm ad}(\epsilon^2).
\label{eq:part5_shifted_velocity}
\end{equation}
Hence the collective velocity, and therefore the first-order forcing, is
unchanged at the order retained here.

Expanding the static family about the original representative gives
\begin{align}
\Psi_0\bigl(r;\widetilde\lambda(v)\bigr)
={}&
\Psi_0\bigl(r;\lambda(v)\bigr)
+
\alpha^A(v)
\partial_A\Psi_0\bigl(r;\lambda(v)\bigr)
+
{\mathcal O}_{\rm ad}(\epsilon^2).
\label{eq:part5_shifted_instantaneous}
\end{align}
Since the exact physical solution is independent of the representative
used in the decomposition, the lag field transforms as
\begin{equation}
\Psi_{\rm lag}
\longrightarrow
\Psi_{\rm lag}
-
\alpha^A(v)\partial_A\Psi_0,
\qquad
\dot\alpha^A
=
{\mathcal O}_{\rm ad}(\epsilon^2).
\label{eq:part5_lag_ambiguity}
\end{equation}
Thus the lag field is defined only modulo an adiabatically varying
tangent vector to the static solution manifold.

The two tangent components correspond to the static coordinates
\begin{equation}
\lambda^A
=
\bigl(
\Phi,M_{{\rm H},0}
\bigr).
\end{equation}
After promotion,
$M_{{\rm H},0}(\lambda(v))=M_{\rm H}^{\rm inst}(v)$ labels the horizon
mass of the instantaneous static representative. It is not identified
pointwise with the exact horizon mass
$M_{\rm H}^{\rm exact}(v)$ of the full dynamical solution.

As anticipated in
Section~\ref{subsec:part5_promotion}, we fix the residual tangent
freedom in the static horizon-mass direction by the representation
convention
\begin{equation}
\alpha^{M_{{\rm H},0}}(v)\equiv0.
\label{eq:no-MH-tangent-shift}
\end{equation}
This condition is imposed identically on the admissible changes of
representative. It is not a consequence of the adiabatic counting
alone: a shift satisfying
$\dot\alpha^{M_{{\rm H},0}}
={\mathcal O}_{\rm ad}(\epsilon^2)$
could still accumulate to first adiabatic order over the full slow
timescale.

The consistency of this convention can be seen directly from the
position of the exact marginal horizon. Write, to first order,
\begin{equation}
r_{\rm H}^{\rm exact}(v)
=
r_{\rm H}^{\rm inst}(v)
+
\delta r_{\rm H}(v)
+
{\mathcal O}_{\rm ad}(\epsilon^2),
\label{eq:exact-inst-horizon-splitting}
\end{equation}
where
\begin{equation}
r_{\rm H}^{\rm inst}(v)
=
\frac{2M_{\rm H}^{\rm inst}(v)}{\Mpl^2}
\end{equation}
is the horizon radius of the instantaneous static representative.

Writing the mass lag field in the form derived below,
\begin{equation}
\mu(v,r)
=
f_\phi(r;\lambda(v))\eta(v,r)
+
C_\mu^{\rm lag}(v)e^{-\delta_0(r;\lambda(v))},
\label{eq:lag-mass-form-preview}
\end{equation}
and expanding the exact horizon condition
\begin{equation}
\mathcal F
\bigl(
v,r_{\rm H}^{\rm exact}(v)
\bigr)
=
0
\end{equation}
about $r=r_{\rm H}^{\rm inst}$ gives
\begin{equation}
\delta r_{\rm H}
=
\frac{
2C_\mu^{\rm lag}e^{-\delta_{0{\rm H}}}
}{
\Mpl^2
r_{\rm H}^{\rm inst}
\mathcal F_0'
\bigl(
r_{\rm H}^{\rm inst}
\bigr)
}.
\label{eq:lag-horizon-displacement}
\end{equation}
Here
\begin{equation}
\delta_{0{\rm H}}
\equiv
\delta_0
\bigl(
r_{\rm H}^{\rm inst};
\lambda(v)
\bigr).
\end{equation}

Before the convention given in Eq.~\eqref{eq:no-MH-tangent-shift} is imposed,
$C_\mu^{\rm lag}$ is representation dependent. Under a general
tangent shift,
\begin{equation}
C_\mu^{\rm lag}
\longrightarrow
C_\mu^{\rm lag}
-
\alpha^A C_{\mu,A}.
\label{eq:C-lag-transformation-preview}
\end{equation}
In particular,
\begin{equation}
\delta_\alpha C_\mu^{\rm lag}
=
-\alpha^{M_{{\rm H},0}}
C_{\mu,M_{{\rm H},0}}.
\end{equation}
The Canonical Tangent Theorem gives
\begin{equation}
C_{\mu,M_{{\rm H},0}}
=
e^{\delta_{0{\rm H}}}
r_{\rm H}^{\rm inst}
\mathcal F_0'
\bigl(
r_{\rm H}^{\rm inst}
\bigr),
\end{equation}
and hence, at the order retained,
\begin{equation}
\delta_\alpha(\delta r_{\rm H})
=
-\frac{
2\alpha^{M_{{\rm H},0}}
}{\Mpl^2}.
\label{eq:lag-horizon-shift-transformation}
\end{equation}
On the other hand,
\begin{equation}
\delta_\alpha r_{\rm H}^{\rm inst}
=
\frac{
2\alpha^{M_{{\rm H},0}}
}{\Mpl^2}.
\end{equation}
Therefore
\begin{equation}
\delta_\alpha r_{\rm H}^{\rm exact}
=
\delta_\alpha r_{\rm H}^{\rm inst}
+
\delta_\alpha(\delta r_{\rm H})
=
0
\label{eq:exact-horizon-representation-invariance}
\end{equation}
to first order.

A tangent shift in the static horizon-mass direction consequently
changes only the decomposition of the exact horizon radius into an
instantaneous-background contribution and a lag-induced displacement.
It does not change the exact horizon itself. The convention
given in Eq.~\eqref{eq:no-MH-tangent-shift} therefore fixes a representation without
discarding any physical information.

Once this convention has been adopted,
$C_\mu^{\rm lag}$ may be interpreted as measuring, through
Eq.~\eqref{eq:lag-horizon-displacement}, the displacement of the exact
marginal horizon relative to the horizon of the chosen instantaneous
static representative. This relative displacement is
representation dependent; the exact horizon radius is not.

The only remaining tangential ambiguity is then the modulus direction,
\begin{equation}
\Psi_{\rm lag}
\sim
\Psi_{\rm lag}
-
\alpha^\Phi(v)
\left.
\partial_\Phi\Psi_0
\right|_{M_{{\rm H},0}},
\qquad
\dot\alpha^\Phi
=
{\mathcal O}_{\rm ad}(\epsilon^2).
\label{eq:remaining-Phi-ambiguity}
\end{equation}
This ambiguity is not a physical degree of freedom. It expresses the
non-uniqueness of the decomposition of a nearby dynamical
configuration into an instantaneous point on $\mathcal S$ and a
remainder transverse to the selected representative. A supplementary
condition, or \emph{slice}, is required to choose one representative
of the equivalence class
\eqref{eq:remaining-Phi-ambiguity}. The slice is constructed in
Section~\ref{sec:slice}, after the unsliced lag equation has been
derived.

%%%%%%%%%%%%%%%%%%%%%%%%%%%%%%%%%%%%%%%%%%%%%%%%%%%%%%%%%%%%%

\subsection{Reduction to the lag equation}
\label{subsec:part5_lag_equation}

Write
\begin{equation}
\Psi_{\rm lag}
=
(\eta,\xi,\mu)
\label{eq:lag-components}
\end{equation}
for the scalar, lapse, and mass components of the lag field. These
quantities are of first adiabatic order. Their time dependence arises
both from the slowly varying instantaneous background and from the
dynamics of the lag fields themselves.

Substituting Eq.~\eqref{eq:part5_lag_definition} into the exact system
\eqref{eq:system} and retaining first-order terms gives the
linearization about $\Psi_{\rm inst}$, sourced by minus the defects
computed in Section~\ref{subsec:part5_defect}. Derivatives with
respect to $v$ of the slowly varying background coefficients multiply
first-order lag fields and therefore contribute only at
${\mathcal O}_{\rm ad}(\epsilon^2)$ in the slowly forced sector.

The radial constraints remain homogeneous exactly under parameter
promotion, while the first-order mass-balance defect vanishes by
Eq.~\eqref{eq:first-order-mass-defect-zero}. Hence
\begin{align}
\xi_{,r}
&=
\frac{r}{\Mpl^2}\phi_0'\eta_{,r},
\label{eq:lag-xi-constraint}
\\
\mu_{,r}
&=
-\delta_0'\mu
+
\left(
f_\phi'
+
\delta_0'f_\phi
\right)\eta
+
f_\phi\eta_{,r},
\label{eq:lag-mu-r-constraint}
\\
\mu_{,v}
&=
f_\phi\eta_{,v}
+
{\mathcal O}_{\rm ad}(\epsilon^2).
\label{eq:lag-mu-v-constraint}
\end{align}
All background quantities in these equations are evaluated on the
instantaneous static representative $\lambda(v)$.

The lapse perturbation is reconstructed from
\begin{equation}
\xi_{,r}
=
\frac{r}{\Mpl^2}\phi_0'\eta_{,r},
\qquad
\xi(v,r_{\rm m})=0.
\label{eq:lag-xi-reconstruction}
\end{equation}
Using the background identity
\begin{equation}
f_\phi'
+
\delta_0'f_\phi
=
f_V,
\end{equation}
the radial mass constraint integrates at each fixed $v$ to
\begin{equation}
\mu(v,r)
=
f_\phi(r;\lambda(v))\eta(v,r)
+
C_\mu^{\rm lag}(v)
e^{-\delta_0(r;\lambda(v))}.
\label{eq:part5_4_mu_reduced}
\end{equation}
This establishes the form anticipated in
Eq.~\eqref{eq:lag-mass-form-preview}.

Here, $C_\mu^{\rm lag}(v)$ is the integration function associated with
the radial mass constraint. Eq.~\eqref{eq:part5_4_mu_reduced}
determines the radial dependence of $\mu$ once $\eta$ and
$C_\mu^{\rm lag}$ are specified, but the unsliced bulk equations do
not determine the instantaneous value of
$C_\mu^{\rm lag}$.

Within the slowly varying adiabatic sector, the temporal mass
constraint implies
\begin{equation}
\dot C_\mu^{\rm lag}
=
{\mathcal O}_{\rm ad}(\epsilon^2).
\label{eq:C-lag-slow}
\end{equation}
This estimate is consistent with the counting of
Eq.~\eqref{eq:adiabatic-counting-derivatives}, which applies to
derivatives inherited from the prescribed slow time dependence.

Indeed, on the slowly forced branch, the lag field has first-order
dimensionless amplitude,
\begin{equation}
\frac{\eta}{\Mpl}
=
{\mathcal O}_{\rm ad}(\epsilon),
\end{equation}
and inherits its time dependence from the exterior data and collective
coordinates. Consequently,
\begin{equation}
\frac{r_{{\rm H},0}}{\Mpl}\,
\eta_{,v}
=
{\mathcal O}_{\rm ad}(\epsilon^2).
\end{equation}
For example, if
\begin{equation}
\eta_{\rm ad}(v,r)
=
\dot\Phi_{\rm ext}(v)\,
q\bigl(r;\lambda(v)\bigr),
\end{equation}
then
\begin{equation}
\partial_v\eta_{\rm ad}
=
\ddot\Phi_{\rm ext}\,q
+
\dot\Phi_{\rm ext}\,
\dot\lambda^A\partial_Aq,
\end{equation}
and, after multiplication by \(r_{{\rm H},0}/\Mpl\), both terms are of
second adiabatic order.

This counting does not apply to a generic homogeneous perturbation with a
frequency of order the inverse black-hole scale. Such a mode may instead
satisfy
\begin{equation}
\frac{\eta_{\rm hom}}{\Mpl}
=
{\mathcal O}_{\rm ad}(\epsilon),
\qquad
\frac{r_{{\rm H},0}}{\Mpl}\,
\eta_{{\rm hom},v}
=
{\mathcal O}_{\rm ad}(\epsilon),
\end{equation}
and belongs to the full evolution problem rather than to the slowly forced
adiabatic branch isolated below.

The same cancellation mechanism as in Section~\ref{sec:operator_and_geometry} eliminates the lapse
perturbation from the scalar equation. The result is
\begin{equation}
2r^2\eta_{,vr}
+
2r\eta_{,v}
+
L\eta
=
S,
\label{eq:part5_4_reduced}
\end{equation}
where
\begin{equation}
L
=
\partial_r
\left(
e^{\delta_0}r^2\mathcal F_0\partial_r
\right)
-
\mathcal V_{\rm eff}(r)
\end{equation}
is the frozen-background radial operator defined in
Eq.~\eqref{eq:reduced_operator_definition}, evaluated on
$\lambda(v)$, and
\begin{align}
S(v,r)
=%{}&
\frac{
2C_\mu^{\rm lag}(v)
}{\Mpl^2}
\bigl(r\phi_0'\bigr)'
%\nonumber\\
%&
-
2\dot\Phi_{\rm ext}(v)
\left(
r^2\varphi_\Phi'
+
r\varphi_\Phi
\right).
\label{eq:part5_4_source}
\end{align}
The first term is generated by the residual integration function in
the radial mass constraint. The second is the physical forcing
required to cancel the scalar equation defect of the promoted static
family.

The replacement
\begin{equation}
\dot\lambda^\Phi
=
\dot\Phi_{\rm ext}
+
{\mathcal O}_{\rm ad}(\epsilon^2)
\end{equation}
is independent of the admissible representative, because
\begin{equation}
\dot\alpha^\Phi
=
{\mathcal O}_{\rm ad}(\epsilon^2).
\end{equation}

Under a general tangent shift, the lag components transform as
\begin{equation}
\eta
\longrightarrow
\eta-\alpha^A\varphi_A,
\qquad
\mu
\longrightarrow
\mu-\alpha^A m_A.
\end{equation}
Using the tangent mass decomposition
\begin{equation}
m_A
=
f_\phi\varphi_A
+
C_{\mu,A}e^{-\delta_0},
\end{equation}
one obtains
\begin{equation}
C_\mu^{\rm lag}
\longrightarrow
C_\mu^{\rm lag}
-
\alpha^A C_{\mu,A}.
\label{eq:C-lag-general-transformation}
\end{equation}
Before a representative convention is imposed,
$C_\mu^{\rm lag}$ is therefore not representation invariant.

The convention
given in Eq.~\eqref{eq:no-MH-tangent-shift} removes the static horizon-mass
component of this transformation. For the only remaining admissible
shift, the Canonical Tangent Theorem gives
\begin{equation}
C_{\mu,\Phi}=0,
\end{equation}
and hence
\begin{equation}
C_\mu^{\rm lag}
\longrightarrow
C_\mu^{\rm lag}
\label{eq:part5_Cmu_invariant_phi}
\end{equation}
under the surviving $\Phi$-tangential ambiguity. Once the
$M_{{\rm H},0}$ convention has been fixed,
$C_\mu^{\rm lag}$ is therefore invariant under all remaining
admissible changes of representative.

The surviving ambiguity is carried by the homogeneous tangent
component
\begin{equation}
\alpha^\Phi\varphi_\Phi
\end{equation}
of the scalar lag field, not by
$C_\mu^{\rm lag}$. The role of the local slice is to select a unique
representative of the equivalence class given in
Eq.~\eqref{eq:remaining-Phi-ambiguity} and, for that selected
representative, to express the otherwise undetermined integration
function $C_\mu^{\rm lag}$ as a functional of the scalar lag field.
The unsliced operator $L$ itself is independent of this choice.

Eq.~\eqref{eq:part5_4_reduced} is the central result of the
present section: it is a single linear evolution equation for the
scalar lag field, with one source fixed by the prescribed exterior
rolling and one residual contribution inherited from the radial mass
constraint.

The slice must be distinguished from the exterior matching condition.
The slice fixes the decomposition of a given physical solution into
collective-coordinate motion and a lag field. Matching instead
supplies physical boundary data from the exterior solution and thereby
helps determine the physical solution itself. These two operations are
carried out separately in Section~\ref{sec:slice}.

%%%%%%%%%%%%%%%%%%%%%%%%%%%%%%%%%%%%%%%%%%%%%%%%%%%%%%%%%%%%%%%%

\setcounter{section}{4}
\section{Local Slice, Sliced Operator, and Fredholm Solvability}
\label{sec:slice}

The previous section established the unsliced adiabatic evolution
equation. Its source contains a prescribed exterior contribution and
an integration function inherited from the radial mass constraint.
After the representation convention in the
$M_{{\rm H},0}$ direction has been fixed, the remaining ambiguity is
the freedom to shift the instantaneous representative along the
canonical $\Phi$ direction while leaving the exact physical solution
unchanged.

The purpose of the present section is to remove this remaining
ambiguity by constructing a local slice, to derive the resulting
sliced operator, and to formulate the completed radial
boundary-value problem.

%%%%%%%%%%%%%%%%%%%%%%%%%%%%%%%%%%%%%%%%%%%%%%%%%%%%%%%%%%%%%

\subsection{Construction of the slice}
\label{subsec:part5_slice}

We adopt the local slice
\begin{equation}
\mu(v,r_{\rm m})=0.
\label{eq:part5_5_slice}
\end{equation}
This condition selects one representative of the remaining
$\Phi$-tangential equivalence class. It is a condition on the
decomposition of the physical solution and should not be interpreted
as an additional physical boundary condition on the scalar field.

%%%%%%%%%%%%%%%%%%%%%%%%%%%%%%%%%%%%%%%%%%%%%%%%%%%%%%%%%%%%%

\subsubsection{Why the matching normalization does not define a slice}

A slice assigns a unique decomposition
\begin{equation}
\Psi
=
\Psi_{\rm inst}(\lambda(v))
+
\Psi_{\rm lag}
+
{\mathcal O}_{\rm ad}(\epsilon^2)
\end{equation}
to every dynamical configuration sufficiently close to
$\mathcal S$. It does not alter the unsliced lag equation, Eq.~\eqref{eq:part5_4_reduced}; it selects one element of the equivalence
class
\begin{equation}
\Psi_{\rm lag}
\sim
\Psi_{\rm lag}
-
\alpha^\Phi(v)
\left.
\partial_\Phi\Psi_0
\right|_{M_{{\rm H},0}}
\end{equation}
defined in Eq.~\eqref{eq:remaining-Phi-ambiguity}.

The matching normalization imposed in Section~\ref{sec:reduction},
\begin{equation}
\xi(v,r_{\rm m})=0,
\end{equation}
cannot serve as this slice. Under a general tangent shift,
\begin{equation}
\xi
\longrightarrow
\xi-\alpha^A\Delta_A,
\end{equation}
but
\begin{equation}
\Delta_A(r_{\rm m})
=
\partial_A\delta_0(r_{\rm m};\lambda)
=
0
\end{equation}
for every $A$, since
\begin{equation}
\delta_0(r_{\rm m};\lambda)=0
\end{equation}
identically throughout the static family. Hence,
$\xi(v,r_{\rm m})$ is invariant under tangent changes of
representative and cannot distinguish neighboring representatives; it
fixes instead the independent residual reparametrization of the
advanced-time coordinate.

The tangent space of the static family is two-dimensional, with
infinitesimal representative shifts
\begin{equation}
\alpha^A
=
\bigl(
\alpha^\Phi,
\alpha^{M_{{\rm H},0}}
\bigr).
\end{equation}
The second component has already been removed identically by the
representation convention
\begin{equation}
\alpha^{M_{{\rm H},0}}(v)\equiv0
\end{equation}
introduced in Section~\ref{subsec:part5_lag_slice}. As shown there,
this convention leaves the exact horizon radius unchanged: to first
order, the induced changes in $r_{\rm H}^{\rm inst}$ and in the
lag-induced displacement $\delta r_{\rm H}$ cancel.

The only remaining tangential freedom is therefore the slowly varying
function \(\alpha^\Phi(v)\), satisfying
\begin{equation}
\frac{\alpha^\Phi}{\Mpl}
=
{\mathcal O}_{\rm ad}(\epsilon),
\qquad
\frac{r_{{\rm H},0}}{\Mpl}\,
\dot\alpha^\Phi
=
{\mathcal O}_{\rm ad}(\epsilon^2).
\end{equation}
Changing the representative along this direction modifies the collective
velocity only at second adiabatic order and therefore leaves the
first-order exterior forcing unchanged. A single scalar slice can thus fix
this one remaining functional ambiguity.

%%%%%%%%%%%%%%%%%%%%%%%%%%%%%%%%%%%%%%%%%%%%%%%%%%%%%%%%%%%%%

\subsubsection{Consequences of the local slice}

Under the surviving tangential change of representative,
\begin{equation}
\Psi_{\rm lag}
\longrightarrow
\Psi_{\rm lag}
-
\alpha^\Phi(v)
\left.
\partial_\Phi\Psi_0
\right|_{M_{{\rm H},0}},
\end{equation}
the scalar and mass lag fields transform as
\begin{equation}
\eta
\longrightarrow
\eta-\alpha^\Phi(v)\varphi_\Phi,
\qquad
\mu
\longrightarrow
\mu-\alpha^\Phi(v)m_\Phi,
\label{eq:part5_slice_tangent_shift}
\end{equation}
where
\begin{equation}
\varphi_\Phi
=
\left.
\partial_\Phi\phi_0
\right|_{M_{{\rm H},0}},
\qquad
m_\Phi
=
\left.
\partial_\Phi m_0
\right|_{M_{{\rm H},0}}.
\end{equation}
In particular,
\begin{equation}
\mu(v,r_{\rm m})
\longrightarrow
\mu(v,r_{\rm m})
-
\alpha^\Phi(v)m_\Phi(r_{\rm m}).
\end{equation}
Provided
\begin{equation}
m_\Phi(r_{\rm m})\neq0,
\label{eq:part5_5_transversality}
\end{equation}
the condition
\begin{equation}
\mu(v,r_{\rm m})=0
\end{equation}
determines $\alpha^\Phi(v)$ uniquely.

Eq.~\eqref{eq:part5_5_transversality} is the transversality
condition between the local slice and the canonical tangent direction.
The evaluation functional
\begin{equation}
\Psi_{\rm lag}
\longmapsto
\mu(r_{\rm m})
\end{equation}
can distinguish neighboring representatives only if it is sensitive
to motion along
$\left.\partial_\Phi\Psi_0\right|_{M_{{\rm H},0}}$.
If
$m_\Phi(r_{\rm m})=0$, the condition
$\mu(r_{\rm m})=0$ is blind to that tangent direction and cannot
select a unique representative. A different slicing functional must
then be used.

Using Eq.~\eqref{eq:part5_4_mu_reduced} and the matching normalization
\begin{equation}
\delta_0(r_{\rm m})=0,
\end{equation}
the slice condition gives
\begin{equation}
C_\mu^{\rm lag}(v)
=
-f_\phi(r_{\rm m})\eta(v,r_{\rm m}).
\label{eq:part5_5_Cmu_fixed}
\end{equation}
The local slice therefore expresses the radial mass-integration
function as a boundary evaluation of the selected scalar lag
representative.

The Canonical Tangent Theorem gives
\begin{equation}
C_{\mu,\Phi}=0,
\end{equation}
so the tangent mass decomposition reduces to
\begin{equation}
m_\Phi(r)
=
f_\phi(r)\varphi_\Phi(r)
\label{eq:canonical-tangent-mass-identity}
\end{equation}
for every $r$. Consequently,
\begin{equation}
m_\Phi(r_{\rm m})
=
f_\phi(r_{\rm m})
\varphi_\Phi(r_{\rm m}).
\label{eq:part5_6_identity}
\end{equation}
The transversality condition therefore states equivalently that the
canonical tangent changes the matching-radius value of the static
mass profile.

The same identity clarifies the status of
$C_\mu^{\rm lag}$. Under the surviving admissible shift,
\begin{equation}
C_\mu^{\rm lag}
\longrightarrow
C_\mu^{\rm lag}
-
\alpha^\Phi C_{\mu,\Phi}
=
C_\mu^{\rm lag}.
\end{equation}
Thus, after the
$M_{{\rm H},0}$ representation convention has been fixed,
$C_\mu^{\rm lag}$ is invariant under the remaining
$\Phi$-tangential ambiguity. It is not itself the parameter of that
ambiguity.

Eq.~\eqref{eq:part5_5_Cmu_fixed} should therefore be interpreted
as follows. The value of $C_\mu^{\rm lag}$ is not changed by moving
within the surviving equivalence class, but its representation in
terms of the boundary value $\eta(r_{\rm m})$ becomes explicit once
the unique representative satisfying
$\mu(r_{\rm m})=0$ has been selected.

Before the slice is imposed, solving the radial mass constraint
determines $\mu$ only up to the function
$C_\mu^{\rm lag}(v)$. The slice supplies the missing relation between
this function and the scalar lag field. Note, however, that it does not impose an
independent physical boundary condition on the total scalar
configuration.\footnote{One could instead impose an orthogonality slice of the form
\begin{equation}
\langle\psi,\eta\rangle=0
\end{equation}
for a radial pairing adapted to $L$. We do not adopt such a condition
because it presupposes the functional-analytic structure (i.e. the domain,
pairing, and adjoint operator) that the subsequent analysis is meant
to establish. The local condition
\eqref{eq:part5_5_slice} requires no such prior structure.}

The slice is practical rather than canonical. Wherever
Eq.~\eqref{eq:part5_5_transversality} holds, it selects a unique
representative of the physical perturbation. A different admissible
slice would redistribute the same exact solution between the
collective coordinates and the lag field without changing the total
physical configuration.

The transversality condition
\(m_\Phi(r_{\rm m})\neq0\)
is therefore a property of the chosen slice rather than of the physical
solution itself. If, for a particular choice of matching radius or slicing
functional, this condition were to fail accidentally, the present local mass
slice would cease to distinguish neighboring representatives along the
canonical tangent direction. This does not signal a breakdown of the
adiabatic construction; it simply indicates that another transverse local
slice should be chosen. The Fredholm framework developed below depends only
on the existence of a transverse slice, not on the particular realization
adopted here.

%%%%%%%%%%%%%%%%%%%%%%%%%%%%%%%%%%%%%%%%%%%%%%%%%%%%%%%%%%%%%%%%%%%%

\subsubsection{The sliced operator}
\label{subsubsec:part5_6_Lsl}

Define the background source profile
\begin{equation}
K(r)
\equiv
\frac{2}{\Mpl^2}
\bigl(r\phi_0'\bigr)'.
\label{eq:part5-K-definition}
\end{equation}
All background quantities in this subsection are evaluated on the
instantaneous static representative $\lambda(v)$.

Using
\begin{equation}
C_\mu^{\rm lag}(v)
=
-f_\phi(r_{\rm m})\eta(v,r_{\rm m}),
\end{equation}
the source
\eqref{eq:part5_4_source} splits as
\begin{equation}
S
=
-
f_\phi(r_{\rm m})
K(r)\eta(v,r_{\rm m})
+
S_{\rm ext},
\end{equation}
where
\begin{equation}
K(r)
=
\frac{2}{\Mpl^2}
\left(r\phi_0'(r)\right)'.
\end{equation}
The first term is absorbed into the definition of the sliced operator,
whereas the second remains as the genuine inhomogeneous forcing generated by
the prescribed exterior rolling.

Isolating the spatial part defines the sliced operator
\begin{equation}
L_{\rm sl}\eta
\equiv
L\eta
+
f_\phi(r_{\rm m})
K(r)\eta(r_{\rm m}).
\label{eq:part5_6_Lsl_def}
\end{equation}
The sliced evolution equation is therefore
\begin{equation}
2r^2\eta_{,vr}
+
2r\eta_{,v}
+
L_{\rm sl}\eta
=
S_{\rm ext}.
\label{eq:sliced-evolution-equation}
\end{equation}
Using
\begin{equation}
2r^2\eta_{,vr}
+
2r\eta_{,v}
=
2r\,\partial_r
\bigl(r\eta_{,v}\bigr),
\end{equation}
it may equivalently be written as
\begin{equation}
2r\,\partial_r
\bigl(r\eta_{,v}\bigr)
+
L_{\rm sl}\eta
=
S_{\rm ext}(v,r).
\label{eq:part5_6_transport_form}
\end{equation}

The unsliced spatial problem is governed by the local second-order
differential operator $L$. The slice adds the term
\begin{equation}
f_\phi(r_{\rm m})
K(r)\eta(r_{\rm m}),
\end{equation}
which depends on the evaluation functional
\begin{equation}
{\rm ev}_{r_{\rm m}}
:
\eta
\longmapsto
\eta(r_{\rm m}).
\end{equation}
Thus $L_{\rm sl}$ is nonlocal as an operator on radial functions,
although its differential part remains local. In operator notation,
\begin{equation}
L_{\rm sl}
=
L
+
f_\phi(r_{\rm m})
\bigl|K\bigr\rangle
\bigl\langle{\rm ev}_{r_{\rm m}}\bigr|.
\label{eq:part5_6_rank_one}
\end{equation}
The additional term has rank at most one. It is not a Robin boundary
condition, nor any other local relation between $\eta(r_{\rm m})$ and
$\eta_{,r}(r_{\rm m})$; it is a finite-rank modification of the bulk
operator induced by the choice of representative.

A complete operator problem also requires a domain. The differential
expression is considered on the radial interval
\begin{equation}
r_{\rm H}^{\rm inst}(v)
\leq
r
\leq
r_{\rm m},
\end{equation}
where $r_{\rm H}^{\rm inst}$ is the horizon of the instantaneous
static representative. At the inner endpoint, the lag fields must
satisfy the linearized future-horizon regularity condition obtained
by expanding the exact relation
Eq.~\eqref{eq:131} about the instantaneous static horizon. The lapse
normalization
\begin{equation}
\xi(v,r_{\rm m})=0
\end{equation}
and the mass slice
\begin{equation}
\mu(v,r_{\rm m})=0
\end{equation}
have already been imposed. No physical outer condition on
$\eta(r_{\rm m})$ has yet been supplied; that condition belongs to
the exterior matching problem.

The sliced evolution equation and the frozen radial problem should be
distinguished. Eq.~\eqref{eq:part5_6_transport_form} is first
order in $v$ and contains a radial derivative acting on
$\eta_{,v}$. Its complete solution requires time-dependent initial and
boundary data and may contain freely excited homogeneous
perturbations. The Fredholm analysis below concerns instead the
instantaneous radial operator obtained after restricting to the
slowly forced adiabatic sector.

We next examine the action of the slice on the canonical tangent
mode. The tangent identity of Section~\ref{sec:operator_and_geometry} is
\begin{equation}
L\varphi_A
=
C_{\mu,A}K.
\label{eq:part5_6_L_tangent}
\end{equation}
In the adapted static coordinates
\begin{equation}
\bigl(
\Phi,M_{{\rm H},0}
\bigr),
\end{equation}
the Canonical Tangent Theorem gives
\begin{equation}
C_{\mu,\Phi}=0,
\qquad
L\varphi_\Phi=0.
\end{equation}
Applying the sliced operator therefore yields
\begin{equation}
L_{\rm sl}\varphi_\Phi
=
f_\phi(r_{\rm m})
K(r)\varphi_\Phi(r_{\rm m}).
\label{eq:part5_6_Lsl_phi}
\end{equation}

The tangent mass decomposition holds identically for every $r$:
\begin{equation}
m_A(r)
=
f_\phi(r)\varphi_A(r)
+
C_{\mu,A}e^{-\delta_0(r)}.
\end{equation}
Specializing to the canonical direction reproduces
Eq.~\eqref{eq:canonical-tangent-mass-identity}, and evaluation at
$r=r_{\rm m}$ gives
Eq.~\eqref{eq:part5_6_identity}. Consequently,
\begin{equation}
L_{\rm sl}\varphi_\Phi
=
m_\Phi(r_{\rm m})K(r).
\label{eq:Lsl-canonical-tangent-exact}
\end{equation}

If both
\begin{equation}
m_\Phi(r_{\rm m})\neq0
\label{eq:slice-transversality-recalled}
\end{equation}
and
\begin{equation}
K\not\equiv0
\label{eq:K-nondegeneracy}
\end{equation}
hold, then
\begin{equation}
L_{\rm sl}\varphi_\Phi\neq0.
\label{eq:part5_6_phi_displaced}
\end{equation}
Thus the canonical tangent mode
$\varphi_\Phi\in\ker L$ is displaced from the kernel of the sliced
operator. This conclusion requires both the transversality of the
slice and the nontriviality of the finite-rank update. Neither
condition follows from the other.

The displacement of $\varphi_\Phi$ does not imply that
$L_{\rm sl}$ is injective. Since the slice produces a finite-rank
perturbation, the Fredholm index may remain unchanged and the
one-dimensional homogeneous freedom may be represented by a
different generator, denoted below by $\psi_\star$. If
$K\equiv0$, no relocation occurs and
\begin{equation}
L_{\rm sl}=L.
\end{equation}

At a fixed advanced time, the corresponding quasistatic radial
equation is
\begin{equation}
L_{\rm sl}\eta
=
S_{\rm ext}.
\label{eq:part5_6_quasistatic}
\end{equation}
The subsequent Fredholm analysis applies to this frozen radial
problem, completed by horizon regularity and an exterior matching
functional. It does not by itself establish existence or uniqueness
for the full time-dependent initial-boundary-value problem.

%%%%%%%%%%%%%%%%%%%%%%%%%%%%%%%%%%%%%%%%%%%%%%%%%%%%%%%%%%%%%

\subsection{The adiabatic branch}
\label{subsec:part5_7_adiabatic_branch}

The preceding subsections reduced the first-order near-zone equations
to a sliced evolution equation for the lag field. We now restrict the
solution space to the sector driven by the prescribed slowly varying
exterior modulus.

The sliced evolution equation is
\begin{equation}
2r^2\eta_{,vr}
+
2r\eta_{,v}
+
L_{\rm sl}\eta
=
S_{\rm ext},
\label{eq:part5_7_full_sliced}
\end{equation}
with
\begin{equation}
S_{\rm ext}(v,r)
=
-2\dot\Phi_{\rm ext}(v)
\left(
r^2\varphi_\Phi'
+
r\varphi_\Phi
\right).
\label{eq:part5_7_external_source}
\end{equation}
Its general solution may be decomposed schematically as
\begin{equation}
\eta
=
\eta_{\rm ad}
+
\eta_{\rm hom},
\label{eq:part5_7_decomposition}
\end{equation}
where $\eta_{\rm ad}$ is a particular response to the prescribed
source and $\eta_{\rm hom}$ satisfies
\begin{equation}
2r^2(\eta_{\rm hom})_{,vr}
+
2r(\eta_{\rm hom})_{,v}
+
L_{\rm sl}\eta_{\rm hom}
=
0.
\label{eq:part5_7_homogeneous_evolution}
\end{equation}

The homogeneous sector contains free perturbations of the
instantaneous black hole background, including transient and
ringdown-type contributions not determined by the cosmological
forcing. Such perturbations need not evolve on the prescribed slow
timescale and are not required to satisfy the adiabatic counting used
for the forced branch. Their treatment belongs to the complete
initial-boundary-value problem.

For the slowly forced branch, the lag field has first-order
dimensionless amplitude,
\begin{equation}
\frac{\eta_{\rm ad}}{\Mpl}
=
{\mathcal O}_{\rm ad}(\epsilon),
\label{eq:part5_7_eta-amplitude}
\end{equation}
and all of its time dependence is inherited from the slowly evolving
exterior data and collective coordinates. Consequently,
\begin{equation}
\frac{r_{{\rm H},0}}{\Mpl}\,
\partial_v\eta_{\rm ad}
=
{\mathcal O}_{\rm ad}(\epsilon^2).
\label{eq:part5_7_eta_counting}
\end{equation}
For example, a response of the form
\begin{equation}
\eta_{\rm ad}(v,r)
=
\dot\Phi_{\rm ext}(v)\,
q\bigl(r;\lambda(v)\bigr)
\end{equation}
satisfies
\begin{equation}
\partial_v\eta_{\rm ad}
=
\ddot\Phi_{\rm ext}\,q
+
\dot\Phi_{\rm ext}\,
\dot\lambda^A\partial_Aq.
\end{equation}
After multiplication by \(r_{{\rm H},0}/\Mpl\), both terms are of
second adiabatic order.

It follows that the transport terms evaluated on the slowly forced
branch satisfy
\begin{equation}
\frac{1}{\Mpl}
\left[
2r^2(\eta_{\rm ad})_{,vr}
+
2r(\eta_{\rm ad})_{,v}
\right]
=
{\mathcal O}_{\rm ad}(\epsilon^2).
\end{equation}
They therefore do not contribute to the first-order equation. The leading
adiabatic response obeys
\begin{equation}
L_{\rm sl}\eta_{\rm ad}
=
S_{\rm ext}.
\label{eq:part5_7_adiabatic_radial}
\end{equation}
This is not an independent approximation to the spatial operator. It is
the first-order restriction of the sliced evolution equation to the slowly
forced branch.

The reduction
\begin{equation}
\eqref{eq:part5_7_full_sliced}
\quad\longrightarrow\quad
\eqref{eq:part5_7_adiabatic_radial}
\end{equation}
must not be applied to a generic homogeneous perturbation. A free mode with
a frequency of order the inverse black-hole scale may instead satisfy
\begin{equation}
\frac{\eta_{\rm hom}}{\Mpl}
=
{\mathcal O}_{\rm ad}(\epsilon),
\qquad
\frac{r_{{\rm H},0}}{\Mpl}\,
\partial_v\eta_{\rm hom}
=
{\mathcal O}_{\rm ad}(\epsilon),
\end{equation}
so that its time-derivative terms remain of the same perturbative order as
\(L_{\rm sl}\eta_{\rm hom}\).

The radial equation
Eq.~\eqref{eq:part5_7_adiabatic_radial} is second order and requires
two radial conditions. The inner condition is regularity at the
future horizon of the instantaneous static representative,
\begin{equation}
r=r_{\rm H}^{\rm inst}(v).
\end{equation}
More precisely, $\eta_{\rm ad}$ must belong to the horizon-regular
domain obtained by linearizing the exact horizon relation
Eq.~\eqref{eq:131} about
$r_{\rm H}^{\rm inst}$. The second condition is supplied by matching
the near-zone solution to the slowly varying exterior at
$r=r_{\rm m}$.

Without specifying a particular exterior model, we represent the
outer matching condition by a bounded linear functional
\begin{equation}
\mathcal B_{\rm m}\eta_{\rm ad}
=
J_{\rm ext}(v),
\label{eq:part5_7_matching-condition}
\end{equation}
with, for a local Robin-type matching law,
\begin{equation}
\mathcal B_{\rm m}
=
A_{\rm m}\,
{\rm ev}_{r_{\rm m}}
+
B_{\rm m}\,
\partial_r{\rm ev}_{r_{\rm m}}.
\label{eq:part5_7_matching}
\end{equation}
The coefficients $A_{\rm m}$ and $B_{\rm m}$, as well as the datum
$J_{\rm ext}$, are determined by the exterior problem. Dirichlet,
Neumann, and Robin conditions arise as special cases. More general
bounded linear matching functionals may also be admitted by the
functional-analytic construction.

The matching condition must not be conflated with the local slice
\begin{equation}
\mu(v,r_{\rm m})=0.
\end{equation}
The slice selects a representative of
\begin{equation}
\eta
\sim
\eta-\alpha^\Phi\varphi_\Phi
\end{equation}
and, for that representative, gives
\begin{equation}
C_\mu^{\rm lag}
=
-f_\phi(r_{\rm m})\eta(r_{\rm m}).
\end{equation}
Matching instead specifies how the physical near-zone configuration
is connected to the exterior solution. The slice changes only the
decomposition of a fixed physical solution; the matching data help
determine the physical solution itself.

The action of the rank-one term depends on the completed
boundary-value problem. If a homogeneous Dirichlet domain is imposed,
\begin{equation}
\eta(r_{\rm m})=0,
\end{equation}
then the rank-one term vanishes on that homogeneous domain. For
inhomogeneous Dirichlet data, its value is prescribed and can be
transferred to the source. For Neumann or Robin matching,
$\eta(r_{\rm m})$ remains part of the unknown and the finite-rank term
acts nontrivially.

The slice transversality condition
\begin{equation}
m_\Phi(r_{\rm m})\neq0
\label{eq:part5_7_transversality_again}
\end{equation}
is independent of the matching data. It states that the mass
evaluation used to define the local slice is sensitive to the
canonical tangent direction
\begin{equation}
\left.
\partial_\Phi\Psi_0
\right|_{M_{{\rm H},0}}.
\end{equation}
Equivalently, the slice intersects the residual
$\Phi$-tangent orbits transversally.

The physical slowly forced response should ultimately be selected by
a causal prescription: regular ingoing behavior at the future
horizon, retarded response to the exterior forcing, and no
independently prescribed incoming free perturbation.
Within such a retarded construction, freely excited black hole modes are not
specified independently but are fixed by the past history of the forcing and
the chosen initial data. They therefore decouple from the leading
quasistatic response considered here. The present Fredholm analysis is
restricted to this causally selected adiabatic branch and does not address
the full spectrum of freely propagating homogeneous perturbations. Such a
prescription determines a particular solution of the full evolution
problem and fixes the homogeneous contribution appropriate to the
physical history. It is more precise than simply setting
$\eta_{\rm hom}=0$, since switching on the source can itself generate
transient contributions.

Establishing this causal selection requires analysis of the complete
initial-boundary-value problem
Eq.~\eqref{eq:part5_7_full_sliced}. It cannot be derived from the
quasistatic radial equation alone and is not attempted here. The
present construction identifies instead the instantaneous radial
problem obeyed by the leading slowly forced response after separating
the non-adiabatic free sector.

The completed leading radial problem is therefore
\begin{equation}
L_{\rm sl}\eta_{\rm ad}
=
S_{\rm ext},
\qquad
\eta_{\rm ad}
\ \text{regular at }r=r_{\rm H}^{\rm inst},
\qquad
\mathcal B_{\rm m}\eta_{\rm ad}
=
J_{\rm ext}(v).
\label{eq:part5_7_radial_bvp}
\end{equation}
Existence and uniqueness are not automatic because
$L_{\rm sl}$ may possess a nontrivial homogeneous kernel. The
Fredholm analysis of the completed operator is given in the following
subsection.

At second adiabatic order, the terms
\begin{equation}
2r^2(\eta_{\rm ad})_{,vr}
+
2r(\eta_{\rm ad})_{,v}
\end{equation}
computed from the leading solution become a definite source and must
be restored. The quasistatic radial equation is therefore the leading
member of a systematic adiabatic expansion, not a replacement for the
full evolution equation.

%%%%%%%%%%%%%%%%%%%%%%%%%%%%%%%%%%%%%%%%%%%%%%%%%%%%%%%%%%%%% 
%%%%%%%%%%%%%%%%%%%%%%%%%%%%%%%%%%%%%%%%%%%%%%%%%%%%%%%%%%%%%

\subsection{The local slice and solvability theorem}
\label{subsec:part5_local_slice_solvability}

We now complete the leading radial problem at a fixed advanced time
$v$. The canonical tangent direction identified in
Theorem~\ref{thm:canonical_tangent} governs the leading motion of the
instantaneous static representative introduced in
Section~\ref{subsec:part5_promotion}. The defect of this promoted
representative generates the lag field, whose unsliced scalar equation
is Eq.~\eqref{eq:part5_4_reduced}.

The local slice
\begin{equation}
\mu(v,r_{\rm m})=0
\end{equation}
constructed in Section~\ref{subsec:part5_slice} fixes the residual
$\Phi$-tangential ambiguity and yields
\begin{equation}
C_\mu^{\rm lag}(v)
=
-f_\phi(r_{\rm m})\eta(v,r_{\rm m}).
\end{equation}
The corresponding sliced operator is
\begin{equation}
L_{\rm sl}
=
L
+
f_\phi(r_{\rm m})
\bigl|K\bigr\rangle
\bigl\langle{\rm ev}_{r_{\rm m}}\bigr|,
\qquad
K(r)
\equiv
\frac{2}{\Mpl^2}
\bigl(r\phi_0'\bigr)'.
\label{eq:part5_solvability_Lsl_again}
\end{equation}
All background quantities in this subsection are evaluated on the
instantaneous static representative at the fixed value of $v$ under
consideration.

Let
\begin{equation}
\mathcal B_{\rm m}:X\longrightarrow\mathbb R
\end{equation}
be the bounded linear functional encoding the outer matching condition
of Eq.~\eqref{eq:part5_7_matching}. The completed radial
boundary-value operator is
\begin{equation}
\mathcal A
\equiv
\bigl(
L_{\rm sl},
\mathcal B_{\rm m}
\bigr)
:
X
\longrightarrow
Y\times\mathbb R.
\label{eq:part5_physical_operator}
\end{equation}
The leading adiabatic problem is therefore
\begin{equation}
\mathcal A\eta_{\rm ad}
=
\bigl(
S_{\rm ext},
J_{\rm ext}
\bigr).
\label{eq:part5_physical_problem}
\end{equation}

The kernel structure needed below follows from the analytic properties
of the unsliced operator and from the finite-rank modification induced
by the local slice.

%%%%%%%%%%%%%%%%%%%%%%%%%%%%%%%%%%%%%%%%%%%%%%%%%%%%%%%%%%%%%

\begin{lemma}[Kernel of the sliced operator]
\label{lem:sliced_kernel}

Assume that the functional-analytic results of Appendix~\ref{app:operator} apply, so
that
\begin{equation}
L:X\longrightarrow Y
\end{equation}
is a bounded, surjective Fredholm operator of index one, and assume
that the canonical tangent field is a nonzero admissible element of
$X$ satisfying
\begin{equation}
\ker L
=
\operatorname{span}\{\varphi_\Phi\}.
\label{eq:unsliced-kernel-assumption}
\end{equation}
Suppose also that the local slice is transverse to the canonical
tangent direction,
\begin{equation}
m_\Phi(r_{\rm m})\neq0.
\label{eq:part5_lemma_slice_transversality}
\end{equation}

Then
\begin{equation}
L_{\rm sl}:X\longrightarrow Y
\end{equation}
is surjective and Fredholm of index one. Consequently,
\begin{equation}
\dim\ker L_{\rm sl}=1.
\label{eq:part5_kernel_dimension_one}
\end{equation}
There therefore exists a nonzero
$\psi_\star\in X$, unique up to multiplication by a nonzero constant, such
that
\begin{equation}
\ker L_{\rm sl}
=
\operatorname{span}\{\psi_\star\}.
\label{eq:part5_kernel_psi}
\end{equation}

If, in addition, the finite-rank update is nontrivial,
\begin{equation}
u
\equiv
f_\phi(r_{\rm m})K
\not\equiv0,
\label{eq:part5_nontrivial_rank_one}
\end{equation}
then
\begin{equation}
\psi_\star\not\propto\varphi_\Phi.
\end{equation}
Under the transversality condition,
Eq.~\eqref{eq:part5_lemma_slice_transversality},
the condition Eq.~\eqref{eq:part5_nontrivial_rank_one} is equivalent to
$K\not\equiv0$.
\end{lemma}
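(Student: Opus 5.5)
The plan is to treat $L_{\rm sl}$ as a rank-one perturbation of $L$,
\begin{equation*}
L_{\rm sl}=L+u\otimes{\rm ev}_{r_{\rm m}},
\qquad
u\equiv f_\phi(r_{\rm m})K,
\end{equation*}
and to prove surjectivity directly using $\varphi_\Phi$. The Fredholm and index statements then follow from the standard stability of the index under compact perturbations. Everything is done at a fixed advanced time $v$.

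\textbf{Step 1 (well-definedness and Fredholm index).} I will first check from the functional setting of Appendix~\ref{app:operator} that the update is a bounded rank-one operator $X\to Y$. This needs two facts.
\begin{itemize}
\item The evaluation functional ${\rm ev}_{r_{\rm m}}$ is bounded on $X$. This holds for any $X$ continuously embedded in $C^0([r_{\rm H}^{\rm inst},r_{\rm m}])$, for example an $H^2$- or $C^2$-type space.
\item The profile $K=\frac{2}{\Mpl^2}(r\phi_0')'$ lies in $Y$. This follows from the regularity assumed for the background profiles.
\end{itemize}
Then $u\otimes{\rm ev}_{r_{\rm m}}$ is compact, so $L_{\rm sl}$ is Fredholm with $\operatorname{ind}L_{\rm sl}=\operatorname{ind}L=1$ \cite{Kato:1995}. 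I expect this step to be the main obstacle. It is not hard, but it is where the hypotheses on $X$ and $Y$ actually enter. If $X$ were too weak for point evaluation at $r_{\rm m}$ to be continuous, the whole construction would fail. So I would state explicitly which properties of the appendix spaces are being used.

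\textbf{Step 2 (transversality forces $\varphi_\Phi(r_{\rm m})\neq0$ and $f_\phi(r_{\rm m})\neq0$).} By Eq.~\eqref{eq:part5_6_identity},
\begin{equation*}
m_\Phi(r_{\rm m})=f_\phi(r_{\rm m})\,\varphi_\Phi(r_{\rm m}).
\end{equation*}
Hence the transversality hypothesis \eqref{eq:part5_lemma_slice_transversality} gives both factors nonzero. This immediately gives the final claim of the lemma: $u\not\equiv0$ if and only if $K\not\equiv0$.

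\textbf{Step 3 (surjectivity).} Let $y\in Y$. Since $L$ is surjective, pick $x\in X$ with $Lx=y$. Set
\begin{equation*}
z=x+c\,\varphi_\Phi.
\end{equation*}
Using $L\varphi_\Phi=0$ from Theorem~\ref{thm:canonical_tangent}, we get
\begin{equation*}
L_{\rm sl}z=y+u\bigl(x(r_{\rm m})+c\,\varphi_\Phi(r_{\rm m})\bigr).
\end{equation*}
Choose $c=-x(r_{\rm m})/\varphi_\Phi(r_{\rm m})$, which is allowed by Step~2. Then $L_{\rm sl}z=y$. So the cokernel of $L_{\rm sl}$ is trivial, and
\begin{equation*}
\dim\ker L_{\rm sl}=\operatorname{ind}L_{\rm sl}+\dim\operatorname{coker}L_{\rm sl}=1.
\end{equation*}
This yields a generator $\psi_\star$, unique up to a nonzero scalar.

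For an explicit generator, use surjectivity of $L$ again to choose $w\in X$ with $Lw=u$. Consider
\begin{equation*}
\psi_\star=\varphi_\Phi(r_{\rm m})\,w-\bigl(1+w(r_{\rm m})\bigr)\varphi_\Phi .
\end{equation*}
It is nonzero because its $\varphi_\Phi$-coefficient cannot vanish together with the $w$-coefficient $\varphi_\Phi(r_{\rm m})\neq0$ unless $w\propto\varphi_\Phi$; that case is handled directly. A direct substitution shows $L_{\rm sl}\psi_\star=0$.

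\textbf{Step 4 (relocation).} Suppose $\psi_\star\propto\varphi_\Phi$. Then
\begin{equation*}
0=L_{\rm sl}\varphi_\Phi=\varphi_\Phi(r_{\rm m})\,u
\end{equation*}
by Eq.~\eqref{eq:part5_6_Lsl_phi}. Since $\varphi_\Phi(r_{\rm m})\neq0$, this forces $u\equiv0$, which contradicts \eqref{eq:part5_nontrivial_rank_one}. Hence $\psi_\star\not\propto\varphi_\Phi$ whenever the finite-rank update is nontrivial.
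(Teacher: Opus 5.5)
Your proof is correct and follows the paper's argument essentially step for step: a rank-one compact perturbation $u\otimes{\rm ev}_{r_{\rm m}}$ with index stability, the factorization $m_\Phi(r_{\rm m})=f_\phi(r_{\rm m})\varphi_\Phi(r_{\rm m})$ to get both factors nonzero, surjectivity by adding $c\,\varphi_\Phi$ to cancel the evaluation at $r_{\rm m}$, and displacement via $L_{\rm sl}\varphi_\Phi=\varphi_\Phi(r_{\rm m})\,u$. Your explicit generator $\psi_\star=\varphi_\Phi(r_{\rm m})w-(1+w(r_{\rm m}))\varphi_\Phi$ is a correct addition not found in the paper. It is nonzero simply because $\psi_\star(r_{\rm m})=-\varphi_\Phi(r_{\rm m})\neq0$, and one further line gives $\dim\ker L_{\rm sl}=1$ without the index theorem: if $L_{\rm sl}\eta=0$, then $\eta+\eta(r_{\rm m})w\in\ker L$, and evaluating at $r_{\rm m}$ forces $\eta\propto\psi_\star$.
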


\begin{proof}

Write
\begin{equation}
L_{\rm sl}
=
L+u\otimes\ell,
\qquad
u
\equiv
f_\phi(r_{\rm m})K,
\qquad
\ell
\equiv
{\rm ev}_{r_{\rm m}}.
\label{eq:Lsl-rank-one-proof}
\end{equation}

\emph{Surjectivity.}
Let $g\in Y$ be arbitrary. Since $L$ is surjective, there exists
$\eta_0\in X$ such that
\begin{equation}
L\eta_0=g.
\end{equation}
Because
\begin{equation}
\ker L
=
\operatorname{span}\{\varphi_\Phi\},
\end{equation}
every function
\begin{equation}
\eta
=
\eta_0+a\varphi_\Phi,
\qquad
a\in\mathbb R,
\end{equation}
also satisfies
\begin{equation}
L\eta=g.
\end{equation}

The canonical tangent mass identity
Eq.~\eqref{eq:part5_6_identity} gives
\begin{equation}
m_\Phi(r_{\rm m})
=
f_\phi(r_{\rm m})
\varphi_\Phi(r_{\rm m})
=
f_\phi(r_{\rm m})
\ell(\varphi_\Phi).
\label{eq:transversality-product}
\end{equation}
The transversality assumption therefore implies separately that
\begin{equation}
f_\phi(r_{\rm m})\neq0,
\qquad
\ell(\varphi_\Phi)
=
\varphi_\Phi(r_{\rm m})
\neq0.
\label{eq:part5_transversality_components}
\end{equation}
We may choose
\begin{equation}
a
=
-\frac{\ell(\eta_0)}
{\ell(\varphi_\Phi)},
\end{equation}
for which
\begin{equation}
\ell(\eta)=0.
\end{equation}
It follows that
\begin{equation}
L_{\rm sl}\eta
=
L\eta+u\,\ell(\eta)
=
g.
\end{equation}
Since $g\in Y$ was arbitrary, $L_{\rm sl}$ is surjective.

The mechanism is simple: the surjectivity of $L$ provides a particular
solution, while the one-dimensional kernel of $L$ supplies exactly the
freedom required to set its evaluation at $r_{\rm m}$ to zero and
thereby eliminate the rank-one correction.

\emph{Fredholm index and kernel dimension.}
Because $\ell\in X^*$ and $u\in Y$, the operator
\begin{equation}
u\otimes\ell:X\longrightarrow Y
\end{equation}
is bounded:
\begin{equation}
\left\|
(u\otimes\ell)\eta
\right\|_Y
=
|\ell(\eta)|\,\|u\|_Y
\leq
\|\ell\|_{X^*}\,
\|u\|_Y\,
\|\eta\|_X.
\end{equation}
Its range is contained in
$\operatorname{span}\{u\}$, and hence it has rank at most one. It is
therefore compact.

Stability of the Fredholm index under compact perturbations gives
\begin{equation}
\operatorname{ind}L_{\rm sl}
=
\operatorname{ind}L
=
1.
\label{eq:part5_index_stability}
\end{equation}
Since $L_{\rm sl}$ is surjective,
\begin{equation}
\dim\operatorname{coker}L_{\rm sl}=0,
\end{equation}
and therefore
\begin{equation}
\dim\ker L_{\rm sl}
=
\operatorname{ind}L_{\rm sl}
=
1.
\end{equation}
This proves Eq.~\eqref{eq:part5_kernel_psi}.

\emph{Displacement of the canonical tangent.}
The Canonical Tangent Theorem gives
\begin{equation}
L\varphi_\Phi=0.
\end{equation}
Consequently,
\begin{align}
L_{\rm sl}\varphi_\Phi
&=
u\,\ell(\varphi_\Phi)
\nonumber\\
&=
f_\phi(r_{\rm m})
K(r)
\varphi_\Phi(r_{\rm m})
\nonumber\\
&=
m_\Phi(r_{\rm m})K(r).
\label{eq:part5_Lsl_phi_exact}
\end{align}
If
\begin{equation}
m_\Phi(r_{\rm m})\neq0,
\qquad
K\not\equiv0,
\end{equation}
then
\begin{equation}
L_{\rm sl}\varphi_\Phi\neq0.
\end{equation}
Thus
\begin{equation}
\varphi_\Phi\notin\ker L_{\rm sl}.
\end{equation}
Since the sliced kernel is one-dimensional, its generator $\psi_\star$
cannot be proportional to $\varphi_\Phi$.
\end{proof}

%%%%%%%%%%%%%%%%%%%%%%%%%%%%%%%%%%%%%%%%%%%%%%%%%%%%%%%%%%%%%

\begin{theorem}[Slice and Solvability Theorem]
\label{thm:solvability}

Assume the hypotheses of Lemma~\ref{lem:sliced_kernel}, and let
$\psi_\star$ generate the one-dimensional kernel of $L_{\rm sl}$. Then
\begin{equation}
\ker\mathcal A=\{0\}
\quad\Longleftrightarrow\quad
\mathcal B_{\rm m}\psi_\star\neq0.
\label{eq:part5_solvability_equivalence}
\end{equation}

The completed operator
\begin{equation}
\mathcal A
=
\bigl(
L_{\rm sl},
\mathcal B_{\rm m}
\bigr)
:
X
\longrightarrow
Y\times\mathbb R
\end{equation}
is Fredholm of index zero. Whenever the equivalent conditions in
Eq.~\eqref{eq:part5_solvability_equivalence} hold, $\mathcal A$ is
bijective. Consequently, for every
\begin{equation}
\bigl(
S_{\rm ext},
J_{\rm ext}
\bigr)
\in
Y\times\mathbb R,
\end{equation}
the completed adiabatic boundary-value problem
\begin{equation}
\mathcal A\eta_{\rm ad}
=
\bigl(
S_{\rm ext},
J_{\rm ext}
\bigr)
\end{equation}
has a unique solution
\begin{equation}
\eta_{\rm ad}\in X.
\end{equation}
\end{theorem}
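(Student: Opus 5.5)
The argument is pure linear algebra layered on Lemma~\ref{lem:sliced_kernel}, which already supplies that $L_{\rm sl}:X\to Y$ is surjective, Fredholm of index one, with $\ker L_{\rm sl}=\operatorname{span}\{\psi_\star\}$. I would first establish the kernel equivalence~\eqref{eq:part5_solvability_equivalence}, then the Fredholm index of $\mathcal A$, and finally combine the two to obtain bijectivity.

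\emph{Kernel.} An element $\eta\in\ker\mathcal A$ satisfies $L_{\rm sl}\eta=0$ and $\mathcal B_{\rm m}\eta=0$. The first condition forces $\eta=c\,\psi_\star$ for some $c\in\mathbb R$, and the second then reads $c\,\mathcal B_{\rm m}\psi_\star=0$. If $\mathcal B_{\rm m}\psi_\star\neq0$, then $c=0$ and the kernel is trivial. Conversely, if $\mathcal B_{\rm m}\psi_\star=0$, then $\psi_\star\neq0$ lies in $\ker\mathcal A$. This proves~\eqref{eq:part5_solvability_equivalence}.

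\emph{Index.} I would write $\mathcal A=\iota\circ L_{\rm sl}+(0,\mathcal B_{\rm m})$, where $\iota:Y\to Y\times\mathbb R$, $g\mapsto(g,0)$. The operator $\iota\circ L_{\rm sl}$ has the same kernel as $L_{\rm sl}$, and its range is $Y\times\{0\}$, which is closed and has codimension one. It is therefore Fredholm with index $1-1=0$. The perturbation $\eta\mapsto(0,\mathcal B_{\rm m}\eta)$ is bounded, since $\mathcal B_{\rm m}$ is a bounded functional, and has rank at most one, so it is compact. Invoking stability of the index under compact perturbations, exactly as in the proof of Lemma~\ref{lem:sliced_kernel}, gives $\operatorname{ind}\mathcal A=0$.

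\emph{Bijectivity.} When $\mathcal B_{\rm m}\psi_\star\neq0$, the kernel is trivial, so index zero forces $\dim\operatorname{coker}\mathcal A=0$. Since the range of a Fredholm operator is closed, $\mathcal A$ is surjective, hence bijective. The unique solution $\eta_{\rm ad}$ for arbitrary data $(S_{\rm ext},J_{\rm ext})$ follows immediately.

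I would also check surjectivity directly, which avoids the index argument. Given $(S,J)$, Lemma~\ref{lem:sliced_kernel} gives some $\eta_0$ with $L_{\rm sl}\eta_0=S$. Setting
\[
\eta=\eta_0+\frac{J-\mathcal B_{\rm m}\eta_0}{\mathcal B_{\rm m}\psi_\star}\,\psi_\star
\]
then solves both equations.

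The only step needing care, and the one I expect to be the main obstacle, is the bookkeeping that $\iota\circ L_{\rm sl}$ is Fredholm into the product space: one must confirm that its range is closed with codimension exactly one. This holds because $L_{\rm sl}$ is surjective. Everything else is formal.
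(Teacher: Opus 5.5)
Your proposal is correct and follows the paper's proof essentially step by step. You identify $\ker\mathcal A$ as $\ker L_{\rm sl}\cap\ker\mathcal B_{\rm m}$, obtain bijectivity from the scalar matching equation for the coefficient of $\psi_\star$, and compute the index by treating $\mathcal A$ as a one-dimensional codomain extension of $L_{\rm sl}$ plus a rank-one perturbation; your decomposition $\mathcal A=\iota\circ L_{\rm sl}+(0,\mathcal B_{\rm m})$ just makes explicit the index argument the paper states in one line.
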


\begin{proof}

By Lemma~\ref{lem:sliced_kernel},
$L_{\rm sl}$ is surjective and
\begin{equation}
\ker L_{\rm sl}
=
\operatorname{span}\{\psi_\star\}.
\end{equation}
Given $S_{\rm ext}\in Y$, choose any
$\eta_0\in X$ satisfying
\begin{equation}
L_{\rm sl}\eta_0
=
S_{\rm ext}.
\end{equation}
Every solution of this radial equation is of the form
\begin{equation}
\eta
=
\eta_0+t\psi_\star,
\qquad
t\in\mathbb R.
\label{eq:part5_all_sliced_solutions}
\end{equation}
The outer matching condition becomes
\begin{equation}
t\,
\mathcal B_{\rm m}\psi_\star
=
J_{\rm ext}
-
\mathcal B_{\rm m}\eta_0.
\label{eq:part5_matching_scalar_equation}
\end{equation}
For arbitrary data, this scalar equation has a unique solution for
$t$ if and only if
\begin{equation}
\mathcal B_{\rm m}\psi_\star\neq0.
\end{equation}
Under this condition, $\mathcal A$ is both surjective and injective,
and hence bijective.

The kernel of the completed operator is
\begin{align}
\ker\mathcal A
&=
\left\{
\eta\in X:
L_{\rm sl}\eta=0,
\quad
\mathcal B_{\rm m}\eta=0
\right\}
\nonumber\\
&=
\ker L_{\rm sl}
\cap
\ker\mathcal B_{\rm m}.
\end{align}
Since
\begin{equation}
\ker L_{\rm sl}
=
\operatorname{span}\{\psi_\star\},
\end{equation}
this intersection is trivial precisely when
\begin{equation}
\mathcal B_{\rm m}\psi_\star\neq0.
\end{equation}
This proves Eq.~\eqref{eq:part5_solvability_equivalence}.

Finally, adjoining one scalar bounded functional to a Fredholm
operator of index one lowers the Fredholm index by one. Equivalently,
$\mathcal A$ differs by a finite-rank operation from
$L_{\rm sl}$ together with a one-dimensional codomain extension.
Therefore
\begin{equation}
\operatorname{ind}\mathcal A
=
\operatorname{ind}L_{\rm sl}-1
=
0.
\end{equation}
\end{proof}

The local slice displaces the canonical tangent field
$\varphi_\Phi$ from the kernel only when the induced finite-rank
update acts nontrivially. If
\begin{equation}
m_\Phi(r_{\rm m})\neq0,
\qquad
K\not\equiv0,
\end{equation}
then the original zero mode is removed from the sliced kernel. The
Fredholm index and surjectivity nevertheless force a one-dimensional
kernel to remain, generated by a generally different function
$\psi_\star$. The outer matching condition eliminates this remaining
homogeneous freedom precisely when
\begin{equation}
\mathcal B_{\rm m}\psi_\star\neq0.
\end{equation}

%%%%%%%%%%%%%%%%%%%%%%%%%%%%%%%%%%%%%%%%%%%%%%%%%%%%%%%%%%%%%

\begin{remark}[Degenerate finite-rank limit]

The Slice and Solvability Theorem does not require
$K\not\equiv0$; it requires only the hypotheses of
Lemma~\ref{lem:sliced_kernel}, including slice transversality. If
\begin{equation}
K\equiv0,
\end{equation}
then
\begin{equation}
L_{\rm sl}=L,
\end{equation}
and no kernel relocation occurs. In this case,
\begin{equation}
\ker L_{\rm sl}
=
\ker L
=
\operatorname{span}\{\varphi_\Phi\},
\end{equation}
so one may choose
\begin{equation}
\psi_\star=\varphi_\Phi.
\end{equation}
The completed operator becomes
\begin{equation}
\mathcal A
=
\bigl(
L,
\mathcal B_{\rm m}
\bigr),
\end{equation}
and its invertibility is governed by
\begin{equation}
\mathcal B_{\rm m}\varphi_\Phi\neq0.
\end{equation}

This is not a separate solvability mechanism. It is the theorem above
specialized to the case in which the slice changes the decomposition
but does not modify the reduced scalar operator. The additional
assumption $K\not\equiv0$ is needed only to conclude that the sliced
kernel generator differs from the original canonical tangent mode.
\end{remark}

%%%%%%%%%%%%%%%%%%%%%%%%%%%%%%%%%%%%%%%%%%%%%%%%%%%%%%%%%%%%%

\begin{remark}[Role of \texorpdfstring{$\psi_\star$}{psi*}]

The function $\psi_\star$ generates the homogeneous kernel of
$L_{\rm sl}$; it is not itself the forced adiabatic deformation. If
the slice is transverse and $K\not\equiv0$, then
\begin{equation}
\psi_\star\not\propto\varphi_\Phi.
\end{equation}
If $K\equiv0$, one may instead choose
\begin{equation}
\psi_\star=\varphi_\Phi.
\end{equation}

In either case, $\psi_\star$ identifies the unique homogeneous radial
direction that must be fixed by the outer matching condition. Whenever
\begin{equation}
\mathcal B_{\rm m}\psi_\star\neq0,
\end{equation}
the physical leading adiabatic deformation is the unique solution of
the completed boundary-value problem, generated by the exterior source
and matching data through
\begin{equation}
\eta_{\rm ad}
=
\mathcal A^{-1}
\bigl(
S_{\rm ext},
J_{\rm ext}
\bigr).
\end{equation}

The condition
\begin{equation}
\mathcal B_{\rm m}\psi_\star\neq0
\end{equation}
states that the generator of the sliced homogeneous kernel does not
satisfy the homogeneous outer matching condition. It is this
condition, rather than the mere displacement of
$\varphi_\Phi$, that makes the completed boundary-value operator
invertible.
\end{remark}

%%%%%%%%%%%%%%%%%%%%%%%%%%%%%%%%%%%%%%%%%%%%%%%%%%%%%%%%%%%%%

\begin{remark}[Analytic input]

Appendix~\ref{app:operator} supplies the functional-analytic results for the unsliced
operator. It defines the Banach spaces $X$ and $Y$, constructs the
unique horizon-regular solution of
\begin{equation}
L\eta=g
\end{equation}
for prescribed horizon value through a regular Volterra equation, and
shows that
\begin{equation}
L:X\longrightarrow Y
\end{equation}
is surjective, has a one-dimensional kernel, and possesses a bounded
right inverse. In particular, $L$ is Fredholm of index one.

To identify that one-dimensional kernel specifically with the
canonical tangent field, one must additionally verify that
\begin{equation}
\varphi_\Phi\in X,
\qquad
\varphi_\Phi\neq0.
\label{eq:canonical-tangent-functional-admissibility}
\end{equation}
The Canonical Tangent Theorem then gives
\begin{equation}
L\varphi_\Phi=0,
\end{equation}
and the one-dimensionality of the kernel implies
\begin{equation}
\ker L
=
\operatorname{span}\{\varphi_\Phi\}.
\end{equation}

Thus, the analytic input consists of both the Fredholm properties proved
in Appendix~\ref{app:operator} and the functional admissibility of the canonical
tangent profile. Once these are established, the passage from $L$ to
$L_{\rm sl}$ and the matching criterion of
Theorem~\ref{thm:solvability} follow from the finite-rank argument
given above.
\end{remark}

%%%%%%%%%%%%%%%%%%%%%%%%%%%%%%%%%%%%%%%%%%%%%%%%%%%%%%%%%%%%%

\begin{remark}[Three non-degeneracy conditions]

Three logically distinct non-degeneracy conditions enter the
construction.

First,
\begin{equation}
m_\Phi(r_{\rm m})\neq0
\end{equation}
is the transversality condition for the local slice. It guarantees
that
\begin{equation}
\mu(r_{\rm m})=0
\end{equation}
selects a unique representative of the residual
$\Phi$-tangential equivalence class.

Second,
\begin{equation}
K\not\equiv0
\end{equation}
ensures that the finite-rank term induced by the slice is nontrivial.
Together with slice transversality, it guarantees that
$\varphi_\Phi$ is displaced from the kernel of the reduced operator.

Third,
\begin{equation}
\mathcal B_{\rm m}\psi_\star\neq0
\end{equation}
is the outer-matching non-degeneracy condition. It guarantees that the
one-dimensional kernel of the sliced operator does not survive in the
completed homogeneous boundary-value problem.

The first condition concerns the geometry of the decomposition, the
second the action of the slice on the reduced operator, and the third
the physical matching to the exterior. None of these three conditions
implies either of the others.
\end{remark}

%%%%%%%%%%%%%%%%%%%%%%%%%%%%%%%%%%%%%%%%%%%%%%%%%%%%%%%%%%%%%%%%%%%%%%%%%%%%

%%%%%%%%%%%%%%%%%%%%%%%%%%%%%%%%%%%%%%%%%%%%%%%%%%%%%%%%%%%%%%%%%%%%%%%%%%%%

%%%%%%%%%%%%%%%%%%%%%%%%%%%%%%%%%%%%%%%%%%%%%%%%%%%%%%%%%%%%%%%%%%%%%%%%%%%%

\setcounter{section}{5}
\section{Discussion and outlook}
\label{sec:transition_companion}

This paper has developed a general framework for the leading
adiabatic deformation of a static black hole family under slowly
varying exterior scalar data. The construction separates naturally
into a geometric part and a functional-analytic part.

The geometric structure is encoded by the intrinsic covector
\begin{equation}
\mathfrak C_\mu
\in
T_\lambda^*\mathcal S,
\end{equation}
whose components are the residual mass-integration constants obtained
by differentiating the static family. When the static solution
manifold is parameterized locally by the modulus label $\Phi$ and the
static horizon-mass parameter $M_{{\rm H},0}$, the Canonical Tangent
Theorem identifies
\begin{equation}
\ker\mathfrak C_\mu
=
\operatorname{span}
\left\{
\left.
\partial_\Phi
\right|_{M_{{\rm H},0}}
\right\}.
\label{eq:sec5_canonical_direction}
\end{equation}
Thus the tangent direction generated by varying the modulus at fixed
static horizon-mass parameter is the unique static tangent direction
whose residual mass-integration constant vanishes. After promotion,
this canonical tangent controls the leading motion of the
instantaneous static representative driven by the prescribed exterior
modulus.

The second structure arises from the departure of the exact dynamical
solution from the promoted static family. Instantaneous tracking
satisfies the radial constraints and the mass-balance equation through
first adiabatic order, but it fails, in general, to satisfy the scalar
equation. The resulting lag field obeys a reduced scalar equation
containing the radial integration function
$C_\mu^{\rm lag}(v)$.

The representation convention in the
$M_{{\rm H},0}$ direction and the local slice
\begin{equation}
\mu(v,r_{\rm m})=0
\label{eq:sec5_local_slice}
\end{equation}
play distinct roles. The former fixes the decomposition of the exact
horizon radius into the horizon of the instantaneous static
representative and the lag-induced displacement. The latter fixes the
remaining $\Phi$-tangential ambiguity and gives
\begin{equation}
C_\mu^{\rm lag}(v)
=
-f_\phi(r_{\rm m})\eta(v,r_{\rm m}).
\end{equation}
Substitution into the unsliced scalar equation produces the
finite-rank modification
\begin{equation}
L_{\rm sl}
=
L
+
f_\phi(r_{\rm m})
\bigl|K\bigr\rangle
\bigl\langle{\rm ev}_{r_{\rm m}}\bigr|,
\qquad
K(r)
=
\frac{2}{\Mpl^2}
\bigl(r\phi_0'\bigr)'.
\label{eq:sec5_sliced_operator}
\end{equation}

The slice is well defined provided
\begin{equation}
m_\Phi(r_{\rm m})\neq0.
\label{eq:discussion-slice-transversality}
\end{equation}
where
\begin{equation}
   m_\Phi(r)
\equiv
\left.
\partial_\Phi m_0(r)
\right|_{M_{{\rm H},0}} . 
\end{equation}
This is the transversality condition ensuring that the evaluation
functional $\mu(r_{\rm m})$ distinguishes neighboring representatives
along the canonical tangent direction. 
This requirement expresses only the transversality of the chosen local slice.
It is not a physical restriction on the black hole solution itself, and a
different transverse slice would provide an equivalent decomposition of the
same exact spacetime.

The effect of the slice on the reduced operator depends separately on
the background profile $K$. If
\begin{equation}
K\not\equiv0,
\label{eq:discussion-K-nondegeneracy}
\end{equation}
then, together with slice transversality,
\begin{equation}
L_{\rm sl}\varphi_\Phi
=
m_\Phi(r_{\rm m})K
\neq0,
\end{equation}
and the original canonical zero mode
$\varphi_\Phi\in\ker L$ is displaced from the kernel. Stability of the
Fredholm index and surjectivity of the sliced operator nevertheless
leave a one-dimensional homogeneous kernel,
\begin{equation}
\ker L_{\rm sl}
=
\operatorname{span}\{\psi_\star\},
\end{equation}
generated by a generally different profile $\psi_\star$. If
$K\equiv0$, no relocation occurs and one may instead take
$\psi_\star=\varphi_\Phi$.

In either branch, the outer matching functional removes the remaining
homogeneous radial freedom precisely when
\begin{equation}
\mathcal B_{\rm m}\psi_\star\neq0.
\label{eq:sec5_solvability_condition}
\end{equation}
Under slice transversality, this condition is equivalent to
invertibility of the completed boundary-value operator
\begin{equation}
\mathcal A
=
\bigl(
L_{\rm sl},
\mathcal B_{\rm m}
\bigr)
:
X
\longrightarrow
Y\times\mathbb R.
\end{equation}
The leading slowly forced deformation is then the unique solution
\begin{equation}
\eta_{\rm ad}
=
\mathcal A^{-1}
\bigl(
S_{\rm ext},
J_{\rm ext}
\bigr).
\end{equation}

The purpose of the present paper has been to establish this structure
without specializing to a particular black hole family. The argument
requires a smooth two-parameter family of non-extremal static
solutions, the exact spherical reduction, the local scalar operator
constructed from the static background, and the Fredholm properties
proved in Appendix~\ref{app:operator}. It does not determine by itself
the explicit forms of the canonical tangent profile, the sliced
kernel generator, or the exterior matching functional. Those depend
on the particular static family and on the exterior problem to which
the finite near-zone solution is matched.

The companion paper \cite{Benakli:2026PaperII} specializes the framework to the magnetic
GHS family. Because the static profiles are
known explicitly, the static derivatives with respect to
$\Phi$ and $M_{{\rm H},0}$ can be evaluated directly. This makes it
possible to construct
\begin{equation}
\varphi_\Phi
=
\left.
\partial_\Phi\phi_0
\right|_{M_{{\rm H},0}},
\end{equation}
evaluate the slice-transversality factor
$m_\Phi(r_{\rm m})$, determine whether the finite-rank update is
nontrivial, and identify the corresponding generator $\psi_\star$ of the
sliced homogeneous kernel. Once an exterior matching prescription is
specified, the condition
$\mathcal B_{\rm m}\psi_\star\neq0$
can likewise be tested explicitly.

The GHS specialization also provides the setting in which the effects
of a slowly evolving cosmological modulus and of a scalar potential
can be analyzed quantitatively. In particular, it allows one to
determine how the massless GHS scalar profile is modified and to
identify parameter regimes in which the black hole-induced modulus
excursion is preserved, screened, or otherwise deformed. These
model-dependent conclusions are not assumed in the general analysis
presented here.

This division of labor between the two papers is essential. The
Canonical Tangent Theorem, the local slice, the finite-rank
modification of the reduced operator, and the solvability theorem do
not rely on the closed-form GHS solution. Conversely, the companion
paper need not reconstruct the general formalism: its role is to
supply the GHS background data, complete the exterior matching
problem, and extract the resulting physical scalar response.

The main conceptual conclusion may therefore be stated independently
of any particular static family. A slowly evolving exterior modulus
does not determine an adiabatic black hole response merely by replacing
a static modulus parameter by a function of time. The leading physical
response emerges only after four logically distinct operations:

\begin{enumerate}
\item the exact dynamical horizon quantities are distinguished from
the horizon parameters of the instantaneous static representative;

\item the geometrically preferred static tangent direction is
identified through
$\ker\mathfrak C_\mu$;

\item the residual tangential ambiguity is fixed by a slice;

\item the resulting radial problem is completed by physical matching
to the exterior.
\end{enumerate}

For the branch in which the finite-rank update is nontrivial, the
three relevant scalar non-degeneracy conditions are
\begin{equation}
m_\Phi(r_{\rm m})\neq0,
\qquad
K\not\equiv0,
\qquad
\mathcal B_{\rm m}\psi_\star\neq0.
\label{eq:sec5_three_conditions}
\end{equation}
Their roles are distinct. The first ensures that the local slice is transverse, the second ensures relocation of the canonical zero mode, and the third completes the Fredholm criterion: under the Fredholm hypotheses and assuming slice transversality, it is equivalent to invertibility of the completed radial boundary-value problem.

The condition $K\not\equiv0$ is not required for solvability itself.
In the degenerate finite-rank limit $K\equiv0$, one has
\begin{equation}
L_{\rm sl}=L,
\qquad
\psi_\star=\varphi_\Phi,
\end{equation}
and the completed problem is invertible precisely when
\begin{equation}
\mathcal B_{\rm m}\varphi_\Phi\neq0.
\end{equation}
Thus, the universal criterion for existence and uniqueness is that the
generator of the one-dimensional kernel of the sliced operator should
not satisfy the homogeneous outer matching condition,
\begin{equation}
\mathcal B_{\rm m}\psi_{\rm ker}\neq0,
\end{equation}
where $\psi_{\rm ker}$ denotes the generator of
$\ker L_{\rm sl}$. In the generic branch,
$\psi_{\rm ker}=\psi_\star$, whereas in the degenerate branch,
$K\equiv0$, one has
$\psi_{\rm ker}=\varphi_\Phi$.

The companion paper \cite{Benakli:2026PaperII}  determines explicitly which branch is realized by
the charged magnetic GHS family and evaluates these conditions for a
slowly rolling exterior modulus.

\appendix

\section{Analytic properties of the reduced operator}
\label{app:operator}

This appendix establishes the functional-analytic framework underlying
the reduced operator $L$. It provides the analytic input required for
the functional use of the Canonical Tangent Theorem, for
Lemma~\ref{lem:sliced_kernel}, and for Theorem~\ref{thm:solvability}:
namely, that $L$ is a surjective Fredholm operator of index one on a
precisely defined Banach space and has a one-dimensional kernel. Once
the canonical tangent profile $\varphi_\Phi$ is shown to belong to
that domain, Theorem~\ref{thm:canonical_tangent} identifies this
kernel with $\operatorname{span}\{\varphi_\Phi\}$. The construction
proceeds from a regular Volterra problem at the static horizon, from
which uniqueness of the horizon-regular homogeneous solution and
surjectivity of $L$ follow.

%%%%%%%%%%%%%%%%%%%%%%%%%%%%%%%%%%%%%%%%%%%%%%%%%%%%%%%%%%%%%

\subsection{Functional setting}

Let $r_{{\rm H},0}$ and $r_{\rm m}$ denote, respectively, the horizon
radius of the static background and the matching radius. Throughout
this appendix all coefficients of the reduced operator are evaluated
on a fixed static background belonging to the manifold $\mathcal S$.
The associated static horizon-mass parameter
$M_{{\rm H},0}=\Mpl^2r_{{\rm H},0}/2$ plays no explicit role in the
analysis below.

Recall
$L\eta\equiv\dd_r(p\,\dd_r\eta)-\mathcal V_{\rm eff}\,\eta$,
$p(r)=e^{\delta_0}r^2\mathcal F_0(r)$, from Section~\ref{sec:operator_and_geometry},
Eq.~\eqref{eq:reduced_operator_definition}. Define
\begin{equation}
X\equiv\Bigl\{\eta\in C^1\bigl([r_{{\rm H},0},r_{\rm m}]\bigr)\cap C^2\bigl((r_{{\rm H},0},r_{\rm m}]\bigr)\ :\ L\eta\in C^0\bigl([r_{{\rm H},0},r_{\rm m}]\bigr)\Bigr\},
\qquad
Y\equiv C^0\bigl([r_{{\rm H},0},r_{\rm m}]\bigr),
\label{eq:appA_domain}
\end{equation}
equipped with
\begin{equation}
\|g\|_Y\equiv\|g\|_{C^0},
\qquad
\|\eta\|_X\equiv\|\eta\|_{C^1([r_{{\rm H},0},r_{\rm m}])}+\|L\eta\|_{C^0([r_{{\rm H},0},r_{\rm m}])}.
\label{eq:appA_norms}
\end{equation}
With these norms, $Y$ is a Banach space and $L:X\to Y$ is bounded by
construction. The space $X$ is also complete. Indeed, let $(\eta_n)$
be Cauchy in $\|\cdot\|_X$. Then there exist
\begin{equation}
\eta\in C^1\bigl([r_{{\rm H},0},r_{\rm m}]\bigr),
\qquad
g\in C^0\bigl([r_{{\rm H},0},r_{\rm m}]\bigr)
\end{equation}
such that
\begin{equation}
\eta_n\longrightarrow\eta
\quad\text{in }C^1,
\qquad
L\eta_n\longrightarrow g
\quad\text{in }C^0.
\end{equation}
Since $p(r_{{\rm H},0})=0$ and $\eta_n'$ is finite at the horizon,
integration of
\begin{equation}
(p\eta_n')'
=
\mathcal V_{\rm eff}\eta_n+L\eta_n
\end{equation}
gives
\begin{equation}
p(r)\eta_n'(r)
=
\int_{r_{{\rm H},0}}^r
\left[
\mathcal V_{\rm eff}(s)\eta_n(s)
+
L\eta_n(s)
\right]ds.
\end{equation}
Passing to the uniform limit yields
\begin{equation}
p(r)\eta'(r)
=
\int_{r_{{\rm H},0}}^r
\left[
\mathcal V_{\rm eff}(s)\eta(s)+g(s)
\right]ds.
\label{eq:appA-completeness-limit}
\end{equation}
On every compact interval $[r_{{\rm H},0}+\delta,r_{\rm m}]$, with
$\delta>0$, the function $p$ is nonzero.
Eq.~\eqref{eq:appA-completeness-limit} therefore implies
\begin{equation}
\eta\in
C^2\bigl([r_{{\rm H},0}+\delta,r_{\rm m}]\bigr)
\end{equation}
and $L\eta=g$ there. Since $\delta>0$ is arbitrary,
$\eta\in C^2((r_{{\rm H},0},r_{\rm m}])$, and $L\eta=g$ extends
continuously to the horizon by
Eq.~\eqref{eq:appA-completeness-limit}. Hence $\eta\in X$, and
$\|\eta_n-\eta\|_X\to0$. Thus $X$ is a Banach space.

Horizon regularity is built into $X$ directly, through membership in
$C^1([r_{{\rm H},0},r_{\rm m}])$; no condition is imposed at
$r_{\rm m}$, consistent with the matching functional $\mathcal B_{\rm
m}$ of Section~\ref{sec:tracking} acting separately.

%%%%%%%%%%%%%%%%%%%%%%%%%%%%%%%%%%%%%%%%%%%%%%%%%%%%%%%%%%%%%

\subsection{The regular Volterra problem}

Write $x\equiv r-r_{{\rm H},0}$. Since $\mathcal F_0(r_{{\rm H},0})=0$
with $\mathcal F_0'(r_{{\rm H},0})\neq0$ (non-degenerate static
horizon, Section~\ref{sec:operator_and_geometry}), and $e^{\delta_0}$, $r^2$ are smooth and nonzero
at $r_{{\rm H},0}$,
\begin{equation}
p(r)=p_1\,x+O(x^2),
\qquad
p_1\equiv e^{\delta_{0{\rm H}}}r_{{\rm H},0}^2\,\mathcal F_0'(r_{{\rm H},0})\neq0,
\label{eq:appA_p_expansion}
\end{equation}
and $\mathcal V_{\rm eff}$, built from $e^{\delta_0}$, $r^2\mathcal
F_0$, $\phi_0'^2$, $V_0''$, $B_0''$, all smooth at $r_{{\rm H},0}$, is
itself continuous there, with $V_0^{\rm eff}\equiv\mathcal
V_{\rm eff}(r_{{\rm H},0})$ finite. Thus $r=r_{{\rm H},0}$ is a
regular singular point of $L\eta=0$, with $p$ having a simple zero
there.

\begin{proposition}[Existence and uniqueness at fixed horizon data]
\label{prop:appA_volterra}
For every $a\in\mathbb R$ and every $g\in Y$, there exists a unique
$\eta\in X$ with
\begin{equation}
L\eta=g,
\qquad
\eta(r_{{\rm H},0})=a.
\end{equation}
Moreover $\eta'(r_{{\rm H},0})=\bigl(V_0^{\rm eff}a+g(r_{{\rm H},0})\bigr)/p_1$.
\end{proposition}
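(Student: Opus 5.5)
The plan is to recast $L\eta=g$ with prescribed horizon value as an integral equation of Volterra type and solve it by Picard iteration. For $\eta\in X$, the bounded derivative at the horizon and $p(r_{{\rm H},0})=0$ give, exactly as in the completeness argument above,
\begin{equation}
p(r)\eta'(r)=\int_{r_{{\rm H},0}}^r\bigl[\mathcal V_{\rm eff}(s)\eta(s)+g(s)\bigr]ds .
\end{equation}
Dividing by $p$ and integrating once more, any solution in $X$ with $\eta(r_{{\rm H},0})=a$ satisfies
\begin{equation}
\eta(r)=a+\int_{r_{{\rm H},0}}^r\frac{1}{p(t)}\int_{r_{{\rm H},0}}^t\bigl[\mathcal V_{\rm eff}(s)\eta(s)+g(s)\bigr]ds\,dt .
\end{equation}
Conversely, a continuous solution of this equation lies in $X$ and solves $L\eta=g$. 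Exchanging the order of integration writes the equation as $\eta=a+G+\int_{r_{{\rm H},0}}^r k(r,s)\mathcal V_{\rm eff}(s)\eta(s)\,ds$, with kernel $k(r,s)=\int_s^r dt/p(t)$ and $G$ the corresponding term built from $g$.

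The key estimate is boundedness of the integrated kernel near the horizon. From \eqref{eq:appA_p_expansion}, $p(t)\ge c\,(t-r_{{\rm H},0})$ for some $c>0$ on the interval, taking $p_1>0$ for the outer horizon. Since $s\le t$ inside the integral, the bound $\bigl|\frac{1}{p(t)}\int_{r_{{\rm H},0}}^t h\bigr|\le \|h\|_{C^0}/c$ holds. The operator $T\eta(r)=\int_{r_{{\rm H},0}}^r\frac{1}{p(t)}\int_{r_{{\rm H},0}}^t\mathcal V_{\rm eff}\eta\,ds\,dt$ therefore satisfies $|T\eta(r)|\le C\int_{r_{{\rm H},0}}^r\sup_{[r_{{\rm H},0},t]}|\eta|\,dt$, with $C=\|\mathcal V_{\rm eff}\|_{C^0}/c$. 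Induction gives $\|T^n\|\le C^n(r_{\rm m}-r_{{\rm H},0})^n/n!$. The Neumann series $\sum T^n$ then converges on $C^0$, which yields existence and uniqueness of a continuous solution.

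It remains to show that this solution lies in $X$ and to compute $\eta'(r_{{\rm H},0})$. Away from the horizon $p\neq0$, so the solution is $C^2$ and satisfies $L\eta=g$. Near the horizon,
\begin{equation}
\eta'(r)=\frac{1}{p(r)}\int_{r_{{\rm H},0}}^r h ,\qquad h=\mathcal V_{\rm eff}\eta+g\ \text{continuous}.
\end{equation}
Using $\int_{r_{{\rm H},0}}^r h=h(r_{{\rm H},0})\,x+o(x)$ and $p=p_1x+O(x^2)$, we get
\begin{equation}
\eta'(r)\to h(r_{{\rm H},0})/p_1=\bigl(V_0^{\rm eff}a+g(r_{{\rm H},0})\bigr)/p_1 .
\end{equation}
Hence $\eta\in C^1$ up to the horizon, with the stated derivative, and $L\eta=g\in C^0$, so $\eta\in X$.

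Uniqueness in $X$ follows because every element of $X$ satisfies the integral equation, which has a unique solution. The main obstacle is the near-horizon step: justifying that a $C^1$ solution has no boundary term, i.e. $p\eta'\to0$, and controlling $1/p$. The first is immediate from the $C^1$ requirement. The second is exactly the averaging bound above, where the $1/x$ singularity is compensated by the length of the inner integral.
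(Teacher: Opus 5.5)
Your proposal is correct and follows essentially the paper's route. Both arguments integrate twice from the horizon, using $p\eta'\to0$ for $C^1$ solutions, to obtain a regular Volterra equation of the second kind, solve it by successive approximation, and read off $\eta'(r_{{\rm H},0})$ as the limit of $\frac{1}{p}\int_{r_{{\rm H},0}}^r(\mathcal V_{\rm eff}\eta+g)$. The only difference is presentational: you make the kernel's regularity quantitative via $|p(t)|\ge c\,(t-r_{{\rm H},0})$ and a global factorial bound on $\|T^n\|$, whereas the paper uses a contraction on a short interval near the horizon and then extends the solution in finitely many steps.
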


\begin{proof}
Write $L\eta=g$ as $(p\eta')'=\mathcal V_{\rm eff}\eta+g$. For a
solution with $\eta\in C^1$ at $r_{{\rm H},0}$, $p(r_{{\rm H},0})=0$
and $\eta'(r_{{\rm H},0})$ finite give
$\lim_{r\to r_{{\rm H},0}^+}p(r)\eta'(r)=0$. Integrating once from the
horizon,
\begin{equation}
p(r)\eta'(r)=\int_{r_{{\rm H},0}}^r\bigl[\mathcal V_{\rm eff}(s)\eta(s)+g(s)\bigr]\,ds.
\label{eq:appA_first_integral}
\end{equation}
With $\eta(r_{{\rm H},0})=a$ prescribed, a second integration yields
the Volterra equation
\begin{equation}
\eta(r)=a+\int_{r_{{\rm H},0}}^r\frac{dt}{p(t)}\int_{r_{{\rm H},0}}^t\bigl[\mathcal V_{\rm eff}(s)\eta(s)+g(s)\bigr]\,ds.
\label{eq:appA_volterra}
\end{equation}
For every continuous trial function $\eta$, the quotient
\begin{equation}
\frac{1}{p(t)}
\int_{r_{{\rm H},0}}^t
\left[
\mathcal V_{\rm eff}(s)\eta(s)+g(s)
\right]ds
\end{equation}
extends continuously to $t=r_{{\rm H},0}$ by assigning it the value
\begin{equation}
\frac{
V_0^{\rm eff}a+g(r_{{\rm H},0})
}{p_1},
\end{equation}
using Eq.~\eqref{eq:appA_p_expansion} and continuity of $\mathcal
V_{\rm eff},\eta,g$: the inner integral vanishes linearly in
$(t-r_{{\rm H},0})$ with leading coefficient
$V_0^{\rm eff}a+g(r_{{\rm H},0})$, matching the linear vanishing of
$p(t)$. Consequently, Eq.~\eqref{eq:appA_volterra} defines a regular
linear Volterra equation of the second kind on
$[r_{{\rm H},0},r_{\rm m}]$. Standard successive
approximation~\cite{Coddington:1955} (the Volterra operator is a
contraction on $C^0([r_{{\rm H},0},r_{{\rm H},0}+\delta])$ for
$\delta$ small, and the solution extends to all of
$[r_{{\rm H},0},r_{\rm m}]$ by finitely many such steps, since the
kernel is bounded on compact subintervals) gives a unique solution
$\eta\in C^1([r_{{\rm H},0},r_{\rm m}])$ for every $a\in\mathbb R$,
$g\in Y$. Eq.~\eqref{eq:appA_first_integral} then gives
$\eta\in C^2((r_{{\rm H},0},r_{\rm m}])$ with $L\eta=g$ there, and
$L\eta=g$ extends continuously to $r_{{\rm H},0}$ by construction, so
$\eta\in X$. The stated value of $\eta'(r_{{\rm H},0})$ is the limit
computed above, evaluated at $t=r_{{\rm H},0}$.
\end{proof}

%%%%%%%%%%%%%%%%%%%%%%%%%%%%%%%%%%%%%%%%%%%%%%%%%%%%%%%%%%%%%

\subsection{Consequences}

\begin{corollary}[One-dimensional kernel]
\label{cor:appA_kernel}
$\ker L\cap X$ is one-dimensional.
\end{corollary}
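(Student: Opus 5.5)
The plan is to apply Proposition~\ref{prop:appA_volterra} with $g=0$ and to use the evaluation map at the horizon as a linear isomorphism between $\ker L\cap X$ and $\mathbb R$. Define
\begin{equation}
E:\ker L\cap X\longrightarrow\mathbb R,
\qquad
E(\eta)\equiv\eta(r_{{\rm H},0}).
\end{equation}
This map is clearly linear. The proof then reduces to showing that $E$ is both injective and surjective.

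For surjectivity, I apply Proposition~\ref{prop:appA_volterra} with $g=0\in Y$. For any $a\in\mathbb R$ it supplies $\eta_a\in X$ with $L\eta_a=0$ and $\eta_a(r_{{\rm H},0})=a$. Hence $E(\eta_a)=a$.

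For injectivity, suppose $\eta\in\ker L\cap X$ and $E(\eta)=0$. Then both $\eta$ and the zero function lie in $X$, solve $L\eta=0$, and take the horizon value $a=0$. The uniqueness part of Proposition~\ref{prop:appA_volterra} forces $\eta\equiv0$. It follows that $E$ is a linear bijection, so $\dim(\ker L\cap X)=1$, and the kernel is spanned by $\eta_1$, the solution with unit horizon value. Proposition~\ref{prop:appA_volterra} also gives its horizon derivative, $\eta_1'(r_{{\rm H},0})=V_0^{\rm eff}/p_1$, which I will record for later use.

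I do not expect a real obstacle here, because all the analytic content sits in Proposition~\ref{prop:appA_volterra}. The one point to check is that its uniqueness statement holds within the full class $X$, and not only among solutions produced by the Volterra iteration. This is the case: every $\eta\in X$ is $C^1$ up to the horizon, so $p\eta'\to0$ there, and integrating twice shows that any such $\eta$ satisfies the Volterra equation~\eqref{eq:appA_volterra}. That equation has a unique solution.

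I will also note why the count is one rather than two. The second Frobenius solution at the regular singular point behaves like $\log(r-r_{{\rm H},0})$, since the indicial exponents are a double zero. It therefore fails to be $C^1$ at the horizon and is excluded from $X$. This observation is not needed for the proof itself, but it explains why horizon regularity cuts the two-dimensional solution space of $L\eta=0$ down to one.
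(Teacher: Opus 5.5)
Your proposal is correct and follows essentially the same route as the paper. Both apply Proposition~\ref{prop:appA_volterra} with $g=0$, using existence to obtain the nonzero solution with unit horizon value and uniqueness to show that every horizon-regular kernel element is $\eta(r_{{\rm H},0})$ times it; your evaluation-map bijection $E$ is a repackaging of the paper's ``$a$ times the $a=1$ solution'' argument, and your remark on the logarithmic second solution matches the paper's Wronskian footnote ($pW$ constant forces $W=0$ for regular pairs).
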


\begin{proof}
Apply Proposition~\ref{prop:appA_volterra} with $g=0$. For each
$a\in\mathbb R$ there is a unique horizon-regular solution of
$L\eta=0$ with $\eta(r_{{\rm H},0})=a$; by uniqueness for fixed $a$,
this solution is $a$ times the one obtained at $a=1$, so the space of
horizon-regular solutions of $L\eta=0$ is exactly the span of the
$a=1$ solution, one-dimensional. \footnote{
Equivalently, if $\eta_1$ and $\eta_2$ satisfy $L\eta=0$, their
Wronskian $W=\eta_1\eta_2'-\eta_2\eta_1'$ obeys $(pW)'=0$ for every
pair of homogeneous solutions, so that $pW\equiv C$. Since
$p(r)\sim p_1(r-r_{{\rm H},0})$ vanishes at the horizon while $W$
remains finite for horizon-regular solutions, one finds $C=0$, and
any two horizon-regular solutions are therefore proportional. This
identity also underlies the Sturm--Liouville structure associated
with $L$, which would be needed for the alternative
orthogonality-based slice briefly mentioned in
Section~\ref{subsec:part5_slice}, but is not pursued in the present
work.
}
\end{proof}

\begin{corollary}[Surjectivity]
\label{cor:appA_surjective}
$L:X\to Y$ is surjective, with bounded right inverse $L^+:Y\to X$
given by $L^+g\equiv\eta_g$, the solution of
Proposition~\ref{prop:appA_volterra} with $a=0$.
\end{corollary}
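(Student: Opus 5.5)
The corollary states that $L:X\to Y$ is surjective with bounded right inverse $L^+$. Surjectivity follows directly from Proposition~\ref{prop:appA_volterra}. For any $g\in Y$, take $a=0$; the proposition gives a unique $\eta_g\in X$ with $L\eta_g=g$ and $\eta_g(r_{{\rm H},0})=0$. I would then check two things. First, $L^+$ is well defined and linear: uniqueness at fixed horizon data implies that $\eta_{g_1}+c\,\eta_{g_2}$ is the unique solution with data $(0,g_1+cg_2)$. Second, $LL^+g=g$, which holds by construction. All the work is therefore in boundedness, i.e.\ an estimate $\|\eta_g\|_X\le C\|g\|_{C^0}$.

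Since $\|\eta_g\|_X=\|\eta_g\|_{C^1}+\|g\|_{C^0}$, it suffices to bound $\|\eta_g\|_{C^0}$ and $\|\eta_g'\|_{C^0}$. I would use the Volterra representation Eq.~\eqref{eq:appA_volterra} with $a=0$. The key preliminary is a uniform bound on the averaging operator
\begin{equation}
(Th)(t)\equiv\frac{1}{p(t)}\int_{r_{{\rm H},0}}^t h(s)\,ds .
\end{equation}
By Eq.~\eqref{eq:appA_p_expansion} and $p>0$ on $(r_{{\rm H},0},r_{\rm m}]$, there is a constant $c_0>0$ with $p(t)\ge c_0\,(t-r_{{\rm H},0})$ on the whole interval. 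This gives
\begin{equation}
|(Th)(t)|\le c_0^{-1}\|h\|_{C^0}.
\end{equation}
This bound is the only place where the non-degenerate horizon enters, and it is the step I expect to require the most care: near the horizon it relies on the simple zero of $p$, and away from the horizon on the positivity of $p$.

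Since $\eta_g'=T(\mathcal V_{\rm eff}\eta_g+g)$, we get
\begin{equation}
|\eta_g'(t)|\le c_0^{-1}\bigl(\|\mathcal V_{\rm eff}\|_{C^0}|\eta_g|_{t}+\|g\|_{C^0}\bigr),
\end{equation}
where $|\eta_g|_t\equiv\sup_{[r_{{\rm H},0},t]}|\eta_g|$. Integrating from the horizon, using $\eta_g(r_{{\rm H},0})=0$, yields a Gronwall inequality for $\sup_{[r_{{\rm H},0},t]}|\eta_g|$:
\begin{equation}
\|\eta_g\|_{C^0}\le c_0^{-1}(r_{\rm m}-r_{{\rm H},0})\,e^{c_0^{-1}\|\mathcal V_{\rm eff}\|_{C^0}(r_{\rm m}-r_{{\rm H},0})}\,\|g\|_{C^0}.
\end{equation}
Feeding this back into the derivative estimate bounds $\|\eta_g'\|_{C^0}$ linearly in $\|g\|_{C^0}$.

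These two estimates together give $\|L^+g\|_X\le C\|g\|_Y$. Combined with Corollary~\ref{cor:appA_kernel}, this also yields the Fredholm statement used later: the range of $L$ is all of $Y$ (closed, with zero cokernel) and the kernel is one-dimensional, so $\operatorname{ind}L=1$. Boundedness of $L^+$ is not strictly needed for surjectivity itself, but it confirms that $L$ splits as $X=\ker L\oplus L^+Y$, with $L^+Y=\{\eta\in X:\eta(r_{{\rm H},0})=0\}$ a closed complement.
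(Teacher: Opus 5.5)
Your proposal is correct and follows the paper's proof essentially step by step. Surjectivity and $LL^+=\mathrm{id}_Y$ come from Proposition~\ref{prop:appA_volterra} at $a=0$, and linearity from the linearity and uniqueness of the Volterra problem. Boundedness uses the simple-zero bound $p(t)\ge c_0\,(t-r_{{\rm H},0})$, which is the paper's $\sup (r-r_{{\rm H},0})/|p(r)|<\infty$, applied to the first integral, then Gronwall and back-substitution. Your closing identification of $L^+Y=\{\eta\in X:\eta(r_{{\rm H},0})=0\}$ as a closed complement of $\ker L$ is correct, and goes slightly beyond what the paper states explicitly.
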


\begin{proof}
Existence of $\eta_g\in X$ with $L\eta_g=g$ for every $g\in Y$ is
Proposition~\ref{prop:appA_volterra} at $a=0$, giving
$LL^+=\operatorname{id}_Y$. Linearity of $L^+$ follows from linearity
of Eq.~\eqref{eq:appA_volterra} in $g$ at fixed $a=0$. For
boundedness, the simple-zero property of $p$,
Eq.~\eqref{eq:appA_p_expansion}, implies
\begin{equation}
\sup_{r\in(r_{{\rm H},0},r_{\rm m}]}\frac{r-r_{{\rm H},0}}{|p(r)|}<\infty.
\end{equation}
Eq.~\eqref{eq:appA_first_integral} therefore gives
\begin{equation}
|\eta_g'(r)|\leq C\Bigl(\sup_{s\in[r_{{\rm H},0},r]}|\eta_g(s)|+\|g\|_{C^0}\Bigr).
\end{equation}
Since $\eta_g(r_{{\rm H},0})=0$, integration and Gronwall's inequality
yield $\|\eta_g\|_{C^0}\leq C_0\|g\|_{C^0}$ and, substituting back,
$\|\eta_g'\|_{C^0}\leq C_1\|g\|_{C^0}$. Consequently
$\|L^+g\|_X=\|\eta_g\|_{C^1}+\|g\|_{C^0}\leq C\|g\|_Y$.
\end{proof}

\begin{corollary}[Fredholm index]
\label{cor:appA_fredholm}
$L:X\to Y$ is Fredholm of index one:
$\operatorname{ind}L=\dim\ker L-\dim\operatorname{coker}L=1$.
\end{corollary}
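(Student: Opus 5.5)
The statement to prove is Corollary~\ref{cor:appA_fredholm}: $L:X\to Y$ is Fredholm of index one. The plan is to read this off directly from the two preceding corollaries. Fredholmness requires three things: finite-dimensional kernel, finite-dimensional cokernel, and closed range.

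\emph{Kernel.} Corollary~\ref{cor:appA_kernel} gives $\dim\ker L=1$. Concretely, the kernel is spanned by the solution of Proposition~\ref{prop:appA_volterra} with $a=1$ and $g=0$.

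\emph{Range and cokernel.} Corollary~\ref{cor:appA_surjective} gives $L(X)=Y$. This range is trivially closed, and $\operatorname{coker}L=Y/L(X)=\{0\}$ has dimension zero.

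\emph{Boundedness.} Boundedness of $L$ was noted when the norms \eqref{eq:appA_norms} were introduced, since $\|L\eta\|_{C^0}\le\|\eta\|_X$. This confirms that $L$ is a bounded operator between the Banach spaces $X$ and $Y$, whose completeness was established above.

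\emph{Index.} Combining these, $\operatorname{ind}L=1-0=1$.

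For completeness, I would also exhibit the splitting $X=\ker L\oplus \operatorname{ran}L^+$, which is what makes the structure manifestly Fredholm. The bounded right inverse $L^+$ gives the bounded projection $P=\operatorname{id}_X-L^+L$ onto $\ker L$. This $P$ is bounded because $L$ and $L^+$ are. It maps into $\ker L$ because $LP=L-LL^+L=0$. It is idempotent because $L^+L$ is idempotent, as $L^+LL^+L=L^+L$. The range of $L^+$ is a closed complement of $\ker L$, since it equals $\ker P$, and $L$ restricted to it is an isomorphism onto $Y$.

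Alternatively, the same decomposition can be read off concretely. Any $\eta\in X$ decomposes uniquely as $\eta=\eta(r_{{\rm H},0})\,\eta_1+L^+(L\eta)$, where $\eta_1$ is the normalized kernel element. Uniqueness of this decomposition follows from the uniqueness statement in Proposition~\ref{prop:appA_volterra} at fixed horizon value.

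There is no real obstacle here. The only point needing care is that "Fredholm" requires closed range, and surjectivity supplies this trivially, so no open-mapping argument is needed beyond what the bounded right inverse already provides.
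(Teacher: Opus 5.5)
Your proof is correct and matches the paper's: you combine $\dim\ker L=1$ from Corollary~\ref{cor:appA_kernel} with surjectivity from Corollary~\ref{cor:appA_surjective}, which gives closed range and trivial cokernel, to get $\operatorname{ind}L=1-0=1$. The extra splitting $X=\ker L\oplus\operatorname{ran}L^+$ through the projection $\operatorname{id}_X-L^+L$ is also correct, but the statement does not need it.
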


\begin{proof}
By Corollary~\ref{cor:appA_surjective}, the range of $L$ is all of
$Y$, hence closed, with $\operatorname{coker}L=\{0\}$. By
Corollary~\ref{cor:appA_kernel}, $\dim\ker L=1$. Thus $L$ is Fredholm
and $\operatorname{ind}L=1-0=1$.
\end{proof}

\bigskip
\noindent
By the regularity assumption on the static solution manifold made in
Section~\ref{subsec:canonical-tangent}, the neighboring static
profiles are defined on a common radial neighborhood of the reference
horizon and depend differentiably on the collective coordinates at
fixed areal radius. We assume this dependence is sufficiently regular
that
\begin{equation}
\varphi_\Phi
\in
C^1\bigl([r_{{\rm H},0},r_{\rm m}]\bigr)
\cap
C^2\bigl((r_{{\rm H},0},r_{\rm m}]\bigr).
\label{eq:appA-canonical-admissibility}
\end{equation}
Since Theorem~\ref{thm:canonical_tangent} gives $L\varphi_\Phi=0$, it
follows that $\varphi_\Phi\in X$. We further assume, as part of the
non-degeneracy of the modulus parameterization, that
$\varphi_\Phi\not\equiv0$. Corollary~\ref{cor:appA_kernel} then
implies
\begin{equation}
\ker L
=
\operatorname{span}\{\varphi_\Phi\}.
\end{equation}

These are precisely the analytic assumptions used in
Lemma~\ref{lem:sliced_kernel}. The bounded right inverse $L^+$ and the
Fredholm-index statement supply the remaining functional-analytic
ingredients required in its proof. The additional assumption
$K\not\equiv0$ entering that lemma concerns the nontriviality of the
finite-rank perturbation itself, rather than any property of the
unsliced operator $L$.

%%%%%%%%%%%%%%%%%%%%%%%%%%%%%%%%%%%%%%%%%%%%%%%%%%%%%%%%%%%%%%%%%%%%%%%%%%%%%

%%%%%%%%%%%%%%%%%%%%%%%%%%%%%%%%%%%%%%%%%%%%%%%%%%%%%%%%%%%%%%%%%%%%%%%%%%%%%%%

%%%%%%%%%%%%%%%%%%%%%%%%%%%%%%%%%%%%%%%%%%%%%%%%%%%%%%%%%%%%%%%%%%%%%%%%%%%%%

\section{Linearization about the static family}
\label{app:linearization}

This appendix derives the linearized field equations used in
Section~\ref{sec:operator_and_geometry}. The derivation follows from a
direct first-order expansion of the exact
Einstein--Maxwell--dilaton system, Eq.~\eqref{eq:system}, about a fixed
static background
\begin{equation}
\bigl(
\phi_0(r),\delta_0(r),m_0(r)
\bigr),
\end{equation}
using the perturbation ansatz
Eq.~\eqref{eq:linear_perturbations}. All background coefficients in
this appendix are evaluated at a fixed point
$\lambda_\star\in\mathcal S$.

The perturbation of the metric function is
\begin{equation}
\Delta\mathcal F
=
-\frac{2\mu}{\Mpl^2r},
\label{eq:appB_deltaF}
\end{equation}
and hence
\begin{equation}
\Delta\mathcal F\,\phi_0'
=
-\frac{2\phi_0'}{\Mpl^2r}\mu,
\end{equation}
which is the combination entering the scalar-flux perturbation below.
As an independent consistency check, the resulting expressions may
also be verified by direct symbolic expansion of the exact equations.

%%%%%%%%%%%%%%%%%%%%%%%%%%%%%%%%%%%%%%%%%%%%%%%%%%%%%%%%%%%%%

\subsection{Linearized constraints}

Linearizing
\begin{equation}
\delta_{,r}
=
\frac{r}{2\Mpl^2}\phi_{,r}^2
\end{equation}
gives, at background order,
\begin{equation}
\delta_0'
=
\frac{r}{2\Mpl^2}\phi_0'^2
\equiv
f_\delta,
\end{equation}
where
\begin{equation}
f_\delta(r)
\equiv
\delta_0'(r).
\end{equation}
At linear order one obtains
\begin{equation}
\xi_{,r}
=
\frac{r}{\Mpl^2}
\phi_0'\eta_{,r},
\label{eq:appB_xi_constraint}
\end{equation}
in agreement with
Eq.~\eqref{eq:part3_xi_constraint}.

For later use, define
\begin{equation}
f_V(r)
\equiv
\frac{r^2}{2}V_0'
+
\frac{Q_m^2}{4r^2}B_0',
\qquad
f_\phi(r)
\equiv
\frac{r^2}{2}
\mathcal F_0\phi_0'.
\label{eq:appB-f-definitions}
\end{equation}

Linearizing
\begin{equation}
m_{,r}
=
\frac{r^2}{4}
\mathcal F\phi_{,r}^2
+
\frac{r^2}{2}V(\phi)
+
\frac{B(\phi)Q_m^2}{4r^2}
\end{equation}
gives
\begin{equation}
\mu_{,r}
=
-f_\delta\mu
+
f_V\eta
+
f_\phi\eta_{,r},
\label{eq:appB_mu_r}
\end{equation}
in agreement with
Eq.~\eqref{eq:mass_radial_abc}.

Similarly, linearizing
\begin{equation}
m_{,v}
=
\frac{r^2}{2}
\left[
e^{-\delta}\phi_{,v}^2
+
\mathcal F\phi_{,v}\phi_{,r}
\right]
\end{equation}
about the static background gives
\begin{equation}
\mu_{,v}
=
f_\phi\eta_{,v},
\label{eq:appB_mu_v}
\end{equation}
in agreement with
Eq.~\eqref{eq:mass_temporal_abc}. The term
$e^{-\delta}\phi_{,v}^2$ is quadratic in the perturbations and
therefore does not contribute at linear order.

%%%%%%%%%%%%%%%%%%%%%%%%%%%%%%%%%%%%%%%%%%%%%%%%%%%%%%%%%%%%%

\subsection{Solution of the mass constraints}

The temporal mass constraint,
Eq.~\eqref{eq:appB_mu_v}, implies
\begin{equation}
\partial_v
\bigl(
\mu-f_\phi\eta
\bigr)
=
0,
\end{equation}
because $f_\phi$ is independent of $v$ on the fixed static
background. Therefore
\begin{equation}
\mu(v,r)
=
f_\phi(r)\eta(v,r)
+
f_\mu(r),
\label{eq:appB_mu_eta_h}
\end{equation}
where $f_\mu$ is independent of $v$.

Substituting this expression into the radial mass constraint,
Eq.~\eqref{eq:appB_mu_r}, gives
\begin{equation}
f_\mu'
+
f_\delta f_\mu
=
\left(
f_V
-
f_\delta f_\phi
-
f_\phi'
\right)\eta.
\end{equation}
The static scalar equation implies the background identity
\begin{equation}
f_\phi'
+
f_\delta f_\phi
=
f_V,
\label{eq:appB_background_identity}
\end{equation}
and hence
\begin{equation}
f_\mu'
=
-f_\delta f_\mu.
\end{equation}
Since $f_\delta=\delta_0'$, this integrates to
\begin{equation}
f_\mu(r)
=
C_\mu e^{-\delta_0(r)},
\end{equation}
where $C_\mu$ is independent of both $v$ and $r$. Thus
\begin{align}
\mu(v,r)
&=
f_\phi(r)\eta(v,r)
+
C_\mu e^{-\delta_0(r)}
\nonumber\\
&=
\frac{r^2}{2}
\mathcal F_0(r)\phi_0'(r)\eta(v,r)
+
C_\mu e^{-\delta_0(r)}.
\label{eq:appB_mu_solution}
\end{align}

The constancy of $C_\mu$ is a property of the linearization about a
fixed static background. After promotion of the background parameters,
the corresponding radial integration datum becomes the slowly varying
function $C_\mu^{\rm lag}(v)$ introduced in
Section~\ref{subsec:part5_lag_equation}.

%%%%%%%%%%%%%%%%%%%%%%%%%%%%%%%%%%%%%%%%%%%%%%%%%%%%%%%%%%%%%

\subsection{Linearized scalar equation}

Define
\begin{equation}
\mathcal S(\phi,r)
\equiv
V'(\phi)
+
\frac{Q_m^2}{2r^4}B'(\phi),
\label{eq:appB_scalar_source}
\end{equation}
and
\begin{align}
\mathcal S_0(r)
&\equiv
\mathcal S\bigl(\phi_0(r),r\bigr)
=
V_0'
+
\frac{Q_m^2}{2r^4}B_0',
\\
\mathcal S_{0,\phi}(r)
&\equiv
\left.
\frac{\partial\mathcal S}{\partial\phi}
\right|_{\phi=\phi_0(r)}
=
V_0''
+
\frac{Q_m^2}{2r^4}B_0''.
\label{eq:appB_scalar_sources}
\end{align}
The notation $\mathcal S_{0,\phi}$ avoids confusing differentiation
with respect to the scalar argument with a radial derivative.

The exact scalar equation is
\begin{equation}
0
=
\partial_v
\left(
r^2\phi_{,r}
\right)
+
\partial_r
\left[
r^2\phi_{,v}
+
e^\delta r^2\mathcal F\phi_{,r}
\right]
-
e^\delta r^2\mathcal S(\phi,r).
\end{equation}

The two terms containing explicit $v$-derivatives contribute
\begin{align}
\partial_v
\left(
r^2\phi_{,r}
\right)^{(1)}
&=
r^2\eta_{,vr},
\\
\partial_r
\left(
r^2\phi_{,v}
\right)^{(1)}
&=
\partial_r
\left(
r^2\eta_{,v}
\right)
=
r^2\eta_{,vr}
+
2r\eta_{,v}.
\end{align}
Their sum is therefore
\begin{equation}
2r^2\eta_{,vr}
+
2r\eta_{,v}.
\end{equation}

Its direct linearization gives
\begin{align}
0
={}&
2r^2\eta_{,vr}
+
2r\eta_{,v}
\nonumber\\
&+
\partial_r
\left[
e^{\delta_0}r^2
\left(
\mathcal F_0\eta_{,r}
+
\mathcal F_0\phi_0'\xi
-
\frac{2\phi_0'}{\Mpl^2r}\mu
\right)
\right]
\nonumber\\
&-
e^{\delta_0}r^2
\left(
\mathcal S_{0,\phi}\eta
+
\mathcal S_0\xi
\right).
\label{eq:appB_scalar_linearized}
\end{align}
Eq.~\eqref{eq:appB_scalar_linearized} is the unreduced
linearized scalar equation used in Section~\ref{sec:operator_and_geometry}. The remainder of this
appendix reduces it explicitly using only the linearized constraints
\eqref{eq:appB_xi_constraint} and
\eqref{eq:appB_mu_solution} together with the static scalar equation.

%%%%%%%%%%%%%%%%%%%%%%%%%%%%%%%%%%%%%%%%%%%%%%%%%%%%%%%%%%%%%

\subsection{Reduction to the reduced scalar equation}

The reduction of
Eq.~\eqref{eq:appB_scalar_linearized} is purely algebraic. We first
eliminate the lapse-dependent terms and then substitute the solution
of the mass constraints.

\paragraph{Cancellation of the
\texorpdfstring{$\xi$}{xi} terms.}

The terms involving $\xi$ are
\begin{equation}
\partial_r
\left(
e^{\delta_0}r^2
\mathcal F_0\phi_0'\xi
\right)
-
e^{\delta_0}r^2\mathcal S_0\xi.
\end{equation}
By the product rule, the first term is
\begin{align}
&
\left[
\partial_r
\left(
e^{\delta_0}r^2
\mathcal F_0\phi_0'
\right)
\right]\xi
\nonumber\\
&\qquad
+
e^{\delta_0}r^2
\mathcal F_0\phi_0'\xi_{,r}.
\end{align}
The static scalar equation gives
\begin{equation}
\partial_r
\left(
e^{\delta_0}r^2
\mathcal F_0\phi_0'
\right)
=
e^{\delta_0}r^2\mathcal S_0.
\end{equation}
The terms proportional to an undifferentiated $\xi$ therefore cancel,
leaving
\begin{equation}
e^{\delta_0}r^2
\mathcal F_0\phi_0'\xi_{,r}.
\end{equation}
Using Eq.~\eqref{eq:appB_xi_constraint}, this becomes
\begin{equation}
e^{\delta_0}
\frac{
r^3\mathcal F_0\phi_0'^2
}{\Mpl^2}
\eta_{,r}.
\label{eq:appB_xi_result}
\end{equation}
Only the radial constraint on $\xi$ is required; its explicit integral
representation is never used.

%%%%%%%%%%%%%%%%%%%%%%%%%%%%%%%%%%%%%%%%%%%%%%%%%%%%%%%%%%%%%

\paragraph{Substitution of \texorpdfstring{$\mu$}{mu}.}

Using Eq.~\eqref{eq:appB_mu_solution}, the mass-dependent term in the
scalar flux is
\begin{align}
-\frac{
2e^{\delta_0}r\phi_0'
}{\Mpl^2}\mu
={}&
-e^{\delta_0}
\frac{
r^3\mathcal F_0\phi_0'^2
}{\Mpl^2}\eta
\nonumber\\
&-
\frac{2C_\mu}{\Mpl^2}
r\phi_0'.
\end{align}
Its radial derivative is
\begin{align}
&
-\partial_r
\left[
e^{\delta_0}
\frac{
r^3\mathcal F_0\phi_0'^2
}{\Mpl^2}
\eta
\right]
\nonumber\\
&\qquad
-
\frac{2C_\mu}{\Mpl^2}
\bigl(r\phi_0'\bigr)'.
\label{eq:appB_mu_result}
\end{align}
Expanding the first term produces
\begin{equation}
-e^{\delta_0}
\frac{
r^3\mathcal F_0\phi_0'^2
}{\Mpl^2}
\eta_{,r},
\end{equation}
which cancels Eq.~\eqref{eq:appB_xi_result} exactly. The remaining
part contributes an undifferentiated term proportional to $\eta$,
together with the source proportional to $C_\mu$.

%%%%%%%%%%%%%%%%%%%%%%%%%%%%%%%%%%%%%%%%%%%%%%%%%%%%%%%%%%%%%

\paragraph{Assembly.}

Collecting the remaining terms gives
\begin{align}
0
={}&
2r^2\eta_{,vr}
+
2r\eta_{,v}
+
\partial_r
\left(
e^{\delta_0}r^2
\mathcal F_0\eta_{,r}
\right)
\nonumber\\
&-
\left[
e^{\delta_0}r^2
\mathcal S_{0,\phi}
+
\left(
e^{\delta_0}
\frac{
r^3\mathcal F_0\phi_0'^2
}{\Mpl^2}
\right)'
\right]\eta
\nonumber\\
&-
\frac{2C_\mu}{\Mpl^2}
\bigl(r\phi_0'\bigr)'.
\label{eq:appB_scalar_reduced}
\end{align}
Using
\begin{equation}
p(r)
=
e^{\delta_0}r^2\mathcal F_0
\end{equation}
and
\begin{equation}
\mathcal V_{\rm eff}(r)
=
e^{\delta_0}r^2\mathcal S_{0,\phi}
+
\left(
e^{\delta_0}
\frac{
r^3\mathcal F_0\phi_0'^2
}{\Mpl^2}
\right)',
\end{equation}
in agreement with
Eq.~\eqref{eq:effective_radial_potential},
Eq.~\eqref{eq:appB_scalar_reduced} becomes
\begin{equation}
2r^2\eta_{,vr}
+
2r\eta_{,v}
+
L\eta
=
\frac{2C_\mu}{\Mpl^2}
\bigl(r\phi_0'\bigr)'.
\end{equation}
This is precisely
Eq.~\eqref{eq:reduced_scalar_compact}.

The reduction is therefore algebraically closed. The coupled
linearized system has been reduced to a single local equation for the
scalar perturbation $\eta$, together with the residual
mass-integration constant $C_\mu$. The derivation uses only the
linearized constraints and the static scalar equation. It requires
neither an explicit radial integration of the lapse constraint nor
any slice or exterior matching condition.

\section*{Acknowledgments}
AC acknowledges the support of the Initiative Physique des Infinis (IPI), a research training programme of Idex SUPER at Sorbonne Universit\'e.

%\bibliographystyle{unsrtnat}
%\bibliography{references}

\bibliographystyle{JHEP}
\bibliography{GHS_Time_Paper}% Produces the bibliography via BibTeX.

\end{document}